\def\arXiv#1{\href{http://arxiv.org/abs/#1}{arXiv:#1}}
\newcolumntype{P}[1]{>{\centering\arraybackslash}m{#1}}
\def\blue#1{\textcolor{blue}{#1}}
\def\red#1{\textcolor{red}{#1}}
\def\?[#1]{\textbf{[#1]}\marginpar{\Large{\textbf{??}}}}
\def\smallsection#1{\smallskip\noindent\textbf{#1}.}
\let\epsilon=\varepsilon 
\newcommand{\RR}{{\mathbb R}}
\newcommand{\NN}{{\mathbb N}}
\newcommand{\CC}{{\mathbb C}}
\newcommand{\ZZ}{{\mathbb Z}}
\newtheorem{theo}{Theorem}
\newtheorem{prop}{Proposition}[section]
\newtheorem{lemm}[prop]{Lemma}
\numberwithin{equation}{section}
\DeclareMathOperator{\Spec}{Spec}
\DeclareMathOperator{\Hom}{Hom}
\let\Im=\Imag
\let\Re=\Real
\DeclareMathOperator{\supp}{supp}
\DeclareMathOperator{\tr}{tr}
\newcommand\reallywidehat[1]{\arraycolsep=0pt\relax%
\begin{array}{c}
\stretchto{
  \scaleto{
    \scalerel*[\widthof{\ensuremath{#1}}]{\kern-.5pt\wedge\kern-.5pt}
    {\rule[-\textheight/2]{1ex}{\textheight}} 
  }{\textheight} %
}{0.5ex}\\           
#1\\                 
\rule{-1ex}{0ex}
\end{array}
}
\begin{document}

\title{
Degenerate flat bands in twisted bilayer graphene} 

\author{Simon Becker}
\email{simon.becker@math.ethz.ch}
\address{ETH Zurich, 
Institute for Mathematical Research, 8092 Zurich, CH.}

\author{Tristan Humbert}
\email{tristan.humbert@ens.psl.eu}
\address{Department of Mathematics, University of California,
Berkeley, CA 94720, USA.}

\author{Maciej Zworski}
\email{zworski@math.berkeley.edu}
\address{Department of Mathematics, University of California,
Berkeley, CA 94720, USA.}

\begin{abstract}
We prove that in the chiral limit of the Bistritzer--MacDonald Hamiltonian, there exist magic angles at which the Hamiltonian exhibits flat bands of multiplicity four instead of two. We analyse the structure of Bloch functions associated with the bands of arbitrary multiplicity, compute the corresponding Chern number to be $ -1 $, and show that there exist infinitely many degenerate magic angles for a generic choice of tunnelling potential, including the Bistritzer--MacDonald potential. 
Moreover, we demonstrate for generic tunnelling potentials  flat bands have only twofold or fourfold multiplicities. 
\end{abstract}

\maketitle 

\section{Introduction and statement of results}
\label{s:intr}

Twisted bilayer graphene is a material consisting of two stacked graphene layers which are twisted with respect to each other by an angle $\theta.$ It has been predicted theoretically \cite{BM11}  that at a certain angle, the bands at zero energy become flat and strongly correlated electron effects dominate. This has then been experimentally confirmed that at this {\em magic angle}, the material exhibits phenomena such as superconductivity and a quantum Hall effect without external magnetic fields \cite{Cao,Ser19,Yan}.
Theoretically \cite{magic} more magic angles have been expected. 
Contrary to common beliefs, we demonstrate that flat bands of higher multiplicity are common in this model of bilayer graphene. Higher multiplicity bands have recently also been theoretically observed in models of twisted trilayer graphene \cite{PT23,hel_tri}. We verify numerically that the presence of higher degenerate (almost flat) bands close to zero energy is also valid for the full (not just 
 chiral) model, see Figure~\ref{fig:beyond_chiral}.
 
The model we consider is based on the Bistritzer-MacDonald Hamiltonian \cite{BM11,CGG,Wa22} 
and its chiral limit of Tarnopolsky--Kruchkov--Vishwanath \cite{magic}: 
\begin{equation}
\label{eq:Hamiltonian}
  H(\alpha)=\begin{pmatrix} 0 &   D (\alpha)^*\\   D(\alpha) & 0 \end{pmatrix} \text{ with }   D(\alpha) = \begin{pmatrix} 2D_{\bar z }& \alpha U_+ (z ) \\ \alpha U_-(z) & 2D_{\bar z } \end{pmatrix} 
\end{equation}
where we use complex coordinates $z \in \CC$  and the parameter $\alpha$ is proportional to the inverse relative twisting angle. Here, we write $z=x+iy$ for real $x,y$ then $\partial_{z}:=1/2(\partial_x -i\partial_y),$ where $\partial_x$ denotes differentiation with respect to $x$, and $D_{z} =-i\partial_z.$ Similarly, $\partial_{\bar z}:=\frac12 (\partial_x +i\partial_y)$ and $D_{\bar z} =-i\partial_{\bar z}.$ Clearly, $H(\alpha):H^1(\CC;\CC^4) \subset L^2(\CC;\CC^4) \to  L^2(\CC;\CC^4)$ is self-adjoint and $D(\alpha):H^1(\CC;\CC^2) \subset L^2(\CC;\CC^2) \to L^2(\CC;\CC^2).$
With $\omega = e^{2\pi i /3}$ and  $\mathbf a = {4\pi i} (a_1 \omega + a_2 \bar \omega)$, $ a_j 
\in \mathbb Z $, we assume that 
\begin{equation}
\label{eq:symmetries}
U_\pm (z  + \mathbf a ) = \omega^{\mp (a_1+a_2 ) } U (z ), \quad U_\pm  (\omega z ) = \omega U_\pm (z ).\end{equation}
In the physics literature, the following choice is made \cite{BM11,magic}
\begin{equation}
\label{eq:defU}
U_+ ( z )= U ( z ) , \ \ \ U_- ( z ) = U ( - z), \ \ \ \overline{U( \bar z ) } = U ( z ) ,
\end{equation}
see \eqref{eq:potential} for concrete examples.

Floquet theory for the Hamiltonian \eqref{eq:Hamiltonian} is based on 
moir\'e translations: 
\begin{equation}
\label{eq:defLag}  
 {\mathscr L}_{\mathbf a  } u(z)  :=  
\begin{pmatrix} \omega^{ - ( a_1 + a_2 ) }  & 0  \\
0 & \omega^{a_1 + a_2} 
\end{pmatrix}  u ( z + \mathbf a  )   ,   \ \ \ \mathbf a = {4\pi i} (a_1 \omega + a_2 \bar \omega) .
  \end{equation}
The action is extended diagonally to $ \mathbb C^4 = \mathbb C^2 \times \mathbb C^2 $-valued functions
and we 
$ \mathscr L_{\mathbf a}  H ( \alpha ) = H ( \alpha ) \mathscr L_{\mathbf a }  $. 

The Floquet spectrum is given by 
\begin{equation}
\begin{gathered} 
\label{eq:FL_ev}  H ( \alpha ) u = E u , \ \ 
u \in H^1_{k} 
\ \ \ H^s_k := L^2_k \cap  H^s_{\rm{loc} }, \\
L^2_{k}  := 
\{ u = L^2_{\rm{loc}} ( \mathbb C ; \mathbb C^4 ) : \mathscr L_\mathbf a u = 
e^{i\langle k,\mathbf a\rangle} u \}  ,\quad \langle z, w \rangle:=\Re(z \bar w), \text{ and } k\in \CC.
\end{gathered}
 \end{equation}
The spectrum is discrete and symmetric with respect to the origin and we index it as follows
(with $ \mathbb Z^* :=  \mathbb Z \setminus \{ 0 \} $)
\begin{equation}
\label{eq:eigs} 
\begin{gathered} \{ E_{ j } ( \alpha, k ) \}_{ j \in  \mathbb Z^* } ,  \ \ \  E_{j } ( \alpha, k ) 
= - E_{-j} ( \alpha , k ) , \\ 0 \leq E_1 ( \alpha, k ) \leq E_2 ( \alpha, k ) \leq \cdots , \ \ \  E_1 ( \alpha, K) =  E_1 ( \alpha, - K ) = 0 , 
\end{gathered} \end{equation}
see \cite[\S 2]{bhz2} for more details. The points $ K, -K  $, $ K = i $, 
are called the {\em Dirac points} and are typically denoted by $ K $ and $ K' $ in the physics
literature.

Using the conjugation, 
\[ H_k ( \alpha )  := e^{ i \langle z, k \rangle } H( \alpha ) e^{ - i \langle z, k \rangle },\] we can equivalently study operators $H_k(\alpha)$ on $L^2_0.$

\medskip

\noindent
{\bf Definition} (Magic angles and their multiplicities). {\em A value of $ \alpha $ in \eqref{eq:Hamiltonian} is called magical if 
$ H ( \alpha ) $ has a flat band at zero
\[  E_1 ( \alpha, k ) \equiv 0 , \ \ k \in \mathbb C . \]
The set of magic $ \alpha$'s is denoted by $ \mathcal A $ or $ \mathcal A (U ) $
if we specify the dependence on the potential. 
The multiplicity of a magic $ \alpha $ is defined as 
\begin{equation} 
\label{eq:defm} m ( \alpha ) = m_U ( \alpha ) = \min \{ j > 0 : \max_k  E_{j+1} ( \alpha, k ) > 0  \}. 
\end{equation}
Magic angles are (up to physical constants) reciprocals of $ \alpha \in \mathcal A $.}

To formulate our first result on the multiplicity of flat bands, we need the following definition of multiplicity of zeros for $ \mathbf u \in \ker ( D ( \alpha) + K )  $:
\begin{equation}
\label{eq:defM}    M_{\mathbf u }  ( z_0 ) := \max \{ m :  [ \partial_z^{m-1} \mathbf u ] (z_0 ) = 0 \} , 
\end{equation}
with the convention that $ M_{\mathbf u } ( z_0 ) = 0 $ if $ \mathbf u ( z_0 ) \neq 0 $. As in 
\cite[Lemma 3.2]{bhz2} we easily see that this is equivalent to $ \mathbf u ( z ) = ( z - z_0 )^m \mathbf u_0 ( z ) $, where $ \mathbf u_0 $ is smooth near $ z_0 $. 

\begin{theo}[Zeros and multiplicities]
\label{p:zeros}
For the Hamiltonian \eqref{eq:Hamiltonian} with potentials satisfying \eqref{eq:symmetries}
and \eqref{eq:defU}, 
let $ \mathbf u ( \alpha ) := \mathbf u_K (\alpha ) \in \ker_{ H_0^1 } ( D ( \alpha ) + K ) $ be a family of protected states (see \cite[Theorem 1]{survey} and references given there). Then 
for $ \alpha \in \mathbb C $
\begin{equation}
\label{eq:Mua}  
 \sum_{ z \in \mathbb C/\Lambda } M_{\mathbf u ( \alpha ) } ( z ) = m ( \alpha ) .  \end{equation}
For  $ \alpha \in \mathcal A  $ and $ k \in \mathbb C/\Lambda^*  $, 
\[   \dim \ker_{H^1_0} H_k(\alpha)= 2 \dim \ker_{H^1_0}(D(\alpha)+k) =2 m(\alpha). \]
In particular, the zero-energy flat bands of the Hamiltonian are always spectrally gapped from the rest of the spectrum.
Moreover, 
\eqref{eq:Mua} holds for 
any $\mathbf{u}(\alpha) \in \ker_{H^1_0}(D(\alpha)+k) \setminus \{0\}$, $k \in \mathbb{C} / \Lambda^*$.
\end{theo}
To formulate the next result we define  
\[  L^2_{0, p } := \{ u \in L^2_0 : u ( \omega z ) = \bar \omega^p u ( z ) \} ,  \ \ \  p \in \mathbb Z_3 . \]
Using these spaces we have the following rigidity result for simple and double-degenerate magic angles: 

  \begin{theo}[Rigidity]
\label{theo:rigidity}
Under the assumptions of Theorem \ref{p:zeros} and with 
the definition of multiplicity \eqref{eq:defm}, 
\begin{equation}
\label{eq:imply} \begin{split}
& m ( \alpha ) = 1 \ \Longrightarrow \ \dim \ker_{ L^2_{0,2} }  D ( \alpha ) = 1 ,\\
& m (\alpha ) = 2 \ \Longrightarrow \ \dim \ker_{L^2_{0,0} } D ( \alpha ) = 
\dim \ker_{ L^2_{0,1} }  D ( \alpha ) = 1. 
\end{split} \end{equation}
Moreover, for all $  \alpha \in \mathcal A $, 
\begin{equation}
\label{eq:Nikita}
 m( \alpha ) \not \equiv 0 \! \! \! \mod 3 . 
\end{equation}
\end{theo}
The first implication in \eqref{eq:imply} is included in  \cite[Theorem 2]{bhz2}. 
The multiplicity statement \eqref{eq:Nikita} which is a consequence of the proof of \eqref{eq:imply} 
was added because of a recent paper of Iugov--Nekrasov~\cite{IN25}
where it was obtained using  different methods.

To prove the existence of magic $ \alpha$'s of higher multiplicities, we perform trace computations
first used to show that $ \mathcal A $ is non-empty \cite{beta} and then that $ | \mathcal A | = \infty $
\cite{bhz1}. The traces here refer to $ \tr T_k^{2 p } $ where $ T_k $ is a Birman--Schwinger
operator with spectrum given by $ \{ 1/\alpha: \alpha \in \mathcal A \} $ - see \S \ref{s:propH}, 
\cite[Theorem 3]{beta}, \cite[Theorem 1]{bhz1}. 

Theorem \ref{theo:rigidity} shows that to show the existence of degenerate $ \alpha$'s 
we need to show that $ \tr ( (T_0 |_{ L^2_{0, j} })^{2p } ) \neq 0 $, $ j = 0 , 1 $ (as explained in 
\S \ref{s:trace} we are allowed to take $ k = 0 $).

Because of the symmetry of the spectrum \eqref{eq:eigs} simple $ \alpha$'s correspond
 flat bands of multiplicity $ 2 $ and double $ \alpha $'s, to flat bands of multiplicity 4. In Figures \ref{fig:bands1} and \ref{fig:bands2}, we see that the band structure for complex and real double magic angles behaves similarly close to the magic angle. The two bands are closest at the $\Gamma$ point and are stacked on top of each other.

Examples of $ U $'s satisfying \eqref{eq:symmetries} and \eqref{eq:defU} are given by  
\begin{equation}
\label{eq:potential} U_{1}(z ) = \sum_{\ell=0}^2 \omega^\ell e^{\frac12 ( {z \bar \omega^\ell - \bar z  \omega^\ell}) } \ \text{ and } \ U_2(z ) =\tfrac{1}{\sqrt{2}}\Big( U_1(z ) - \sum_{\ell=0}^2 \omega^\ell e^{-(z \bar \omega^\ell - \bar z  \omega^\ell)}\Big).
\end{equation}
Numerical experiments suggest that these two potentials exhibit flat bands of different multiplicities:
\begin{equation}
\label{eq:U2A}
m_{ U_j } ( \alpha ) = j  ,  \ \ \alpha \in \mathcal A_{U_j}  \cap \mathbb R  , \ \ j = 1, 2 , 
\end{equation}
see Figure~\ref{fig:degeneracy}. We show (see Theorem \ref{mult} below) that the potential $ U_1 
$ (the Bistritzer--MacDonald potential) has infinitely many (complex) degenerate magic $ \alpha$'s.
While in the case of $U_1$ all magic angles on the real axis appear to be simple, the two-fold degenerate magic angles, with non-zero imaginary parts, become real when a suitable magnetic field is added \cite{Le22}.

 \begin{figure}
\includegraphics[width=7.5cm]{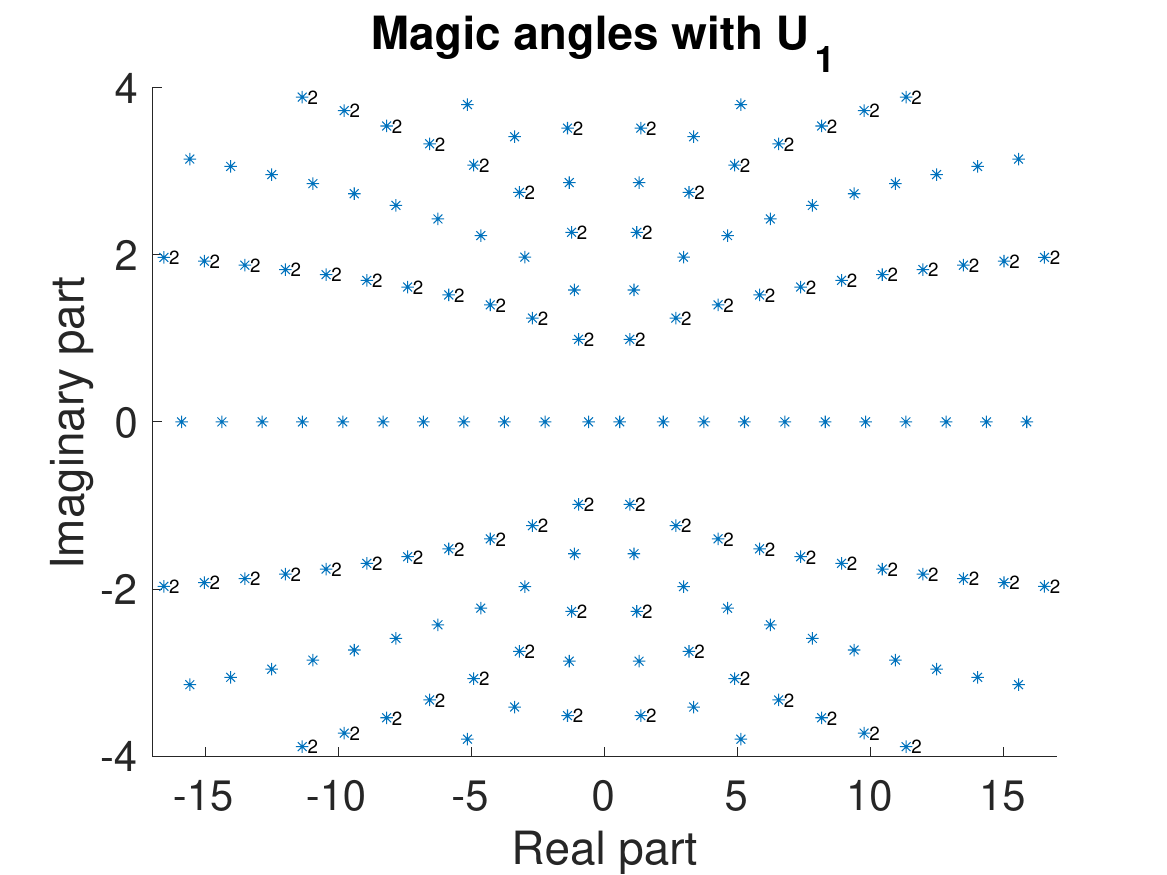}
\includegraphics[width=7.5cm]{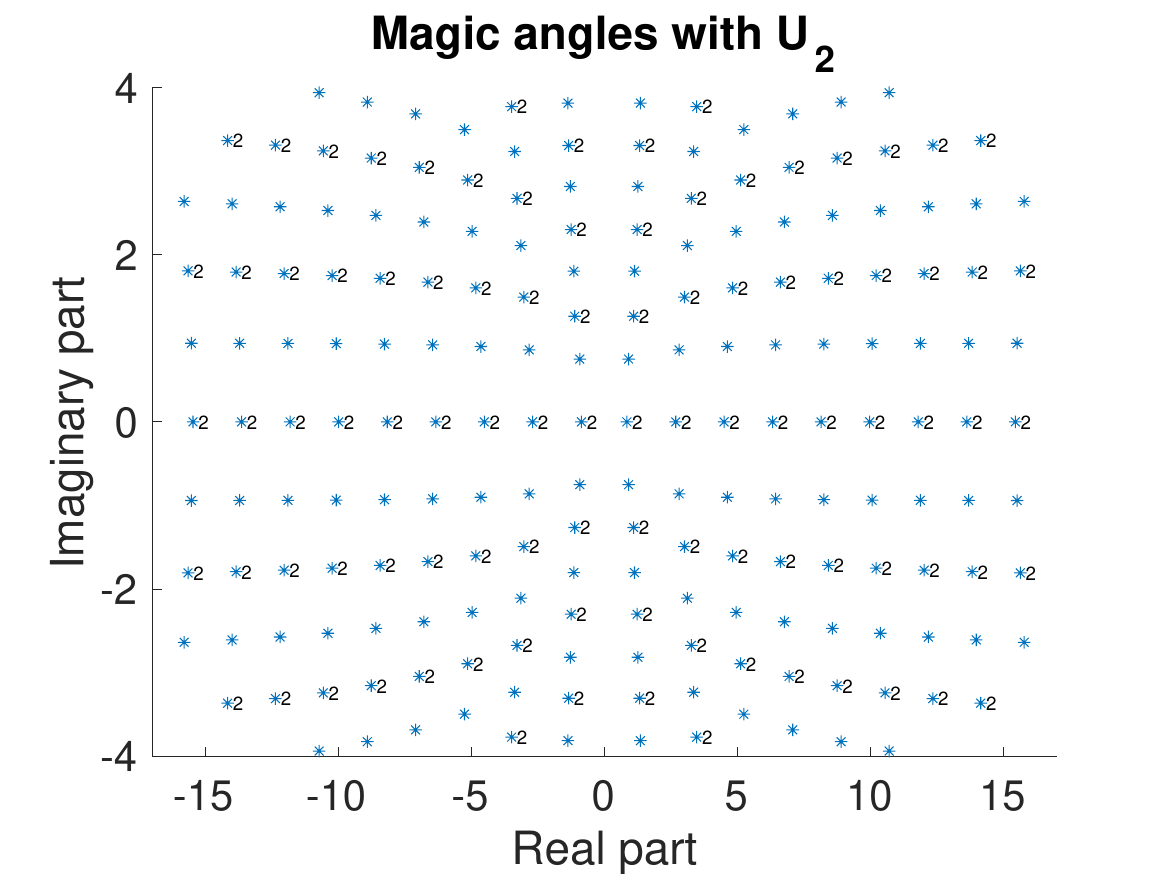}
\caption{Magic angles $\alpha$ derived from potentials $U_{\pm}=U_1(\pm \bullet)$ (left) and $U_{\pm}= U_2 (\pm \bullet) $ (right) in 
\eqref{eq:Hamiltonian}. The multiplicity of the flat bands $u$ of $(D(\alpha)+k)u_{k}=0$ is illustrated by the numbers (no number $\rightarrow$ simple magic angle, 2 $\rightarrow$ two-fold degenerate magic angle) in the figure. The movie
\url{https://math.berkeley.edu/~zworski/Interpolation.mp4} shows 
the magic angles for interpolation between these potentials:
$ U ( z ) = ( \cos \theta -\sin \theta )  U_1 ( z ) + \sin \theta U_2 ( z ) $;
multiplicity one magic angles are coded by $ \red{*} $ and multiplicity two by $ \blue{*} $. }
\label{fig:degeneracy}
\end{figure}

\begin{table}[!htb]
\begin{minipage}{.45\linewidth}

\begin{center}
\begin{tabular}{rclcc}
\multicolumn{1}{c}{$k$} & &
\multicolumn{1}{c}{$\alpha_k$} &
 & $\alpha_{k}-\alpha_{k-1}$ \\[2pt] \hline
1  &\ &   \phantom{0}0.585663 &\ &               \\
2  &&   \phantom{0}2.221182  && 1.6355        \\
3  &&   \phantom{0}3.751406  && 1.5302        \\
4  &&   \phantom{0}5.276498   && 1.5251        \\
5  &&   \phantom{0}6.794785    && 1.5183        \\
6  &&   \phantom{0}8.312999     && 1.5182        \\
7  &&   \phantom{0}9.829067     && 1.5161        \\
8  &&    11.345340       && 1.5163        \\
9  &&    12.860608        && 1.5153        \\
10 &&    14.376072         && 1.5155        \\
11 &&    15.890964          && 1.5149        \\
\end{tabular}
\end{center}

\end{minipage}
\begin{minipage}{.45\linewidth}

\begin{center}
\begin{tabular}{rclcc}
\multicolumn{1}{c}{$k$} & &
\multicolumn{1}{c}{$\alpha_k$} &
 & $\alpha_{k}-\alpha_{k-1}$ \\[2pt] \hline
1  &\ &   \phantom{0}0.853799 &\ &               \\
2  &&     \phantom{0}2.691433  && 1.8376       \\
3  &&     \phantom{0}4.507960  && 1.8165        \\
4  &&     \phantom{0}6.332311  && 1.8244        \\
5  &&     \phantom{0}8.157130    && 1.8248        \\
6  &&     \phantom{0}9.983510     && 1.8264       \\
7  &&     11.809376      && 1.8259        \\
8  &&     13.635446      && 1.8261        \\
9  &&     15.460894        && 1.8255        \\
10 &&    17.286231        && 1.8253        \\
11 &&    19.111041 && 1.8248
\end{tabular}
\end{center}

\end{minipage}
\medskip
\caption{First $11$ real magic angles, rounded to 6 digits, for $U_{\pm}=U_1(\pm \bullet)$ (left) and $U_{\pm}= U_2 (\pm \bullet)$ (right).
The $ \alpha$'s for $ U_1 $ are simple and the ones on the right are double.}
\label{table:double}
\end{table}

\begin{figure}
\includegraphics[width=7cm]{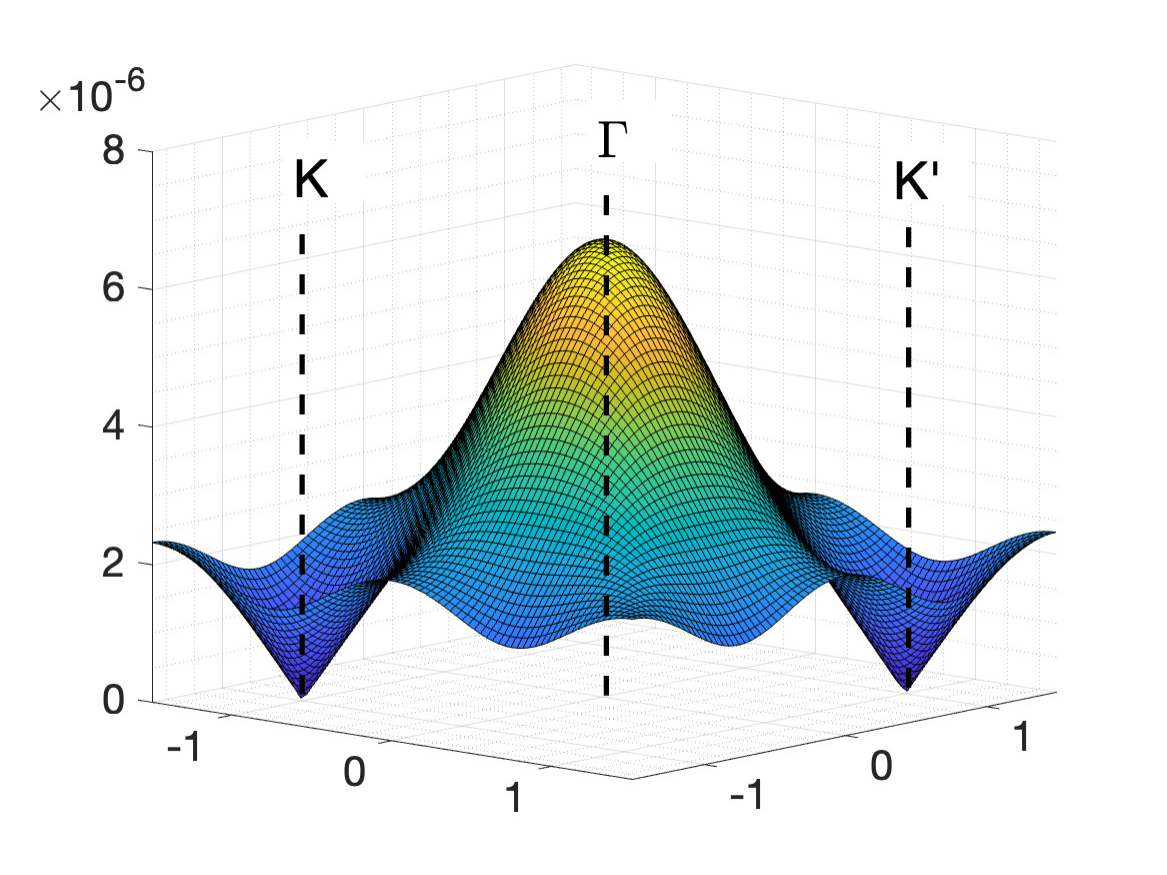}
\includegraphics[width=7cm]{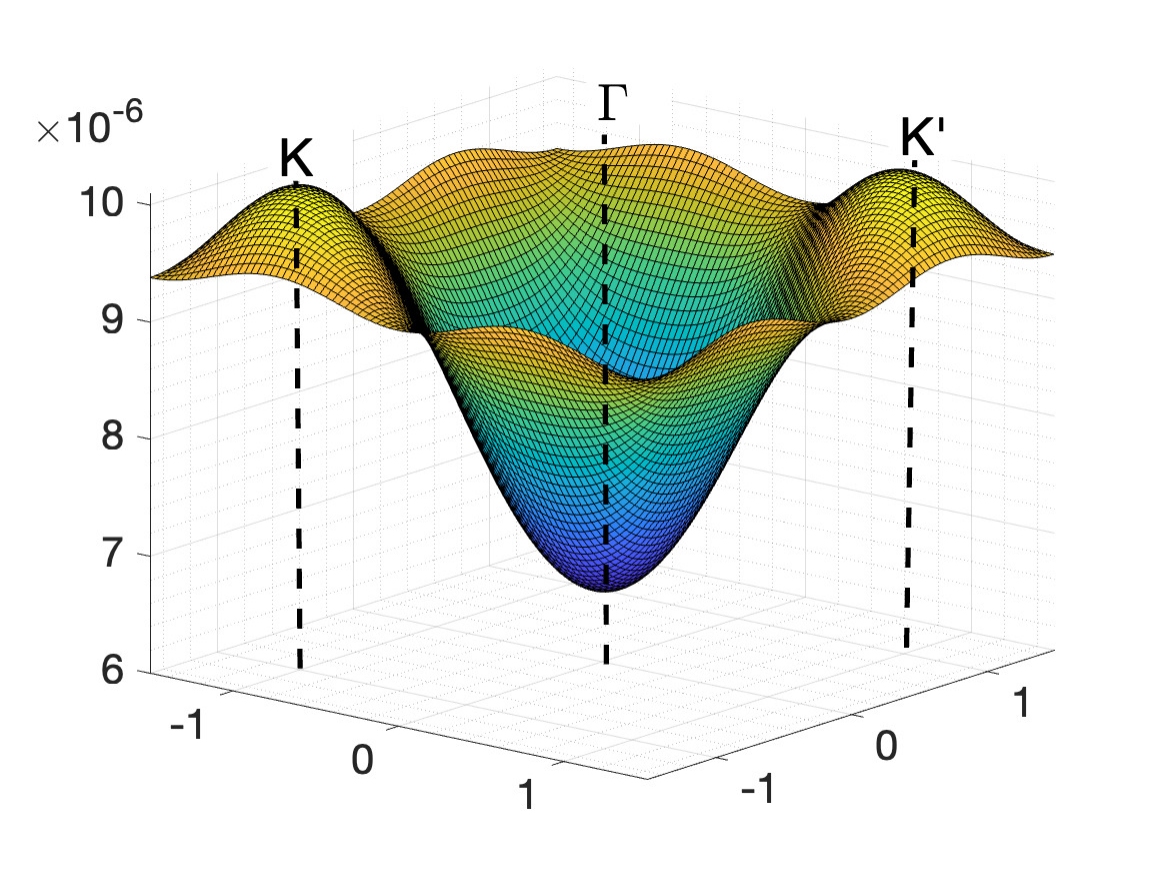}
\caption{\label{fig:bands1} Let $\alpha \approx 0.853799$ as in Table \ref{table:double}, lowest two Bloch band with positive energy close to the first magic angle with $U_{\pm}= U_2 (\pm \bullet)$. We plot $ E_1(k)$ (left) and $ E_2(k)$ (right). } 
\end{figure}

\begin{figure}
\includegraphics[width=7cm]{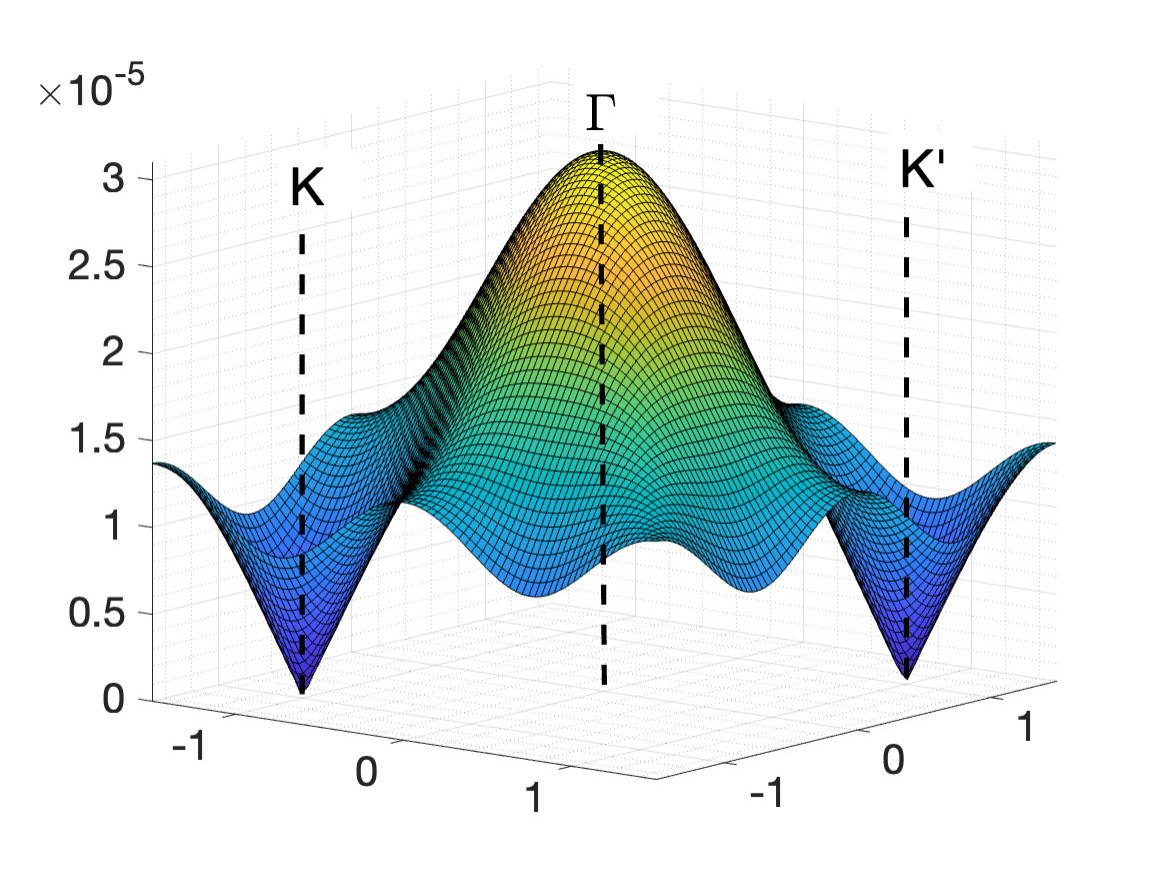}
\includegraphics[width=7cm]{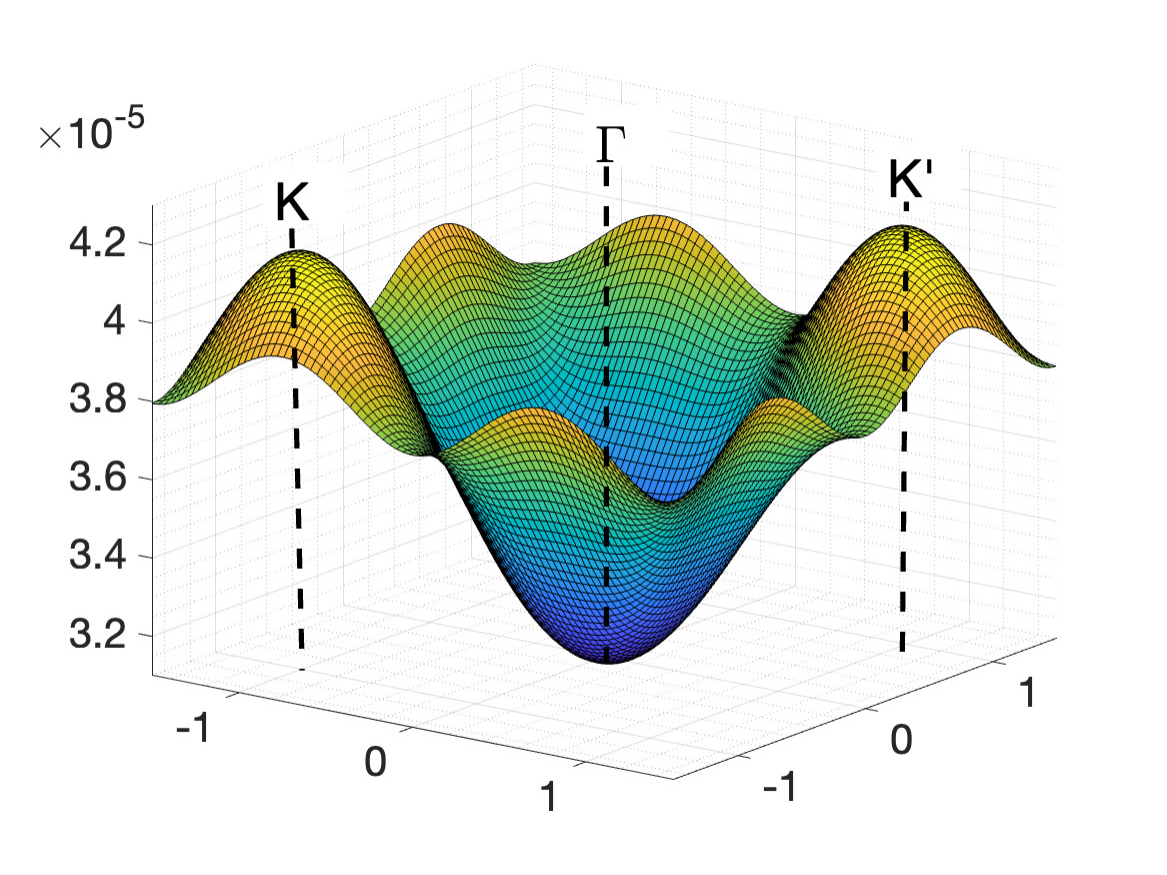}
\caption{\label{fig:bands2} Let $\alpha \approx 0.9628 + 0.9873i$ the first complex magic angle for $U_{\pm}= U_1 (\pm \bullet)$, lowest two Bloch band with positive energy close to the first degenerate magic angle. We plot $ E_1(k)$  (left) and $E_2(k)$ (right). } 
\end{figure}

 \begin{theo}[Degenerate magic angles]
\label{mult}
For the Bistritzer--MacDonald potential, $U_+=U_1$ and $U_-=U_1(-\bullet)$, defined in \eqref{eq:potential}, there exist infinitely many $\alpha\in \mathcal A$ which are not simple.
\end{theo}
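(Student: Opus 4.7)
The plan is to use the Birman--Schwinger framework of \cite{beta, bhz1}: there is a compact operator $T_0$ on an appropriate subspace of $L^2_0$ whose nonzero spectrum equals $\{1/\alpha : \alpha \in \mathcal A\}$. Since $T_0$ commutes with the rotation $z \mapsto \omega z$ (with its spinor twist on $\mathbb C^4$), it splits as
\[
  T_0 = T_0^{(0)} \oplus T_0^{(1)} \oplus T_0^{(2)}, \qquad T_0^{(p)} := T_0\bigr|_{L^2_{0,p}}.
\]
By Theorem~\ref{theo:rigidity}, any simple magic angle has its kernel contained entirely in $L^2_{0,2}$; hence any nonzero eigenvalue of $T_0^{(0)}$ (resp.\ $T_0^{(1)}$) automatically yields a non-simple magic angle (its reciprocal). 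It therefore suffices to exhibit infinitely many nonzero eigenvalues of $T_0^{(j)}$ for each $j \in \{0, 1\}$.

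To encode this spectrally I would work with the entire Fredholm determinants $F_j(t) := \det\bigl(I - t\, T_0^{(j)}\bigr)$, whose zero sets are exactly the non-simple magic angles with kernel touching $L^2_{0,j}$. By Hadamard factorization, $F_j$ has infinitely many zeros if and only if it is not a polynomial. Combining
\[
  \log F_j(t) = -\sum_{p \ge 1} \tfrac{t^p}{p}\,\tr\bigl(T_0^{(j)}\bigr)^{p}
\]
with the vanishing of odd traces (a consequence of the bipartite off-diagonal structure of $D(\alpha)$), the problem is reduced to proving nonvanishing of $\tr\bigl(T_0^{(j)}\bigr)^{2p}$ for both $j \in \{0, 1\}$ and for infinitely many $p \in \mathbb N$; this is precisely the trace criterion announced just before the statement of the theorem.

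The main obstacle---and the place where the precise form of the Bistritzer--MacDonald potential $U_1$ enters---is the explicit evaluation of these restricted traces. Starting from the Cauchy-type Schwartz kernel of $T_0$ used in \cite[\S 3]{bhz1}, one rewrites $\tr\bigl(T_0^{(j)}\bigr)^{2p}$ as a character-weighted integral over $(\mathbb C/\Lambda)^{2p}$ of a product of Cauchy kernels and evaluations of $U_1$ and $U_1(-\cdot)$; projecting onto $L^2_{0,j}$ inserts a character $\omega^{-pj}$, and after residue calculus the trace collapses to a finite signed sum over closed $2p$-step walks on the reciprocal moir\'e lattice. The hard part is to rule out total cancellation in these character sums. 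I would either try to isolate a dominant summand---for instance the contribution of a shortest admissible lattice walk at each large $p$---whose magnitude strictly outstrips the rest, or sum the generating function $\sum_p t^p\, \tr\bigl(T_0^{(j)}\bigr)^p$ in closed form via theta-function identities and verify that it is non-rational. The fact that $U_1$ consists of only three first-harmonic exponentials makes both variants explicit; once nonvanishing is secured for $j = 0, 1$, the reduction above produces the claimed infinite family of non-simple magic $\alpha$.
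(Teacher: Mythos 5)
Your structural reduction is the same as the paper's: use Theorem~\ref{theo:rigidity} to conclude that any nonzero eigenvalue of $T_0$ restricted to $L^2_{0,0}$ or $L^2_{0,1}$ yields a non-simple magic angle, then try to produce infinitely many such eigenvalues. However there is a genuine logical gap in your next step, and your proposed mechanism for trace nonvanishing is not the one the paper uses.

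The gap is in the sentence claiming "the problem is reduced to proving nonvanishing of $\tr(T_0^{(j)})^{2p}$ \dots for infinitely many $p$." Infinitely many nonvanishing power traces do \emph{not} imply that the Fredholm determinant fails to be a polynomial: a single nonzero eigenvalue $\lambda$ already gives $\tr(T^{2p}) = \lambda^{2p} \neq 0$ for every $p$, while $F_j(t) = 1 - t\lambda$ is a polynomial, and $\log F_j$ has infinitely many nonzero Taylor coefficients. What is actually true --- and what the paper exploits --- is the \emph{converse} direction: if $T_0^{(j)}$ had only $N$ nonzero eigenvalues, Newton's identities \eqref{eq:Newton} would force the elementary symmetric functions $e_n$ to vanish for $n > N$, and this imposes non-trivial algebraic relations among the traces. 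The paper (Theorem~\ref{sec:races_to_infinity}) turns this into a contradiction through arithmetic: by \cite[Theorem 5]{bhz1} the full-space traces $\tr(A_0^\ell|_{L^2_0})$ lie in $\pi\,\mathbb{Q}(\omega/\sqrt 3)$, the rotational correction $\mathcal R_{\ell,j}$ in the decomposition \eqref{eq:Traces} is a finite sum lying in $\mathbb{Q}(\omega)$ (hence algebraic), and substituting into the Newton identity with $n = m_0 K$ produces a non-trivial polynomial in $\pi$ with algebraic coefficients, contradicting the transcendence of $\pi$. Your proposed alternative --- isolating a dominant lattice-walk summand or summing the generating function via theta identities and verifying it is non-rational --- would need to be carried out precisely to substitute for this, and as stated it does not close the logical gap above: you would still need an argument that rules out finitely many eigenvalues reproducing the observed trace data, not merely that the traces are nonzero.
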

Theorem \ref{sec:races_to_infinity} in \S \ref{s:inf} states this for a larger class of potentials satisfying the assumptions of \cite[Theorem 5]{bhz1} with an additional non-degeneracy condition, see Theorem \ref{General}. 

It is natural to ask if multiplicities always occur and if multiplicities of higher degree are also ubiquitous. If we do not demand that \eqref{eq:defU} holds, 
then, generically in the sense of Baire, magic angles are either simple or two-fold degenerate:
\begin{theo}[Generic simplicity]
\label{t:sim}
For Hamiltonians \eqref{eq:Hamiltonian} satisfying \eqref{eq:symmetries},
there exists a generic subset (an intersection of open dense sets),
$ \mathscr V_0 \subset \mathscr V $, where the space of 
matrix-valued potentials, $ \mathscr V $,  is defined in \eqref{eq:Vscr},
such that if $V \in \mathscr V_0$ then (see \eqref{eq:defm})
\[      m ( \alpha ) \leq 2 . \]
More precisely, 
when $ \alpha $ is simple then 
 \begin{equation}
 \label{eq:gens1} \dim\ker_{L^2_{0,2}}(D(\alpha)) =1 \text{ and }\dim\ker_{L^2_{0,0}}(D(\alpha))=\dim\ker_{L^2_{0,1}}(D(\alpha))=0 \end{equation}
 and when it is double, 
 \begin{equation}
 \label{eq:gens2} \dim\ker_{L^2_{0,2}}(D(\alpha)) =0 \text{ and } \dim\ker_{L^2_{0,0}}(D(\alpha))=\dim\ker_{L^2_{0,1}}(D(\alpha))=1. 
 \end{equation}
\end{theo}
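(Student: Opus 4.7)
The plan is to combine the Birman--Schwinger reduction of \cite{beta,bhz1}, the $\mathbb Z_3$-rotation decomposition induced by \eqref{eq:symmetries}, a structural pairing of two of the rotation blocks, and analytic-perturbation genericity. Writing $D(\alpha)=2D_{\bar z}+\alpha V$ with $V=\begin{pmatrix} 0 & U_+\\ U_- & 0\end{pmatrix}$, I would first check that $2D_{\bar z}$ is invertible on $L^2_0$: the two sublattice characters $\omega^{\pm(a_1+a_2)}$ appearing in \eqref{eq:defLag} are nontrivial, so both twisted-holomorphic sections and their duals vanish. Magic angles at $k=0$ then correspond, with matching multiplicities, to non-zero eigenvalues $-1/\alpha$ of the compact operator $T(V):=(2D_{\bar z})^{-1}V$, and $V\mapsto T(V)$ is a holomorphic family of compact operators on (a complexification of) $\mathscr V$. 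The scalar rotation $Ru(z)=u(\omega z)$ satisfies $RU_\pm=\omega U_\pm R$ and $R(2D_{\bar z})^{-1}=\bar\omega (2D_{\bar z})^{-1}R$, so the two factors cancel, $RT=TR$, and $T$ decomposes as $T_0\oplus T_1\oplus T_2$ with $T_p:=T|_{L^2_{0,p}}$; the eigenvalue $-1/\alpha$ of $T_p(V)$ has algebraic multiplicity $\dim\ker_{L^2_{0,p}}D(\alpha)$.

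The decisive non-generic input is a structural unitary equivalence $T_0(V)\simeq T_1(V)$, valid for every $V\in\mathscr V$. I would construct an explicit intertwiner $\mathscr S\colon L^2_{0,0}\to L^2_{0,1}$ with $\mathscr S T_0\mathscr S^{-1}=T_1$, building on the sublattice exchange $\omega^{a_1+a_2}\leftrightarrow\omega^{-(a_1+a_2)}$ encoded in \eqref{eq:defLag} together with the fact that both $U_+$ and $U_-$ carry rotation grade $2$, so that under the natural involution coming from this exchange $L^2_{0,2}$ is fixed while $L^2_{0,0}$ and $L^2_{0,1}$ are swapped. In the off-diagonal form $T_p=\begin{pmatrix} 0 & A_p\\ B_p & 0\end{pmatrix}$, $A_p=(2D_{\bar z})^{-1}U_+$, $B_p=(2D_{\bar z})^{-1}U_-$, it then suffices to intertwine the scalar compact operators $B_0A_0$ and $B_1A_1$ on matched sublattice subspaces. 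Once this pairing is in place, $\Spec(T_0)=\Spec(T_1)$ with multiplicities, so every magic angle produced by these two blocks is automatically double.

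The remainder of the argument is analytic perturbation plus Baire category. Since $V\mapsto T(V)$ is holomorphic, Kato's theory gives that eigenvalues of each $T_p(V)$ are locally branches of analytic functions of $V$. For each $(p,n)$ the locus $\{V:\text{the $n$-th eigenvalue of }T_p(V)\text{ has multiplicity }\geq 2\}$ is a closed analytic subvariety of $\mathscr V$, and so is each locus $\{V:\mu_n(T_2(V))=\mu_m(T_0(V))\}$. To see these are \emph{proper}, I would exhibit a one-parameter Fourier perturbation of $U_\pm$ that moves eigenvalues of $T_2$ independently of those of $T_0=T_1$, using the rotation-grading selection rules recorded in the first paragraph. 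The countable intersection of the complements of these proper subvarieties is a dense $G_\delta$, and is the desired $\mathscr V_0$.

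For $V\in\mathscr V_0$ the only forced spectral coincidence is the pairing $\Spec(T_0)=\Spec(T_1)$: a non-zero eigenvalue of $T_2$ is simple and disjoint from those of $T_0=T_1$, yielding a simple magic angle with $\dim\ker_{L^2_{0,2}}D(\alpha)=1$ and trivial kernels in $L^2_{0,0}$ and $L^2_{0,1}$, which is \eqref{eq:gens1}; a non-zero eigenvalue of $T_0(=T_1)$ is simple in its block and, by the pairing, produces a double magic angle with one-dimensional kernels in both $L^2_{0,0}$ and $L^2_{0,1}$ and trivial kernel in $L^2_{0,2}$, which is \eqref{eq:gens2}. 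In particular $m(\alpha)\leq 2$. The main obstacle is the construction of the structural intertwiner $\mathscr S$: the pairing of $T_0$ and $T_1$ is essential for the asymmetry in the statement, is not forced by the scalar rotation symmetry alone, and is the one step that requires genuinely new input; the analytic-perturbation and Baire arguments otherwise follow the template already used in \cite{bhz1}.
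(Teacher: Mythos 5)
Your overall plan — reduce to the Birman--Schwinger operator $T_0$, decompose along the $\mathbb Z_3$ rotation into $L^2_{0,0}\oplus L^2_{0,1}\oplus L^2_{0,2}$, establish a spectral pairing between the $p=0$ and $p=1$ blocks, and then run a Baire-category perturbation argument on $T_2$ and on the individual blocks — is the right architecture and matches the paper's strategy in outline. Two things, however, are either wrong or hand-waved at exactly the places where the proof has to do real work.

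First, the $T_0$--$T_1$ pairing. You propose a \emph{linear} unitary intertwiner $\mathscr S$ with $\mathscr S T_0\mathscr S^{-1}=T_1$, built from a "sublattice-exchange involution," and you flag this as the one step needing genuinely new input. But there is no reason to expect such an $\mathscr S$ for all $V\in\mathscr V$: it would have to intertwine \emph{both} $R=(2D_{\bar z})^{-1}$ and an arbitrary $V\in\mathscr V$ simultaneously, which pins it down far too rigidly. What actually exists, and what the paper uses (Proposition \ref{prop:equiv_spec}, via the antilinear operators $Q$ and $\mathscr A$ of \eqref{eq:A}, \eqref{eq:defAsc}), is an \emph{antilinear} conjugation relating $\ker_{L^2_{0,p}}D(1/\lambda)$ to $\ker_{L^2_{0,-p}}D(1/\lambda)^*$; composing with $2D_z$ sends this to $\ker_{L^2_{0,-p+1}}((RV)^*+\bar\lambda)$, whence $\Spec_{L^2_{0,p}}(RV)=\Spec_{L^2_{0,-p+1}}(RV)$. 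For $p=0$ this is exactly the pairing you need, it does \emph{not} require $U_-(z)=U_+(-z)$, and it is already in the paper — so this is not the hard step you took it to be, but you also cannot get it by the linear construction you propose.

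Second, the genericity. You assert that the loci $\{V: \mu_n(T_p(V))\text{ has multiplicity }\geq 2\}$ and $\{V: \mu_n(T_2(V))=\mu_m(T_0(V))\}$ are closed analytic subvarieties and that one can "exhibit a Fourier perturbation" moving eigenvalues independently. The closedness and the Kato analyticity are fine, but the \emph{properness} of these sets (equivalently, the density of their complements) is precisely the technical content of the theorem, and "exhibit a perturbation" is where the paper invests its effort. The paper's Lemma \ref{l:sim} runs a Klopp--Zworski argument: if no small $V$ could split an eigenvalue, the derivative pairing $\langle V w, R^*\widetilde w\rangle$ would vanish for all $V$, forcing $\bar w\,(R^*\widetilde w)\equiv 0$, and this is killed by the real-analyticity of eigenfunctions (Lemma \ref{l:anal}), which gives $\supp w=\supp\widetilde w=\mathbb C/3\Lambda$. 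The paper's Lemma \ref{l:split} does the analogue for separating $L^2_{0,2}$ from $L^2_{0,0}\oplus L^2_{0,1}$, and there the obstruction is not just analyticity but a zero-counting argument: if the first-variation pairings agreed for all $V$, the components of $w_2$ would be forced to vanish to too high an order at $0$ and $\pm z_S$ (using the rotation grading as in \eqref{eq:order_of_zeros}), contradicting the theta-function bound of \cite[Lemma 4.1]{bhz2}. Your sketch has no substitute for either of these inputs, and they are the parts of the argument that cannot be waved away.

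In short: the antilinear symmetry, not a linear intertwiner, is what gives $\Spec(T_0)=\Spec(T_1)$ for free on all of $\mathscr V$; and the density of the good set requires the analyticity-plus-theta-function machinery of Lemmas \ref{l:anal}, \ref{l:sim}, \ref{l:split}, which your outline does not supply.
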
 

\noindent
{\bf Remark.} It may seem at first the conclusions \eqref{eq:gens1},\eqref{eq:gens2} follow
from Theorem \ref{theo:rigidity}. However, in that theorem, we assumed also \eqref{eq:defU}, which does not necessarily hold for potentials in $ \mathscr V_0 $.

The Chern number and Berry curvature associated to the degenerate flat band have similar properties to the case of simple flat bands. In particular, we have the following result proved
in  \ref{sec:Chern}. More specifically, we prove that the Bloch vector bundle decomposes into a trivial bundle of rank $ m ( \alpha ) - 1 $ and a line bundle isomorphic to that of a simple band. 
\begin{theo}[Flat band topology]
\label{t:Chern} 
For  $\alpha \in \mathcal A$, the Chern number of the rank $ m ( \alpha )$  vector bundle $ E $ associated to $\ker_{L^2_0}(D(\alpha)+k)$
(see \eqref{eq:defE}) is 
\begin{equation}
\label{eq:cherno}
c_1 ( E ) = -1 .
\end{equation}
In addition, the trace of the curvature, $H$, is non-negative and satisfies $H(k) = H(\omega k)$, $ H ( k ) = H ( - k) $.
\end{theo}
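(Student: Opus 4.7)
The plan is to adapt to the rank-two setting the strategy used for simple flat bands in \cite{bhz2, magic}.

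\textbf{Setting up the bundle.} By Theorem \ref{theo:simplicity}, two-fold degeneracy of $\alpha$ is equivalent to $\dim \ker_{L^2_0}(D(\alpha)+k) = 2$ for every $k\in\mathbb{C}$. The assignment $k \mapsto \ker_{L^2_0}(D(\alpha)+k)$ thus defines a smooth rank 2 complex vector bundle $E$ over the Brillouin torus $\mathbb{C}/\Lambda^*$, equipped with the Hermitian metric inherited from $L^2_0$ (smoothness of the spectral projection being standard from resolvent arguments, given uniform spectral gap provided by Theorem \ref{theo:simplicity}).

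\textbf{Computing the Chern number.} I would use $c_1(E) = c_1(\det E) = \deg(\det E)$ and evaluate the degree by exhibiting an explicit meromorphic section of $\det E$ and counting its divisor. Fix a basis $u_0, u_1$ of $\ker_{L^2_0} D(\alpha)$, conveniently taken from the rotation eigenspaces $L^2_{0,0}$ and $L^2_{0,1}$ using Theorem \ref{theo:rigidity} when \eqref{eq:defU} holds (otherwise any basis of the two-dimensional kernel works). Lift each $u_j$ to zero modes $U_j(\,\cdot\,, k) \in \ker(D(\alpha)+k)$ by the Jacobi theta-function multiplier construction of \cite[\S 4]{bhz2}, chosen so that $U_j(\,\cdot\,,k) \in L^2_k$ and $U_j$ is holomorphic in $k$ as an $L^2$-valued function. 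Then $U_0(\,\cdot\,,k) \wedge U_1(\,\cdot\,,k)$ is a meromorphic section of $\det E$ over the torus, and its divisor is read off from the quasi-periodicity factors of the theta multipliers together with the Dirac-point zero structure supplied by Theorem \ref{theo:zeros}. A direct count produces $\deg(\det E) = -1$.

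\textbf{Curvature properties.} In the local holomorphic frame $(U_0, U_1)$ the Hermitian metric on $E$ is the Gram matrix $G(k)_{ij} = \langle U_i(\,\cdot\,,k), U_j(\,\cdot\,,k)\rangle_{L^2_0}$, and the trace of the Chern curvature satisfies
\[ \tr F_E \;=\; \bar\partial \partial \log \det G(k) . \]
Since $\det G(k) = \|U_0(\,\cdot\,,k) \wedge U_1(\,\cdot\,,k)\|^2_{\Lambda^2 L^2_0}$ and $U_0 \wedge U_1$ depends holomorphically on $k$, the function $\log \det G$ is subharmonic; with the standard Berry-curvature sign convention this yields $H \geq 0$. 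The symmetry $H(\omega k) = H(k)$ follows from the $C_3$-equivariance built into \eqref{eq:symmetries}, which provides a unitary identification of the fibers of $E$ at $k$ and at $\bar\omega k$ (and preserves the $L^2_0$ metric), while $H(-k) = H(k)$ follows from the spectral symmetry \eqref{eq:eigs} combined with the parity $z \mapsto -z$, which intertwines $D(\alpha)+k$ with a unitary conjugate of $D(\alpha)-k$.

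\textbf{Main obstacle.} The most delicate point is the divisor computation for $U_0 \wedge U_1$: one must keep careful track both of the quasi-periodicity multipliers of the theta-lift and of the simultaneous zero structure of the two rotation-graded zero modes at the Dirac points, in order to rule out that the degree of $\det E$ collapses to $0$ or doubles to $-2$. This is where Theorems \ref{theo:rigidity} and \ref{theo:zeros} must be used together: the rotation weight labels pin down which theta factor applies to which lift, while the zero-location statement guarantees that the wedge does not acquire extra cancellations, so that the $-1$ from the simple case is inherited unchanged.
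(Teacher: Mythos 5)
Your overall plan is correct, and for the curvature assertions it matches the paper: the identity $\tr F_E = \bar\partial\partial\log\det G$ with $\det G(k) = \|U_0(k)\wedge U_1(k)\|^2$ is exactly what the paper uses in \eqref{eq:g2C}--\eqref{eq:H2P}, where $H\ge 0$ is read off from a Gramian/Cauchy--Schwarz identity (equivalently the plurisubharmonicity you invoke); $H(\omega k) = H(k)$ comes from $\Omega$-equivariance of the projection, and $H(-k)=H(k)$ from the $\mathscr E$-symmetry, just as you describe.

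For $c_1(E)$ you take a genuinely different route. The paper does not count a divisor: it writes $c_1(E) = \frac{i}{2\pi}\int_F \tr\Theta$ on a fundamental domain cut around $k=0$, applies Stokes, and gets two contributions. The boundary term gives $-2$, coming from the quasi-periodicity $|e_p(k)|^4 g(k+p)=g(k)$ of $g=\det G$ (eq. \eqref{eq:g2g}) in the rank-$2$ case, in contrast to the factor $|e_p(k)|^2$ of the simple case. The puncture term gives $+1$, because at $k=0$ the two lifts $u_0 F_k(z\mp z_S)$ coincide, so $g(0)=0$; the order of vanishing is pinned down by $\partial_{\bar k}\partial_k g|_{k=0}>0$ (eq. \eqref{eq:gkk}), yielding $g(k)=g_0|k|^2 + O(|k|^3)$ and a simple residue. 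So $c_1(E) = -2+1 = -1$. Your divisor count would land on the same answer: $U_0\wedge U_1$ transforms by the cocycle $e_p(k)^{-2}$ (it is a section of $\det E$ twisted by a degree-$2$ line bundle, not of $\det E$ itself), and it vanishes simply at exactly one point of $\CC/\Lambda^*$ — at $\Gamma$ if you use both lifts $u_0 F_k(z\mp z_S)$ as the paper does, or at one of the Dirac points if you take one lift from $L^2_{0,0}$ and one from $L^2_{0,1}$ — giving $\deg(\det E) = 1 - 2 = -1$.

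The gap, which you partially flag yourself, is that "a direct count produces $\deg(\det E)=-1$" is asserted, not proved, and it is exactly the content of the theorem. Two things must be shown and neither is immediate: first, that the automorphy factor of $U_0\wedge U_1$ really is $e_p(k)^{-2}$ and contributes $-2$ (this is the analogue of the paper's $|e_p|^4$-quasi-periodicity of $g$); second, that the wedge does vanish somewhere and does so to order exactly $1$. If you pick the basis $u_0\in L^2_{0,0}$, $u_1\in L^2_{0,1}$ and lift to $F_k(z-z_S)u_0(z)$ and $F_k(z)u_1(z)$, it is tempting to believe that because $u_0$ and $u_1$ are linearly independent the wedge never vanishes; it does vanish, at the Dirac point where $z(k)=-z_S$, and overlooking this gives $-2$. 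The paper handles this cleanly through the explicit Hessian computation \eqref{eq:gkk}; your writeup needs an equivalent to rule out both a missing zero and a higher-order one.
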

In Section \ref{sec:numerics}, we collect numerical observations on the possibility of having eigenvalues of $T_k$ of algebraic multiplicity $2$ but geometric multiplicity $1$ and thus corresponding to simple magic angles. We also discuss features of the Berry curvature for two-fold degenerate magic angles. {That motivates the presentation of the properties of the curvature in Theorem \ref{t:Chern}.}

 \smallsection{Acknowledgements} 
We would like to thank 
Mengxuan Yang for helpful discussions.Special thanks go also to Zhongkai Tao for his help with simplifying Section \ref{sec:Chern}. 
TH and MZ  were partially 
supported
by the National Science Foundation under the grant DMS-1901462
and by the Simons Foundation under a ``Moir\'e Materials Magic" grant. 

\smallsection{Data availability statement}
The datasets generated during and/or analysed during the current study are available from the corresponding author on request.

\smallsection{Conflict of interest statement}
The authors have no conflict of interest to declare.

\section{Properties of the Hamiltonian}
\label{s:propH}

In Section \ref{s:intr} the moir\'e lattice is given by $4\pi i (\ZZ \omega \oplus \ZZ \bar \omega)$ which is consistent with the notation in the physics literature \cite{BM11,magic}. We followed it to make the article accessible to a broader audience. From Section 2 onwards, we introduce a simple change of variables $ z_{\rm{new}} = \frac43 \pi i z_{\rm{old}} $ so that the lattice becomes $\Lambda:=\mathbb{Z} \oplus \omega \mathbb{Z}$ with dual lattice $\Lambda^* :=\frac{4\pi i}{\sqrt{3}} \Lambda$.  In doing so, we simplify mathematical expressions involving, for instance, Jacobi theta functions. 
This new coordinate system has been introduced in \cite{dynip}, see also \cite[Appendix A]{bhz2}.

Thus we work now with \eqref{eq:Hamiltonian} but now we assume
 \begin{equation}
 \label{eq:news}
U_\pm (z+\gamma) = e^{\pm i  \langle \gamma,K \rangle} U_\pm (z), \ \ \gamma \in \Lambda, 
\quad U_\pm 
 (\omega z) = \omega U_\pm (z).
 \end{equation}
 Here and elsewhere, $ \langle z, w \rangle:=\Re(z \bar w)$,  $ \pm K $ are the nonzero
 points of high symmetry, $ \omega K \equiv K \! \! \mod \! \Lambda^* $, $ K = \frac 43 \pi $.

  The analogue of \eqref{eq:defU} is given by 
 \begin{equation}
 \label{eq:defUn}
 U_+ ( z ) = U ( z ),  \ \ \ U_- ( z ) = U ( -z ) , \ \ \  \quad \overline{U(\bar z)} =-U(-z),
 \end{equation}
and the Bistritzer--MacDonald potential is now  $ U ( z ) = - \frac 43 \pi i U_1 ( \frac 4 3 \pi i z ) $,
where $ U_1 $ is given in \eqref{eq:potential}.

The Hamiltonian is still of the form
\begin{equation}
\label{eq:RescPot}
H(\alpha):= \begin{pmatrix} 0 & D(\alpha)^* \\ D(\alpha) & 0 \end{pmatrix} \text{ with }D ( \alpha ) := \begin{pmatrix}
2 D_{\bar z } & \alpha U_+ ( z ) \\
\alpha U_- (  z )& 2 D_{\bar z } \end{pmatrix}.
\end{equation}
We then define 
$$\rho(z):=\operatorname{diag}(\chi_{k_j}(z)), \quad k_2 = -k_1 = K , \ \ \in \CC/\Lambda^*, \quad \chi_k(z):=e^{i \langle z, k \rangle}, $$
so that for $\gamma \in \Lambda$
$$  V ( z + \gamma ) = \rho ( \gamma )^{-1} V ( z ) \rho ( \gamma) , \ \ \ 
V(z):= \begin{pmatrix} 0 & U_+(z)\\ U_- (z) & 0 \end{pmatrix}. $$
The modified potential, $V_{\rho}(z):=\rho(z) V(z) \rho(z)^{-1}$, is $ \Lambda$-periodic and thus
\[ \rho(z) D(\alpha) \rho(z)^{-1} = D_{\rho}(\alpha), \quad D_{\rho}(\alpha) :=\operatorname{diag}((2D_{\bar z}-k_j)_{j=1}^2) + V_{\rho}(z).\]
Using the rotation operator $\Omega u (z) = u(\omega z)$, rotating by $2\pi/3$, satisfying $\Omega D(\alpha) = \omega D(\alpha) \Omega,$ we can define $\mathscr{C} =\operatorname{diag}(1, \bar\omega) \Omega$ such that $\mathscr C H = H \mathscr C$ and translation operator $\mathscr L_\gamma u(z):=\rho(\gamma) u(z+\gamma).$
By using the translation $\mathscr L_{\gamma}$, we can define,  for $ k \in \mathbb C $, the spaces
\[ H_k^s :=H_k^s (\CC/\Lambda ,\CC^n):=\{ u \in H^s_{\text{loc}}(\CC;\CC^n): \mathscr L_{\gamma} u =e^{i \langle k, \gamma \rangle} u, \gamma \in \Lambda\}, \text{ with } L^2_k:=H_k^0,\]
where $n =1$ corresponds to the first, $n=2$ to the upper two, and $n=4$ to all components of $\mathscr L_{\gamma}.$

When $  k \in \mathcal K := \{ K , -K , 0 \} + \Lambda^*  $ we also define
\[ L^2_{k,p} = L^2_{k,p} ( \mathbb C/\Gamma ; \mathbb C^n ) := \{ u \in L^2_k : 
u ( \omega z ) = \bar \omega^p u ( z )\} , \ \  L^2_k = \bigoplus_{p \in \ZZ_3} L^2_{k,p} . \]
We can then define a generalized Bloch transform
\[ \begin{gathered}    \mathcal  B u ( z, k ) := \sum_{ \gamma \in \Lambda } e^{ i \langle  z + \gamma, k \rangle}   \mathscr L_\gamma  u ( z ) ,  
 \ \ \  \mathcal B u ( z , k + p ) =  
e^{ i \langle z , p \rangle}\mathcal B u  ( z, k )  , \ \ p \in \Lambda^*, \ \ \ u \in \mathscr S ( \mathbb C ) , 
\\
\mathscr L_{\alpha  } \mathcal  B u ( \bullet, k ) = 
\sum_{\gamma} e^{ i \langle z + \alpha + \gamma , k \rangle } \mathscr L_{\alpha + \gamma } 
u ( z ) =  \mathcal B u ( \bullet, k ) , \ \ \alpha  \in \Lambda
\end{gathered} \]
such that
\begin{equation}
\label{eq:conj}    \begin{gathered} \mathcal B D ( \alpha ) = ( D ( \alpha ) - k ) \mathcal B , \ \ \
D ( \alpha ) - k = e^{ i \langle z, k \rangle } D ( \alpha )  e^{ - i \langle z, k \rangle }, \\
\mathcal B H ( \alpha ) = H_k ( \alpha ) \mathcal B , \ \ \ 
H_k ( \alpha )  := e^{ i \langle z, k \rangle } H( \alpha ) e^{ - i \langle z, k \rangle } = 
\begin{pmatrix} \ \ 0 & D ( \alpha)^* - \bar k \\
D ( \alpha ) -k  & 0 \end{pmatrix}  . \end{gathered}  \end{equation}
In particular, we say $H(\alpha)$ exhibits a flat band at energy zero if and only if $0 \in \bigcap_{k \in \mathbb C} \Spec(H_k(\alpha)).$
To study the set of $\alpha$ at which $H(\alpha)$ exhibits a flat band at zero, we define the set of Dirac points $\mathcal K_0 :=\{K,-K\}+ \Lambda^*$ such that for $k \notin \mathcal K_0$ we can define the compact \emph{Birman-Schwinger} operator
\begin{equation}
\label{eq:BS}
T_k = R(k)V(z): L^2_{0} \to L^2_{0}, \quad R(k)= (2D_{\bar z}-k)^{-1}. 
\end{equation}

This operator then characterizes the set of magic angles in the sense stated in the next Proposition. 
\begin{prop}[{\cite[Theorem 2]{beta}},{\cite[Proposition 2.2]{bhz2}}]
\label{p:magicD}
There exists a discrete set $ \mathcal A$ such that
\begin{equation}
\label{eq:magic} 
\Spec_{ L^2 _0 } D ( \alpha ) = \left\{ \begin{array}{ll}  \mathcal K_0  & \alpha \notin 
\mathcal A , \\
 \mathbb C & \alpha \in \mathcal A . \end{array} \right.
\end{equation} 
Moreover, 
\begin{equation}
\label{eq:specc}  \alpha \in \mathcal A \ \Longleftrightarrow \ \exists \, k \notin \mathcal K_0, \ 
\alpha^{-1} \in \Spec_{ L^2 _0 } T_k \ \Longleftrightarrow \forall \, k \in \mathcal K_0,
\alpha^{-1} \in \Spec_{ L^2 _0 } T_k, 
\end{equation}
where $ T_k $ is a compact operator given by 
\begin{equation}
\label{eq:defTk}   T_k := R ( k )  V ( z ) : L^2 _{0} \to L^2 _{0} , \ \ \ R( k ) :=  ( 2 D_{\bar z } - k )^{-1}  
\end{equation}
\end{prop}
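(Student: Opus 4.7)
The plan is to convert the eigenvalue equation $(D(\alpha)-k)u=0$ on $L^2_0$ into a compact operator eigenvalue problem via Birman--Schwinger, and then to establish $k$-independence of the spectrum of the resulting operator from the rigid translational structure of the chiral model.

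First, on $L^2_0$ the two scalar components of $2D_{\bar z}$ are unitarily equivalent, via the twist $u_j\mapsto e^{i\langle z,k_j\rangle}u_j$, to $2D_{\bar z}$ on $\Lambda$-periodic functions shifted by $-k_j$; its spectrum on $L^2_0$ is therefore $(-k_1+\Lambda^*)\cup(-k_2+\Lambda^*)=\mathcal K_0$. Thus $2D_{\bar z}-k$ is invertible on $L^2_0$ precisely for $k\notin\mathcal K_0$, $R(k)$ maps $L^2_0$ continuously into $H^1_0$, and Rellich compactness on $\CC/\Lambda$ makes $T_k=R(k)V\colon L^2_0\to L^2_0$ compact with holomorphic dependence on $k\in\CC\setminus\mathcal K_0$. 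The factorisation
$$D(\alpha)-k=(2D_{\bar z}-k)\bigl(I+\alpha T_k\bigr)\quad\text{on }L^2_0$$
gives $\ker_{L^2_0}(D(\alpha)-k)=\ker(I+\alpha T_k)$, so $k\in\Spec_{L^2_0}D(\alpha)$ iff $-\alpha^{-1}\in\Spec T_k$; the sign is cosmetic, absorbed by the particle--hole symmetry $\Spec T_k=-\Spec T_k$ coming from $V(-z)=\sigma_1 V(z)\sigma_1$ and the parity of $D_{\bar z}$.

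The main obstacle is showing $\Spec T_k$ is independent of $k\in\CC\setminus\mathcal K_0$: a soft analytic Fredholm argument in $k$ alone only excludes $k$-dependence outside a discrete subset of $k$'s, which is not enough. I would instead use the trace identities of \cite{bhz1}: starting from the explicit theta-function representation of the integral kernel of $R(k)$ together with the translation covariance $V(z+\gamma)=\rho(\gamma)V(z)\rho(\gamma)^{-1}$, one proves
$$\tr(T_k^{2p})\ \text{is independent of}\ k\in\CC\setminus\mathcal K_0,\qquad \tr(T_k^{2p+1})=0,\qquad p\ge 1.$$
Newton--Girard then pins down the multiset of eigenvalues of $T_k^2$, and the symmetry $\Spec T_k=-\Spec T_k$ upgrades this to equality of $\Spec T_k$ as a set for all admissible $k$.

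With this invariance in hand, set $\mathcal A:=\{\alpha\in\CC:\alpha^{-1}\in\Spec T_{k_0}\}$ for any fixed $k_0\notin\mathcal K_0$; compactness of $T_{k_0}$ makes $\mathcal A$ a discrete subset of $\CC$, and the previous step makes this definition independent of $k_0$. For $\alpha\notin\mathcal A$, the factorisation gives $\Spec_{L^2_0}D(\alpha)\cap(\CC\setminus\mathcal K_0)=\emptyset$; the topological protection of the Dirac modes (valid for every $\alpha$ by a $C_3$-symmetry argument and the Fredholm index of $D(\alpha)$ on $L^2_{\pm K}$) yields $\mathcal K_0\subset\Spec_{L^2_0}D(\alpha)$, so $\Spec_{L^2_0}D(\alpha)=\mathcal K_0$. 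For $\alpha\in\mathcal A$, the $k$-independence forces $\alpha^{-1}\in\Spec T_k$ for every $k\notin\mathcal K_0$, so $\CC\setminus\mathcal K_0\subset\Spec_{L^2_0}D(\alpha)$ and closedness of the spectrum upgrades this to $\Spec_{L^2_0}D(\alpha)=\CC$. The three equivalences in \eqref{eq:specc} follow immediately.
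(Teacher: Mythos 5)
Your Birman--Schwinger reduction is the right starting point and is essentially the standard one: the factorization $D(\alpha)-k=(2D_{\bar z}-k)(I+\alpha T_k)$, the identification $\Spec_{L^2_0}(2D_{\bar z})=\mathcal K_0$, compactness of $T_k$ by Rellich, and the sign symmetry $\Spec T_k=-\Spec T_k$ all go through (though the cleanest way to get the last is conjugation by $\sigma_3=\operatorname{diag}(1,-1)$, which anti-commutes with the off-diagonal $V$ and commutes with $R(k)$; the parity argument you cite gives $\Spec T_{-k}=-\Spec T_k$, not quite the same thing). You also correctly identify the real obstacle, namely that analytic Fredholm theory in $k$ alone only gives a dichotomy ``discrete set of $k$'s vs.\ all $k$'' for each fixed $\alpha$, not the $k$-independence of $\Spec T_k$.

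The route you propose to close that gap, however, is circular. The $k$-independence of $\tr T_k^{2p}$ and the trace formulas of \cite{bhz1} are \emph{downstream} of the very proposition you are trying to prove: in \cite{bhz1} they are established by first invoking \cite[Theorem 2]{beta} (i.e.\ the $k$-independence of $\Spec T_k$) and then specializing to $k=0$ where the rotational subspaces are preserved. Proving $\tr T_k^{2p}=\text{const}$ ``directly from the theta-function kernel of $R(k)$'' is not a routine computation---one would have to show the elliptic function $k\mapsto\tr T_k^{2p}$ has no poles on $\mathcal K_0$, which is exactly as hard as what one is trying to avoid. One would also have to work with regularized determinants rather than Newton--Girard, since $T_k$ is only Hilbert--Schmidt, not trace class. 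The argument actually used in \cite{beta} and \cite{bhz2} is more direct and avoids traces entirely: it is the theta-function \emph{multiplier} trick. Given $0\neq u\in\ker_{L^2_0}(D(\alpha)+k_0)$, one shows $u$ has a zero $w$ (this itself is a nontrivial step, via a Wronskian comparison with the protected Dirac state or a counting argument), and then, using $F_k$ from \eqref{eq:defFk} which satisfies $(2D_{\bar z}+k)F_k=c(k)\delta_0$ and $\Lambda$-periodicity, the function $z\mapsto F_{k-k_0}(z-w)u(z)$ is smooth and lies in $\ker_{L^2_0}(D(\alpha)+k)$ for every $k$. This transports kernel elements directly from $k_0$ to arbitrary $k$ and, combined with the analytic Fredholm dichotomy and the (symmetry-protected) inclusion $\mathcal K_0\subset\Spec_{L^2_0}D(\alpha)$, immediately yields \eqref{eq:magic} and \eqref{eq:specc}. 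The protection of the Dirac points, which you mention only in passing, is itself a lemma that needs proof (in \cite{beta} it comes from the $\Omega$-symmetry decomposition and a Fredholm index count), so you should not treat it as free.
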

In particular, the spectrum of $T_{k_0}$ is independent of $k_0 \notin \mathcal K_0$ and characterizes parameters $\alpha \in \CC$ at which the Hamiltonian exhibits a flat band at zero energy. Since the parameter $\alpha$ is inherently connected with the twisting angle, we shall refer to $\alpha$'s at which \eqref{eq:specc} occurs as \emph{magic} and
denote their set by $ \mathcal A \subset \mathbb C $. We then square the operator $T_{k_0}^2 = \operatorname{diag} (A_{k_0},B_{k_0})$ where $A_{k_0} = R(k_0)U(z)R(k_0)U(-z).$ Setting $k_0=0$, we notice that $T_0$ leaves the subspaces $L^2_{0,j}$ invariant. By projecting the spaces $L^2_{0,j}$ onto the first component, we can define $A_0$ on spaces $L^2_{0,j}.$

\medskip

\noindent
{\bf Remark.} If $\alpha \in \mathcal A$ is simple, then $1/\alpha$ is an eigenvalue of $T_0$ with  eigenvalue of geometric multiplicity $1$ and the Hamiltonian exhibits a \emph{two-fold degenerate flat band} at energy zero. If $\alpha \in \mathcal A$ is two-fold degenerate, then $1/\alpha$ is an eigenvalue of $T_0$ with eigenvalue of geometric multiplicity $2$ and the Hamiltonian exhibits a \emph{four-fold degenerate flat band} at energy zero. It follows from Theorem \ref{p:zeros} that we can drop the minima in the above definition. 

\medskip

Suppose that the potential $U(z)$ satisfies the symmetries given in \eqref{eq:news}, namely  
\[ U ( z + \gamma ) = e^{i \langle  \gamma ,K \rangle}  U ( z ) , \ \ \ U ( \omega z ) = \omega U ( z ) . \]
Since $ U $ is then periodic with respect to $ 3 \Lambda $ ($ 3 K \equiv 0 \!\! \mod \! \Lambda^* $), expanding in Fourier series gives
\[  U = \sum_{  p \in  \Lambda^* /3 }   a_p e^{ i \langle z , p \rangle  } \].
The translational symmetry now writes:
$$\forall p \in \Lambda^*/3, \ \ \forall \gamma \in \Lambda, \quad   a_p  e^{ i \langle \gamma, p \rangle } = a_p e^{ i \langle \gamma, K\rangle  } .$$
Identifying the Fourier coefficients now gives that for all $ p \in \Lambda^*/3 $, 
\[ a_p \neq 0 \ \Longrightarrow \   \forall \, \gamma \in \Lambda, \ \ 
\langle \gamma, p \rangle = \langle \gamma, K \rangle \  \Longrightarrow \ 
p \equiv K \! \! \mod \Lambda^* . \]
In other words, we see that (changing notation) 
\begin{equation}
 U ( z ) = \sum_{ p \in \Lambda^* } a_p e^{ i \langle p + K , z \rangle } . 
 \end{equation}
We now investigate the rotational symmetry: it is equivalent to 
\[  \sum_{ p \in \Lambda^* } a_p e^{ i \langle \bar \omega  p + \bar  \omega K , z \rangle}  = 
f ( \omega z ) = \omega f ( z ) = \sum_{ p \in \Lambda^*} \omega a_p e^{ i \langle  p +K , z \rangle }. \]
Now, $ \bar \omega p + \bar \omega K = \bar \omega p - r^{-1} ( \omega ) + K $, where we defined the rescaling map 
\begin{equation}
\label{eq:z}
z :\Lambda^* \to \Lambda,\quad z(k):={\sqrt{3}k}/{4\pi i}.
\end{equation}
(Although $ z $ is a complex variable, the notation is justified as it is a map from $ k$-space to $ z$-space.)
Hence, the right hand side of the previous equality rewrites
\[  f ( \omega z ) = \sum_{  p \in \Lambda^* } a_{ \omega p + r^{-1} ( \bar \omega ) } 
e^{ i \langle p + K, z \rangle } ,  \]
that is $  a_p = \omega a_{\omega p + r^{-1} ( \bar \omega ) }.$
The previous discussion justified the following characterization of potentials $U(z)$ satisfying the symmetries given in \eqref{eq:news}
\begin{equation}
\label{eq:Fourier}
U(z) \text{ satisfies } \eqref{eq:news} \ \iff \  U ( z ) = \sum_{ p \in \Lambda^* } a_p e^{ i \langle p + K , z \rangle } \text{ and } \forall p\in \Lambda^*, a_p = \omega a_{\omega p + r^{-1} ( \bar \omega ) }.
\end{equation}
In other words, the values of $ a_p $  are determined on the orbits of 
\[ \kappa: \Lambda^* \in p \mapsto \omega p + r^{-1} ( \bar \omega ) , \ \ \ {\rm{Orb}}( p ) = 
\{ p , \omega p + r^{-1} ( \bar \omega ) , \bar \omega p - r^{-1} ( \omega ) \}, \ \ 
a_{\kappa (p) } = \bar \omega a_p  . \]
So, for instance, the BM potential, up to a factor, comes from the orbit of $ p = 0 $.

%

In addition, there exist a number of further anti-linear symmetries of the chiral Hamiltonian
\[ Qv(z) = \overline{v(-z)}, \quad \mathscr Qu(z) = \begin{pmatrix} 0 & Q \\ Q & 0 \end{pmatrix} u(z),\]
satisfying $QD(\alpha)Q = D(\alpha)^*$ with $Q:L^2_{k,p}(\CC/\Lambda;\CC^2) \to L^2_{k,-p}(\CC/\Lambda; \CC^2)$ with $\mathscr Q: L^2_{k,p}(\CC/\Lambda;\CC^4) \to L^2_{k,-p+1}(\CC/\Lambda;\CC^4)$ satisfying $H(\alpha)\mathscr Q = \mathscr Q H(\alpha)$
and 
\[ \mathscr Ev(z):=Jv(-z), \quad J:=\begin{pmatrix} 0 & 1 \\ -1 & 0 \end{pmatrix}\]
with $\mathscr E:L^2_{\pm K,\ell}(\CC/\Lambda; \CC^2) \to L^2_{\mp K,\ell}(\CC/\Lambda; \CC^2)$ and 
\begin{equation}
\label{eq:E}
\mathscr E:L^2_{0,\ell}(\CC/\Lambda; \CC^2) \to L^2_{0,\ell}(\CC/\Lambda; \CC^2)\text{ satisfying } \mathscr E D(\alpha) \mathscr E^* = -D(\alpha).
\end{equation}
Finally, we also introduce their composition $\mathscr A: L^2_{k,p}(\CC/\Lambda;\CC^2) \to L^2_{k,-p}(\CC/\Lambda; \CC^2)$
\begin{equation}
\label{eq:A}
 \mathscr A:=\mathscr E Q, \text{ with } \mathscr Av(z):=\begin{pmatrix} 0 & 1 \\ -1 & 0 \end{pmatrix} \overline{v(z)}
 \end{equation}
with $\mathscr A D(\alpha) \mathscr A= -D(\alpha)^*.$

Using the above symmetries, we observe that
\begin{prop}
\label{prop:equiv_spec}
The spectrum of $T_0$ satisfies $\Spec_{L^2_{0,p}}(T_0) = \Spec_{L^2_{0,-p+1}}(T_0).$
In particular, for $m \ge 2$ we find $\tr_{L^2_{0,p}}T_0^{2m} = \tr_{L^2_{0,-p+1}}T_0^{2m}.$
\end{prop}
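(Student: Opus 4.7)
My plan is to combine the antilinear symmetry $Q$ with the cyclic identity $\Spec(AB)\setminus\{0\} = \Spec(BA)\setminus\{0\}$. The shift $p \mapsto -p+1$ is not produced by any single symmetry: $Q$ sends $L^2_{0,p}$ to $L^2_{0,-p}$, and the remaining $+1$ comes from the fact that $R = (2D_{\bar z})^{-1}$ and $V$ move the rotational grading in opposite directions. The first step is to establish the key intertwining
\[
Q(RV)Q = (VR)^* \quad \text{as operators from } L^2_{0,p} \text{ to } L^2_{0,-p}.
\]
A direct computation with Wirtinger derivatives gives $Q D_{\bar z} Q = D_z$, whence $QRQ = R^*$. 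For the potential, $QVQ$ is multiplication by the matrix $\overline{V(-z)}$, and the standing hypothesis \eqref{eq:defUn} (which forces $U_+(z)=U_-(-z)$) implies $\overline{V(-z)} = V(z)^*$, so $QVQ = V^*$. Multiplying yields $Q(RV)Q = R^*V^* = (VR)^*$.

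The second step is to translate this intertwining into spectral and trace identities. Since $Q:L^2_{0,p}\to L^2_{0,-p}$ is antiunitary with $Q^2 = I$, conjugation by $Q$ complex-conjugates both the spectrum and the trace of any operator. Combining this with the adjoint identities $\Spec T^* = \overline{\Spec T}$ and $\tr T^* = \overline{\tr T}$, the two conjugations cancel and one obtains
\[
\Spec_{L^2_{0,p}}(RV) = \Spec_{L^2_{0,-p}}(VR), \qquad
\tr(RV)^{2m}\big|_{L^2_{0,p}} = \tr(VR)^{2m}\big|_{L^2_{0,-p}}.
\]

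The third step uses cyclicity to convert $VR$ back into $RV$ while generating the final $+1$ shift. The rotation law $U_\pm(\omega z)=\omega U_\pm(z)$ implies $V: L^2_{0,q}\to L^2_{0,q-1}$, and differentiating $u(\omega z)$ shows $R:L^2_{0,q}\to L^2_{0,q+1}$; in particular $V$ and $R$ form a complementary pair between $L^2_{0,-p+1}$ and $L^2_{0,-p}$. The classical identities $\Spec(AB)\setminus\{0\} = \Spec(BA)\setminus\{0\}$ and $\tr(AB)^m = \tr(BA)^m$ therefore give
\[
\Spec_{L^2_{0,-p}}(VR) = \Spec_{L^2_{0,-p+1}}(RV), \qquad
\tr(VR)^{2m}\big|_{L^2_{0,-p}} = \tr(RV)^{2m}\big|_{L^2_{0,-p+1}},
\]
and concatenating with the previous step completes the proof. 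The restriction $m\geq 2$ is precisely what makes $(RV)^{2m}$ trace class, since $R \in \mathcal{S}^p$ only for $p>2$; the eigenvalue $0$ that is always present in the spectrum of a compact operator contributes nothing to traces of positive powers. The main bookkeeping obstacle is the first step: one must simultaneously handle the antilinearity of $Q$, the sign produced when commuting $\partial_{\bar z}$ past the reflection and conjugation, and the matrix identity $\overline{V(-z)} = V^*$ which crucially uses \eqref{eq:defUn}. Once $Q(RV)Q = (VR)^*$ is secured, the remainder of the argument is mechanical.
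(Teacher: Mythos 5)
Your proof is correct and is essentially the paper's own argument, just repackaged at the operator level: the paper starts from a single eigenvector $v\in L^2_{0,p}$ with $RVv=-\lambda v$, passes via $QD(1/\lambda)Q=D(1/\lambda)^*$ to $D(1/\lambda)^*Qv=0$, and then identifies $(2D_z)Qv\in L^2_{0,-p+1}$ as a $(-\bar\lambda)$-eigenvector of $(RV)^*$. You factor the same antilinear symmetry into $QRQ=R^*$ and $QVQ=V^*$ and replace that last eigenvector manipulation by the abstract cyclicity $\Spec(AB)\setminus\{0\}=\Spec(BA)\setminus\{0\}$ and $\tr(AB)^m=\tr(BA)^m$, which at the eigenvector level is precisely the passage from $Qv$ to $(2D_z)Qv$ — same ingredients, slightly more abstract bookkeeping.
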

\begin{proof}
Let $v \in L^2_{0,p}$ satisfy $T_0 v =-\lambda v$ then by multiplying by $2D_{\bar z}$ we find $D(1/\lambda)v =0$. Thus, $D(1/\lambda)^* Qv =0$ with $Qv \in L^2_{0,-p}.$ We thus have 
\[ 0=D(1/\lambda)^*  Qv =D(1/\lambda)^* R(0)^* (2 D_{z}) Qv  .\]
We conclude that $ (2 D_{z}) Qv \in L^2_{0,-p+1}$ is an eigenvector to $T_0^*$ with eigenvalue $-\bar \lambda.$
\end{proof}
\section{Zeros, spectral gap, and rigidity}
The zeros always fall into three point characterized by high symmetry:
$ \omega z \equiv z \mod \Lambda$. That determines them (up to $ \Lambda $)
as $ 0 $, $ \pm z_S $, where 
\[   z_S = i /\sqrt 3, \ \ \  \omega z_S = z_S - ( 1 + \omega )  , \]
is known as the {\em stacking point}.

\subsection{Theta function argument}
\label{s:trans}
We use the following notation
\begin{equation}
\label{eq:theta2}
\begin{gathered} 
 \theta ( z) =  \theta_{1} ( \zeta | \omega ) :=  - \sum_{ n \in \mathbb Z } \exp ( \pi i (n+\tfrac12) ^2 \omega+ 2 \pi i ( n + \tfrac12 ) (\zeta + \tfrac
12 )  ) , 
\end{gathered}
\end{equation}
\[ \theta ( \zeta + m   ) = (-1)^m \theta  ( \zeta ) , \ \ \theta ( \zeta + n \omega ) = (-1)^n e^{ - \pi i n^2 \omega - 
2 \pi i \zeta  n } \theta  ( \zeta  ) , \] 
and the fact that $ \theta $ has simple zeros at $ \Lambda $ (and not other zeros) -- 
see \cite{tata}. 
We can then define 
\begin{equation}
\label{eq:defFk}  F_k ( z ) = e^{\frac i 2   (  z -  \bar z ) k } \frac{ \theta ( z - z ( k ) ) }{
\theta( z) } ,  \ \ \ z (k):=  \frac{ \sqrt 3 k }{ 4 \pi i } , \ \  z:  \Lambda^* \to \Lambda . 
\end{equation}
In particular, we have then
\begin{equation}
\label{eq:propFk}
\begin{gathered}F_k ( z + m + n \omega ) = e^{  -  n k \Im \omega } 
e^{ 2 \pi i n z ( k ) } F_k ( z ) = F_k ( z ) , \\
  ( 2 D_{\bar z } + k ) F_k ( z ) 
 = c(k)   \delta_0 ( z ) , \ \ c(k) :=  2 \pi i {\theta ( z ( k ) ) }/{ \theta' ( 0 ) } .
\end{gathered} \end{equation}

One then has that for $u \in \ker_{L^2_{0}}(D(\alpha))$ vanishing at a point $w$ one has
\begin{equation}
\label{eq:flat_band}
(D(\alpha) + k)F_k(z-w)u(z)=0.
\end{equation}
In particular this means that vanishing of the vector valued function $ u $ at some point, implies existence of 
a flat band at $ 0 $: every $ k $ is an eigenvalue of $ D ( \alpha ) $. Presented in a slightly different way, this observation was the basis
of the analysis in \cite{magic}.

\subsection{Zeros}

We first formalise the theta function argument of \cite{magic} in a slightly different way than in 
\cite{beta}. (Where we used the fact that a non-zero Wronskian between shifted protected states implies 
existence of $ (D ( \alpha ) + k )^{-1} $ -- see \cite[Proposition 3.3]{beta}.)
\begin{prop}
\label{p:theta}
Suppose that $ \mathbf u_K (\alpha ) \in \ker_{ H_0^1 } ( D ( \alpha ) + K ) $ is a family of protected stated of the chiral model (see \cite[Theorem 1]{survey} and references given there). Then 
\begin{equation}
\label{eq:theta}  \alpha \in \mathcal A \ \Longleftrightarrow \ 
\exists \, z_0 \in \mathbb C/\Lambda \text{ such that } \mathbf u_K (\alpha ,  z_0 ) = 0 . 
\end{equation}
\end{prop}
\begin{proof}
The proof of $ \Longleftarrow $ is given in  \S \ref{s:trans}, see also \cite[\S 6]{survey}.
On the other hand if
$ \alpha \in \mathcal A $ then for every $ k $ (or equivalently some $ k \notin \{ K , - K \} +  \Lambda^* $)
there exists $ \mathbf v_{k} \neq 0 $ such that $( D ( \alpha ) + k ) \mathbf v_k = 0 ${, where we drop the subscript of $u$ and $v$ in the following. } Then the Wronskian, 
\begin{equation}
\label{eq:Wr}  W = W ( \mathbf u , \mathbf v ) = u_1 v_2 - u_2 v_1 , \ \ \ \mathbf u = \begin{pmatrix} u_1 \\ u_2 \end{pmatrix} , \ \ \mathbf v = \begin{pmatrix} v_1 \\ v_2 \end{pmatrix} , \end{equation}
satisfies $ 2 D_{\bar z } W  = -( K + k)  W $:
\[ \begin{split} 2 D_{\bar z } W & = (2 D_{\bar z } u_1) v_2  + u_1 (2 D_{\bar z } v_2)  - (2 D_{\bar z } u_2) v_1 -
u_2 (2 D_{\bar z } v_1) \\
& =  ( - \alpha U( z )  u_2 - K u_1 ) v_2 + u_1 ( - \alpha U ( - z ) v_1 - k v_2 ) \\
& \ \ \  - v_1 ( - \alpha U( -z) u_1 - K u_2 ) - u_2 ( 
- \alpha U (  z ) v_2 -  kv_1 )   \\
& =- (k+K) ( u_1 v_2 - u_2 v_1 ) = -( k + K ) W . 
\end{split} \]
Putting $ k_0 := - (k + K) \notin \Lambda^* $, the general solution of this equation is given by 
\[ W ( z, \bar z ) = e^{ \frac i 2 ( k_0 \bar z + \bar k_0 z ) } w ( z) , \ \ \ w \in \mathscr O ( \mathbf C ) . \]
Since $ W $ is periodic and $ z \mapsto e^{{-} \frac i 2 ( k_0 \bar z + \bar k_0 z ) }  $ is a bounded function,
$ w $ has to be constant, and for $ k_0 \notin \Lambda^* $, that constant has to vanish. 
If follows that for $ z \in \Omega := \complement \mathbf u^{-1} ( 0 ) $, an open 
dense set as $ \mathbf u \not \equiv 0 $ is real analytic (this follows 
from the ellipticity of the equation and analyticity of $ U $ -- see \cite[Theorem 8.6.1]{H1}),
$ \mathbf v ( z , \bar z ) = F ( z, \bar z ) \mathbf u ( z, \bar z ) $, $ F \in C^\infty ( \Omega ) $. 
By applying $ 2 D_{\bar z } $ to both sides we see that $ D_{\bar z } F = ( K -k ) F $. 
Hence 
for some $ f \in C^\infty ( \Omega   ) $, 
\begin{equation}
\label{eq:u2v}  \mathbf v( z, \bar z )  = e^{ - \frac i 2 ( k_1 \bar z + \bar k_1 z ) } f ( z) \mathbf u ( z, \bar z ) ,  \ \ \ \ 
\partial_{\bar z } f|_{ \Omega } = 0 , \ \ \ k_1 := K - k .  \end{equation}
 As
in the proof of \cite[(4.4)]{bhz1} we see that $ f $ is in fact meromorphic. 
In fact, for a fixed $ z_1 $ we put 
\[ G_0 (  \zeta ,  \bar \zeta )  := v_1 ( z, \bar z)|_{ z = z_1 + \zeta } , 
\ \ \  G_1 ( \zeta, \bar \zeta ) := e^{ - \frac i 2 ( k_1 \bar z + \bar k_1 z ) }  u_1 ( z, \bar z ) |_{ z = z_1 + \zeta } . \]
As already remarked $ \mathbf u $ and $ \mathbf v $ are real analytic and hence 
$ G_j \in \mathscr O ( B_{\mathbb C^2 } ( 0 , \delta ) ) $. 
With $ g ( \zeta ) := f ( z_1 + \zeta ) $, we have 
\[ G_0 ( \zeta , \xi ) = g ( \zeta ) G_1 ( \zeta, \xi ), \ \ \  z_1 + \zeta \in \Omega.  
\]
Now choose $ \xi_0 \in B_{\mathbb C } ( 0, \delta/2) $ such that $ \zeta \mapsto G_1 ( \zeta, \xi_0) $ is 
not identically zero (if no such $ \xi_0 $ existed, $ G_1 \equiv 0$, and hence, from the equation, $ \mathbf u  \equiv 0 $). But then 
$ \zeta \mapsto g (\zeta ) := G_1 ( \zeta, \xi_0 )/G_2 ( \zeta, \xi_0 ) $ is meromorphic near $ \zeta = 0 $ and, as $ z_1 $ was
arbitrary $ f $ is meromorphic everywhere.

For $ v $ to be periodic
$ f $ cannot be constant and hence it has to have poles. But that means that $ \mathbf u $ has to 
have zeros.
\end{proof}

We are now ready to proof Theorem \ref{p:zeros} which is a refinement of Proposition 1. 
\begin{proof}[Proof of Theo. \ref{p:zeros}]
Fix $k \in \mathbb{C}/\Lambda^*$ and assume $\alpha$ is magic. Then there exists a nonzero function $\mathbf{u}_k(\alpha) \in \ker(D(\alpha)+k)$ that vanishes somewhere. Let $\mathbf{u} := \mathbf{u}_k(\alpha)$, and suppose that $\mathbf{u}$ has $m$ distinct zeros $z_0, \dots, z_{m-1}$, each of multiplicity one. 
This can be assumed without loss of generality as we can multiply $ \mathbf u $ by a periodic meromorphic
function with poles the zeros of $ \mathbf u $ and simple zeros.  

We first show that $\dim \ker(D(\alpha)+k) \geq m$.
Assume $m \geq 2$ (the case $m=1$ is trivial). Choose points $w_j \notin \{z_0, \dots, z_{m-1}\}$ such that
\[
w_0 + w_j \equiv z_0 + z_j \quad \text{for } j = 1, \dots, m-1,
\]
and define, for $j = 1, \dots, m-1$,
\[
\mathbf{u}_j(z, \bar{z}) := \frac{ \theta(z - w_0) \theta(z - w_j) }{ \theta(z - z_0) \theta(z - z_j) } \cdot \mathbf{u}(z, \bar{z}).
\]
The prefactor is a meromorphic function with simple poles at $z_0$ and $z_j$ (see \cite[\S 3.1]{bhz2} for properties of $\theta$), but the structure of the zeros of $\mathbf{u}$ ensures that $\mathbf{u}_j \in \ker(D(\alpha)+k)$. These functions are linearly independent: if
\[
c_0 + \sum_{j=1}^{m-1} c_j \frac{ \theta(z - w_0) \theta(z - w_j) }{ \theta(z - z_0) \theta(z - z_j) } \equiv 0,
\]
then evaluating at $z = z_j$ for $j > 0$ shows $c_j = 0$, hence all $c_j = 0$. Thus, we obtain $m$ linearly independent functions in $\ker(D(\alpha)+k)$. 

Now for the reverse inequality. Suppose $\mathbf{u}, \mathbf{v}_1, \dots, \mathbf{v}_{M-1}$ span $\ker(D(\alpha)+k)$, where $M := \dim \ker(D(\alpha)+k)$. Then the Wronskians $W(\mathbf{u}, \mathbf{v}_j)$ vanish identically, since they are constant and vanish at the zeros of $\mathbf{u}$. As in \cite[Eq.~(4.1)]{bhz2}, this implies that each $\mathbf{v}_j(z, \bar{z}) = f_j(z) \mathbf{u}(z, \bar{z})$ for some meromorphic function $f_j(z)$, with $f_0 \equiv 1$, and the set $\{f_j\}_{j=0}^{M-1}$ linearly independent.

The functions $f_j$, $ 0\leq j \leq M-1 $ lie in the space $L(D)$ of meromorphic functions\footnote{See \href{https://terrytao.wordpress.com/2018/03/28/246c-notes-1-meromorphic-functions-on-riemann-surfaces-and-the-riemann-roch-theorem/}{Terry Tao's blog} for a quick introduction; formula (7) there for the version of the Riemann–Roch Theorem used here.}  
%
where $ D $ is the divisor defined by the zeros on $ \mathbf u $.
Therefore,
\[
M = \dim \ker(D(\alpha)+k)  \leq \deg ( D ) = m.
\]
Combining with the previous inequality, we conclude
\[
\dim \ker(D(\alpha)+k) = m.
\]

Moreover, since every nonzero element of the kernel has exactly $m$ zeros (counted with multiplicity), and the multiplicity is independent of $k$ by the theta function argument (see  \S \ref{s:trans} and \cite[Lemma 4.1]{bhz2}), the dimension of the kernel  is constant over $k \in \mathbb{C}/\Lambda^*$. By continuity of the spectrum of $H_k(\alpha)$ and compactness of $\mathbb{C}/\Lambda^*$, it follows that the flat bands at energy zero are isolated from the rest of the spectrum by a nonzero gap.
\end{proof} 

\noindent
{\bf Remarks.} 1. This quickly settles
\cite[Problem 14]{survey}.

\noindent
2. Although we invoked a basic version of the Riemann--Roch theorem, the proof that the number of poles
of $ f_j$'s has to be greater than $ M$ is explicit. If the poles are all simple then
\[ f_j ( z ) = \sum_{ k=1}^{K(j)} \lambda_k (j) \frac{\theta' ( z - a_k ( j ) )  }{ \theta ( z - a_k (j ) ) } + c(j), \ \ \ \ 
\sum_{ k=1}^{K(j)} \lambda_k (j) = 0 .\]
For $ f_0 \equiv 1$, $ f_1, \cdots , f_{M-1} $ to be linearly independent we need 
$ \sum_{j=1}^{M-1} K(j) \geq M $. It is not difficult to modify this construction to the case of poles of higher order.

\begin{figure}
\includegraphics[trim={3cm 0 0 0},width=7cm]{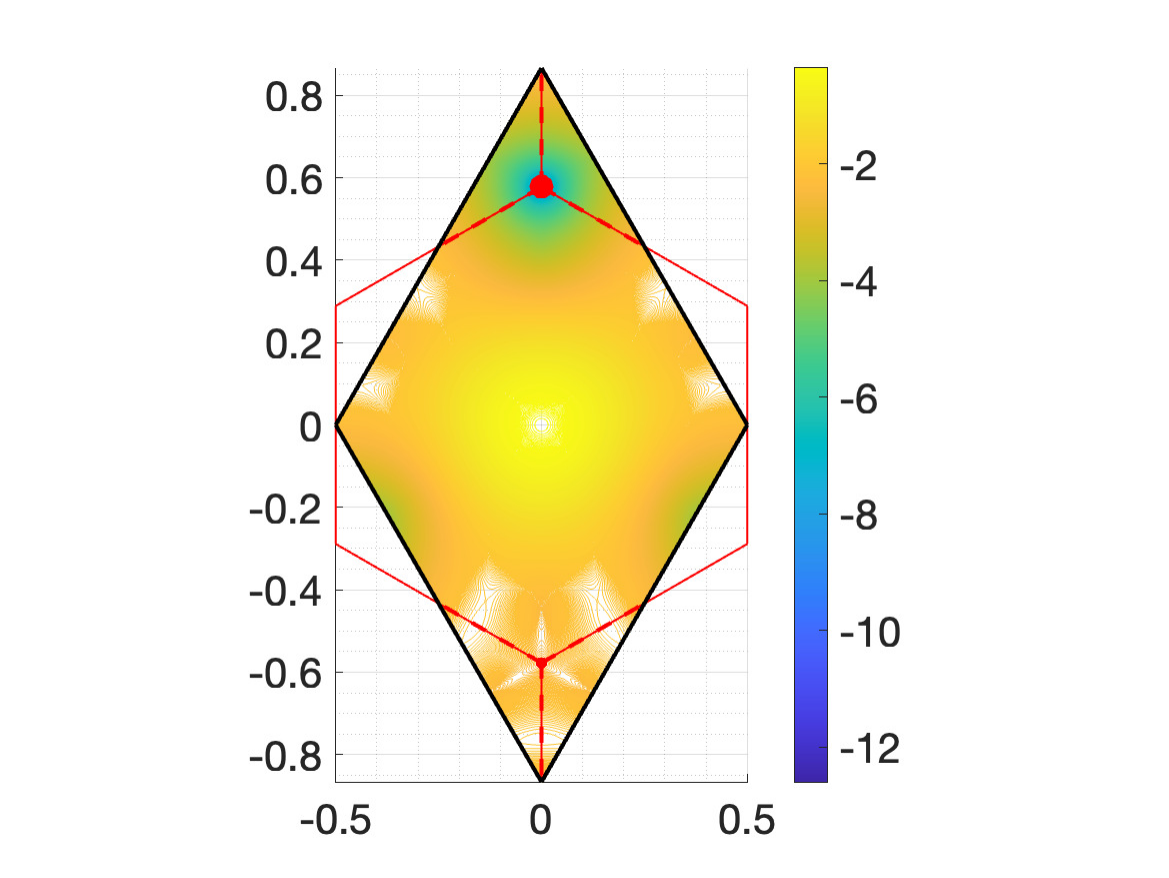}
\includegraphics[trim={3cm 0 0 0},width=7cm]{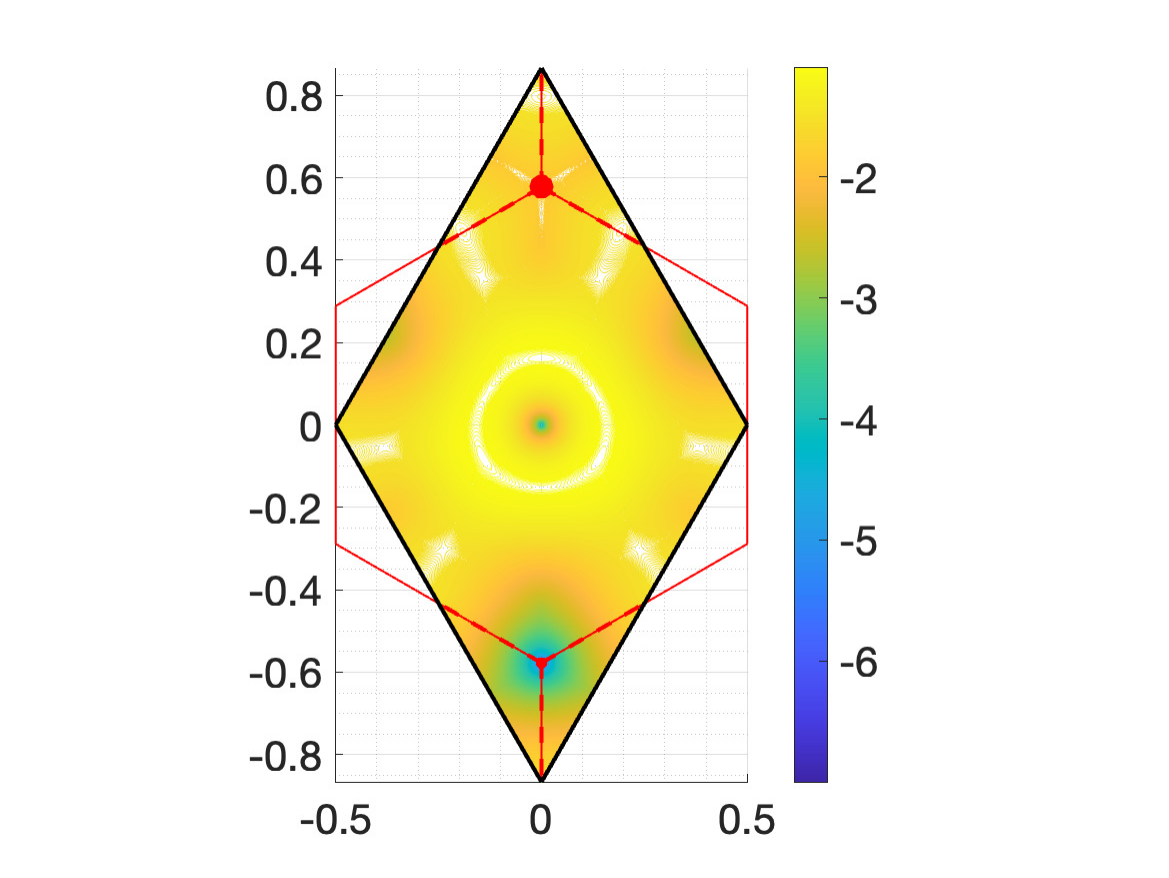}\\
\includegraphics[trim={3cm 0 0 0},width=7cm]{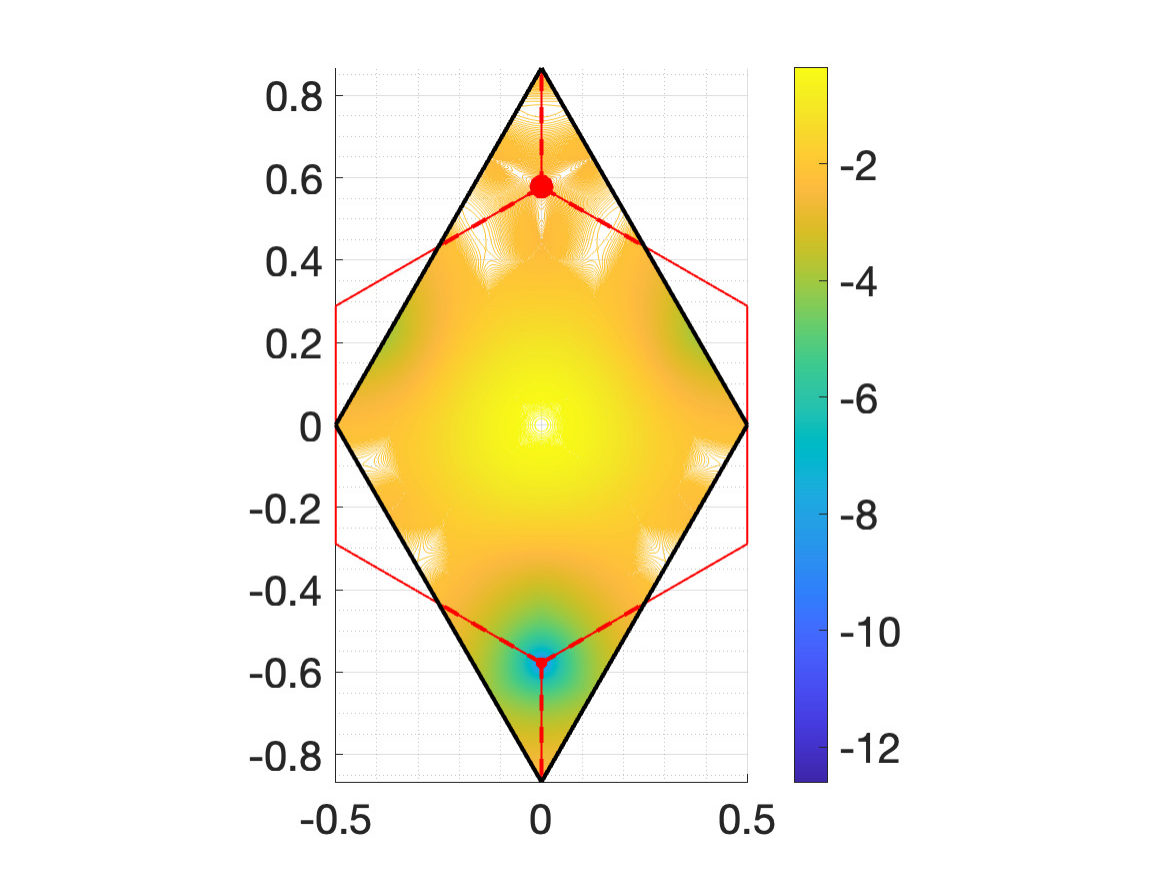}
\includegraphics[trim={3cm 0 0 0},width=7cm]{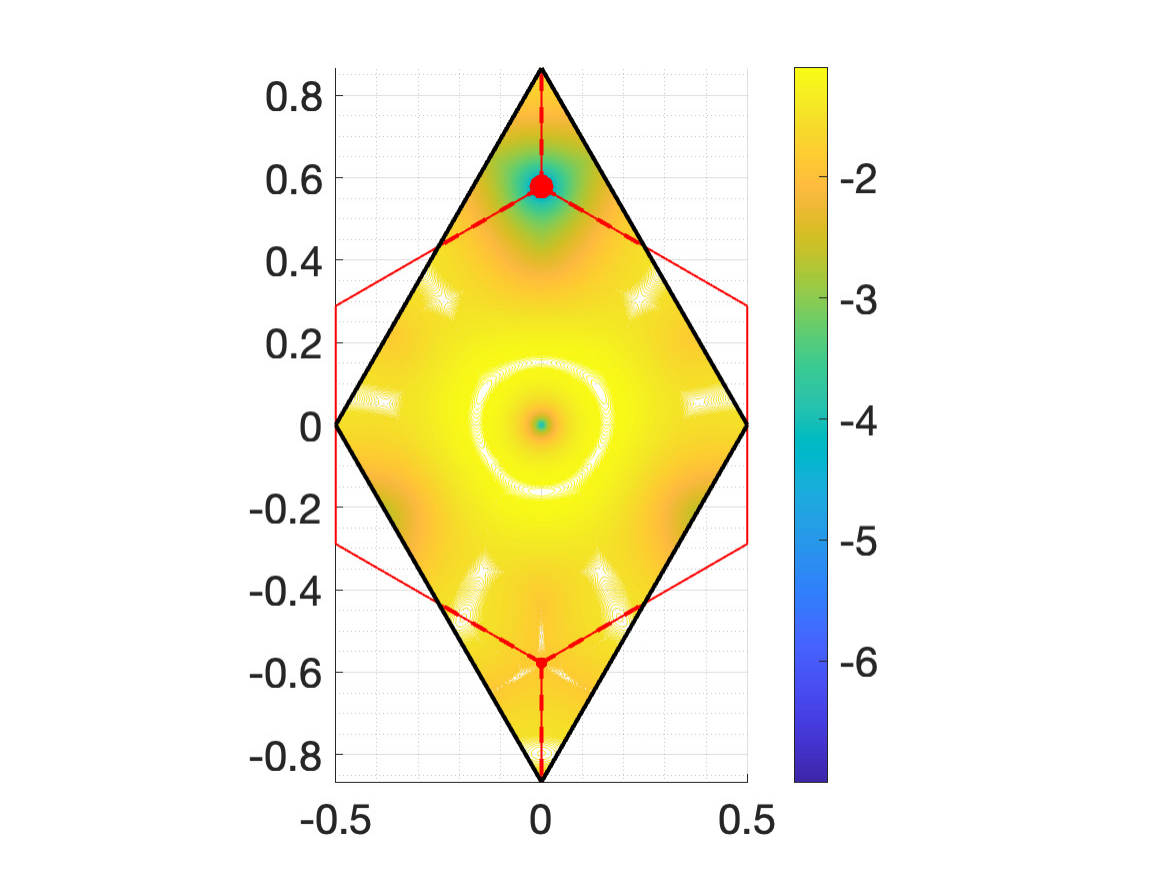}
\caption{\label{fig:zeros}Modulus of flat band wavefunctions of $\ker_{X}(D(\alpha))$ at first magic angle $\alpha=0.853799$ with $X=L^2_{i,j}$ with $i=K$(top), $i=-K$ (bottom), $j=0$ (left), $j=1$ (right) for potential $U_{\pm}= U_2 (\pm \bullet)$ in \eqref{eq:potential}.}
\end{figure}

\begin{figure}
\includegraphics[trim={3cm 0 0 0},width=7cm]{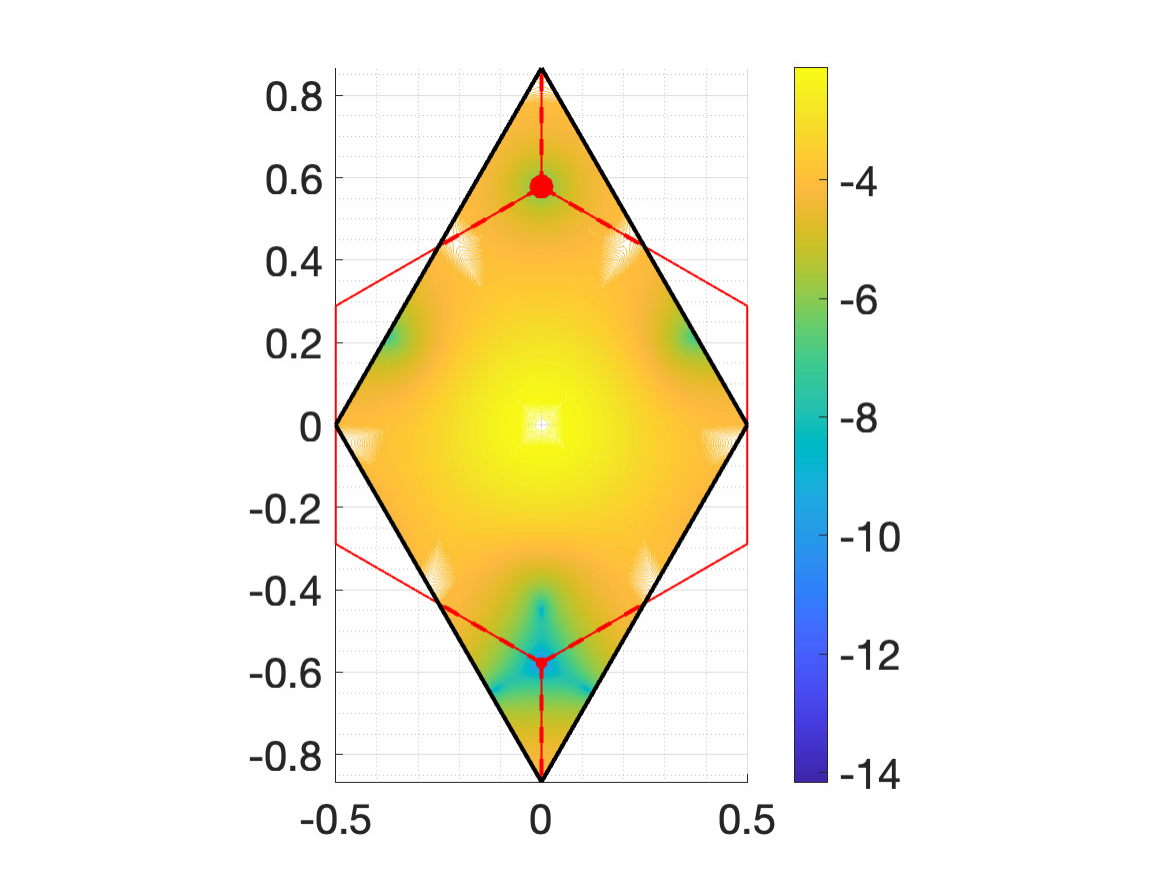}
\includegraphics[trim={3cm 0 0 0},width=7cm]{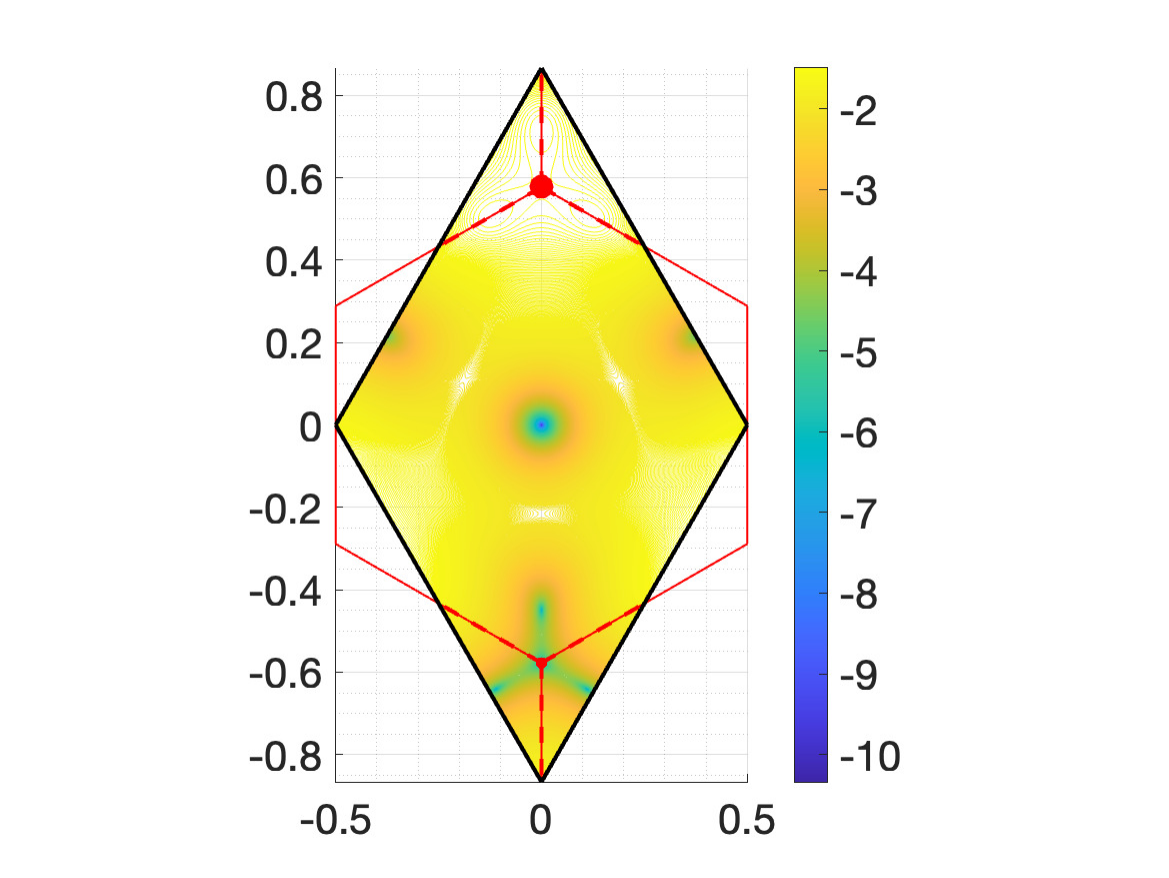}\\
\includegraphics[trim={3cm 0 0 0},width=7cm]{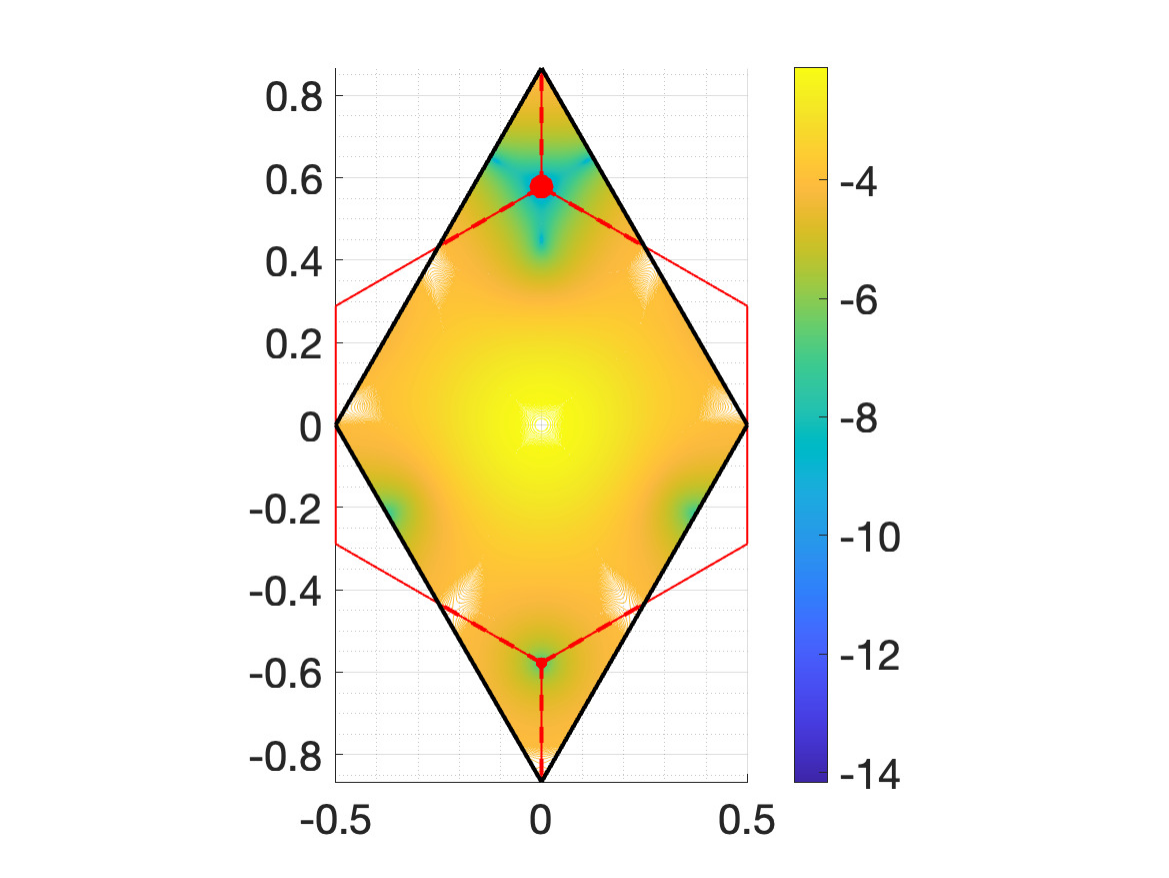}
\includegraphics[trim={3cm 0 0 0},width=7cm]{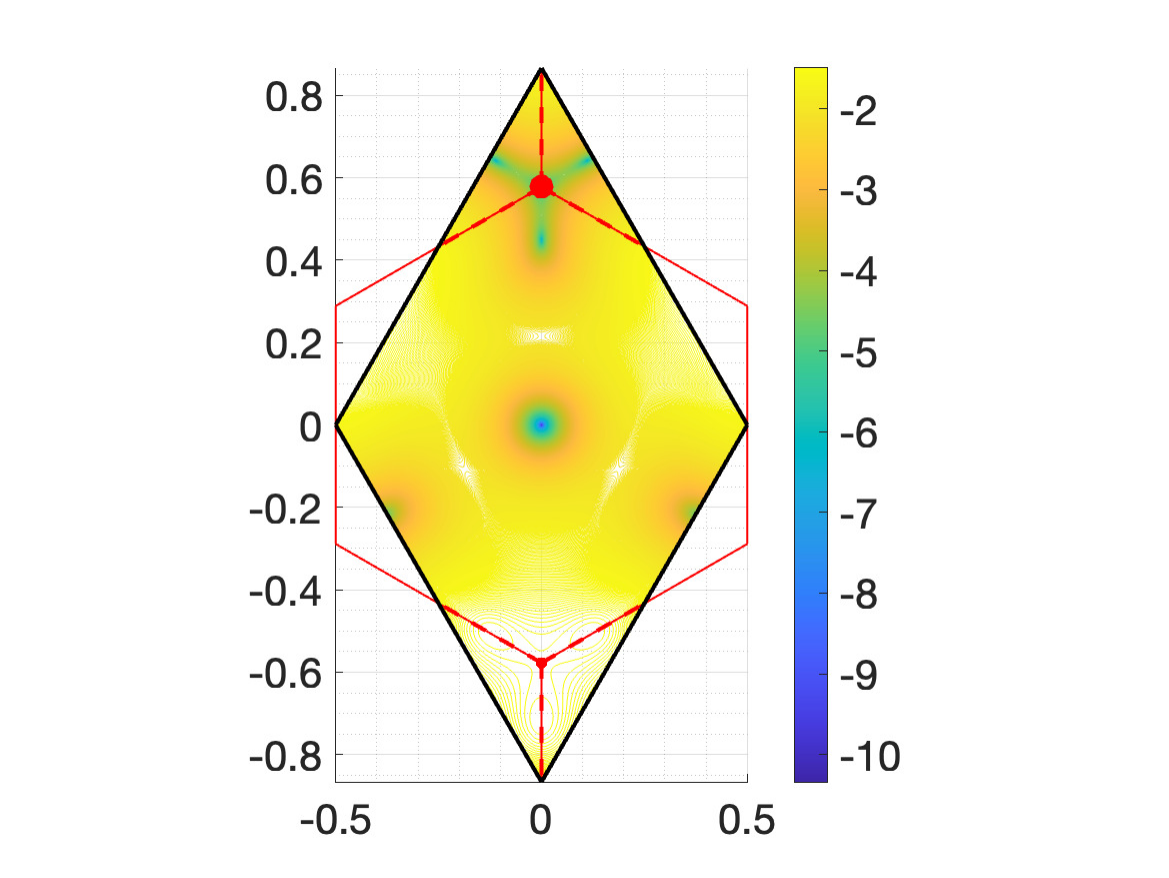}
\caption{\label{fig:zeros2}Flat band wavefunctions of $\ker_{X}(D(\alpha))$ at first magic angle $\alpha=0.853799$ with $X=L^2_{0,0}$ (left) and $X=L^2_{0,1}$ (right) for potential $U_{\pm}= U_2 (\pm \bullet)$ in \eqref{eq:potential} upper component, top and lower component, bottom.}
\end{figure}

\subsection{Rigidity}
\label{sec_rig}
In this subsection we provide
\begin{figure}
\includegraphics[width=7.5cm]{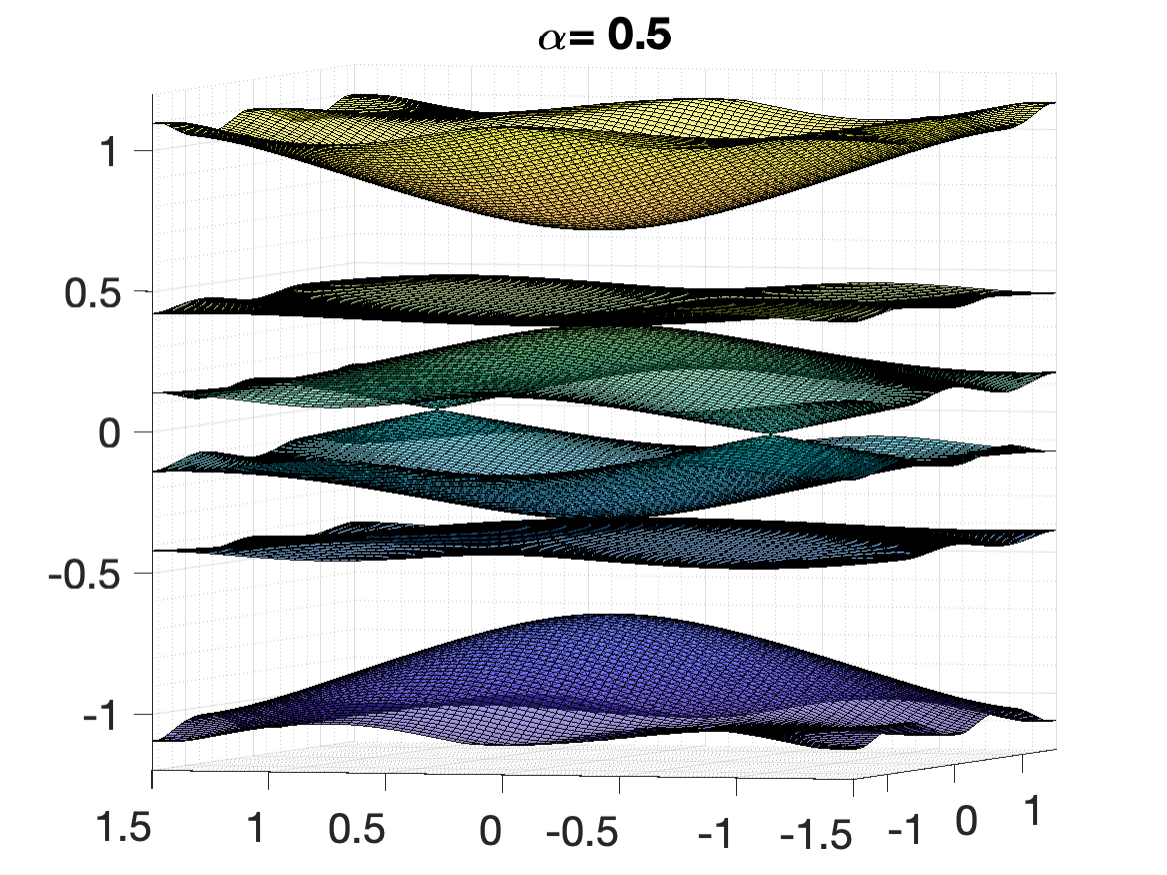}
\includegraphics[width=7.5cm]{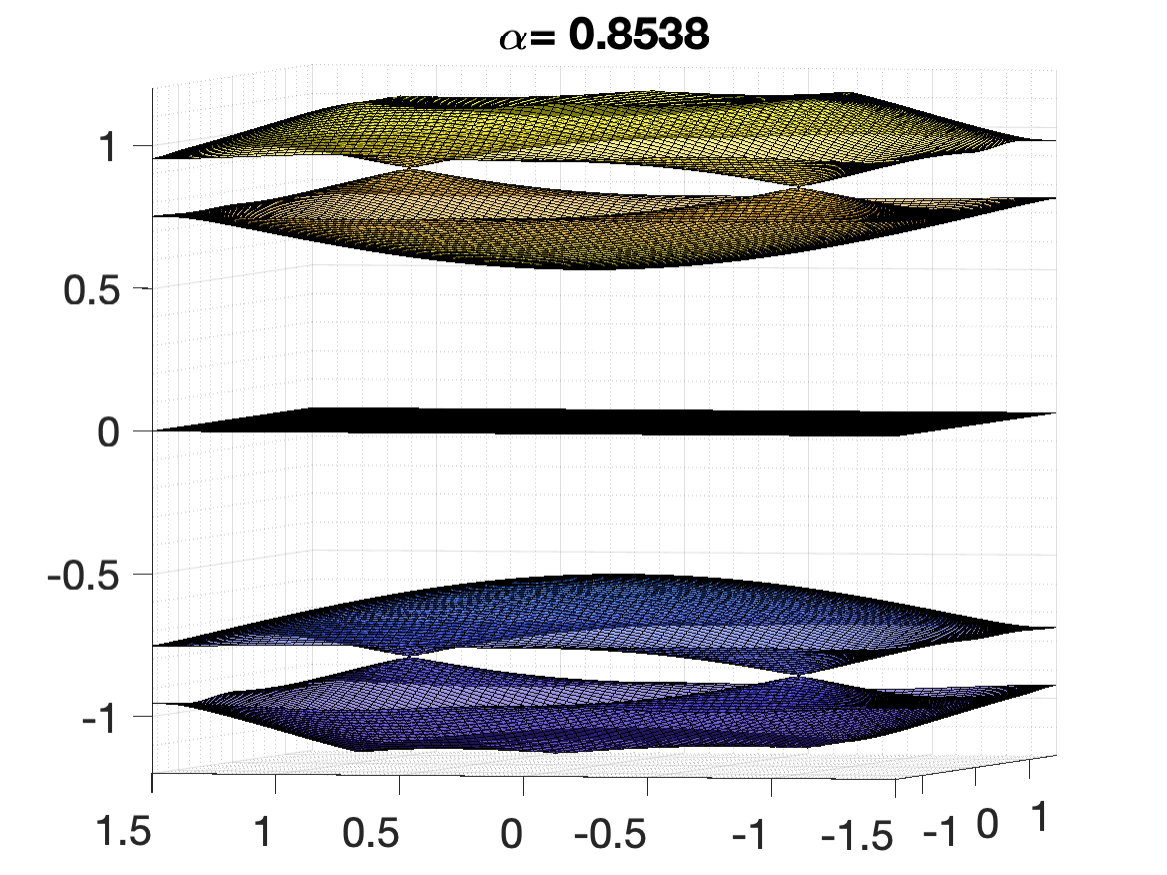}\\
\includegraphics[width=7.5cm]{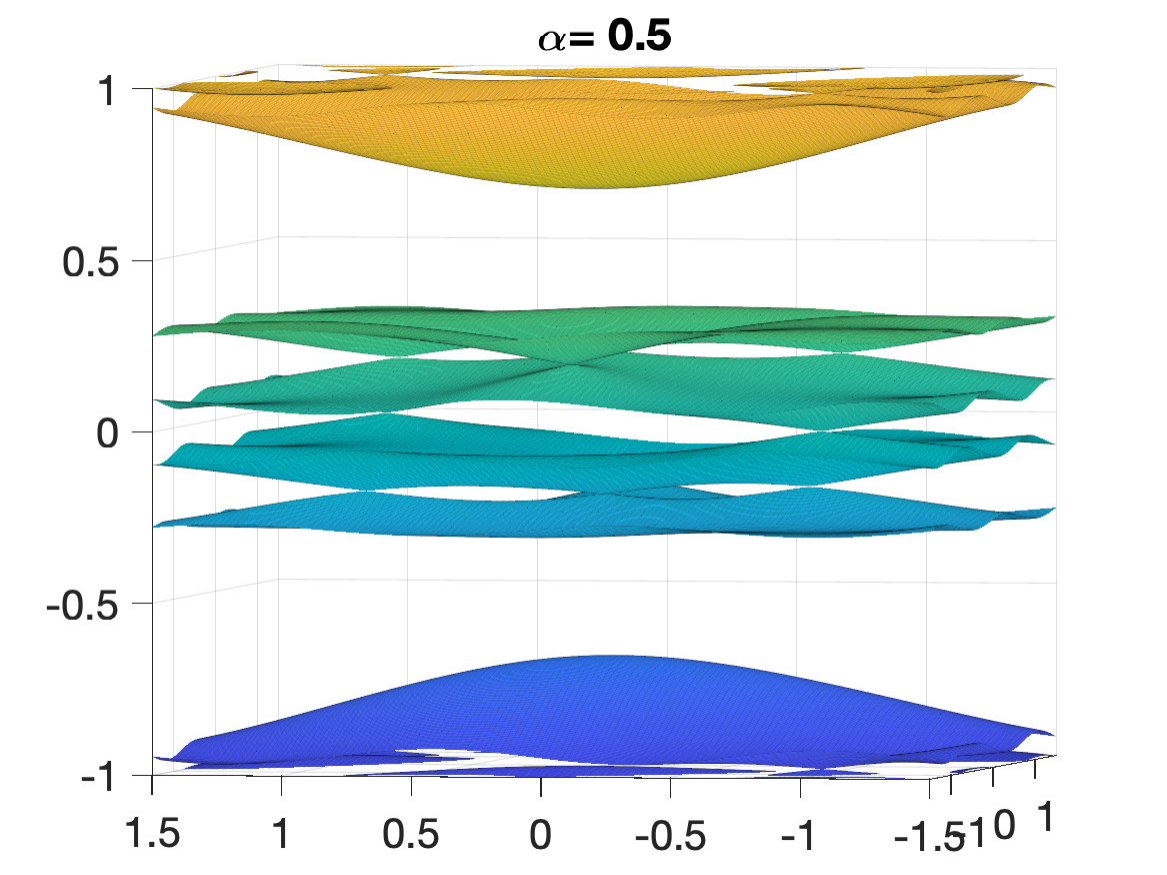}
\includegraphics[width=7.5cm]{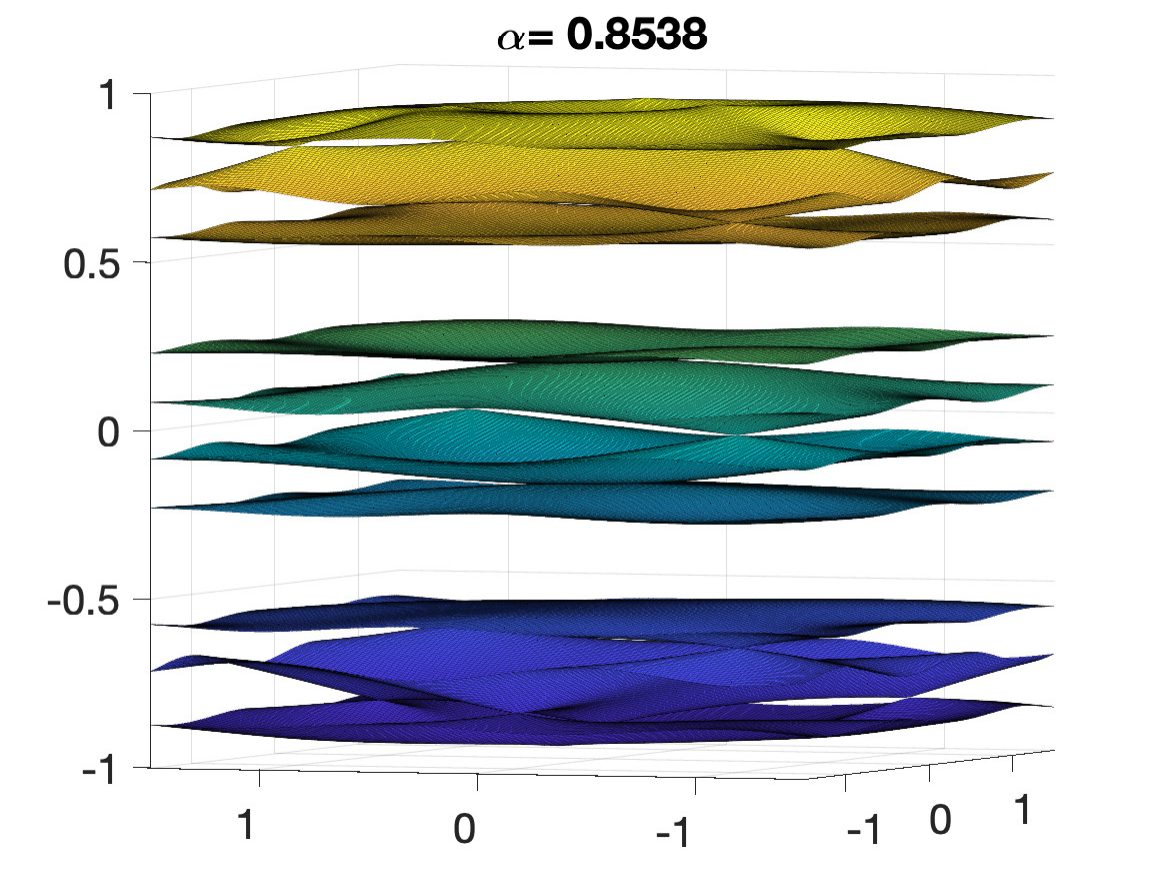}
\caption{Bands of Hamiltonian \eqref{eq:RescPot} at $\alpha=0.5$ (left) and $\alpha=0.8538$ (right) with potential $U_{\pm}:=U_2(\pm \bullet)$ \eqref{eq:potential}. 
Bands of full continuum Bistritzer-MacDonald Hamiltonian \cite[(1)]{suppl} with same $\alpha$ and $\beta = 0.7 \alpha$ with potential $U_{\pm}= U_2 (\pm \bullet)$ \eqref{eq:potential} and anti-chiral potential $V(z) := 2\partial_{z} U_2(z)$. }
\label{fig:beyond_chiral}
\end{figure}

\begin{proof}[Proof of Theorem \ref{theo:rigidity}]
If $ \Omega u ( z ) := u ( \omega z ) $ then 
\begin{equation}
\label{eq:rots}    \Omega D ( \alpha )  = \omega D ( \alpha ) \Omega , \ \ \  
\ker_{L^2_0}(D(\alpha)) = \bigoplus_{p \in \mathbb Z_3} \ker_{L^2_{0,p}}(D(\alpha)).
\end{equation}
Theorem \ref{p:zeros} shows that all non-zero elements of $\ker_{L^2_0}(D(\alpha))$ have $m(\alpha)$ zeros counting multiplicities. 
That allows us to consider the invariant subspaces $L^2_{0,p}$, separately. 

Let $u \in \ker_{L^2_{0,0}}(D(\alpha))$. Since $\omega z_S = z_S -(1+\omega),$ we find
\[\begin{split} u(\pm z_S + \zeta ) &=  u(\pm \omega z_S +  \omega \zeta ) = u(\pm z_S \mp (1+\omega) + \omega \zeta ) \\
&= \operatorname{diag}(e^{-i\langle \mp(1+\omega),K \rangle},e^{i\langle \mp(1+\omega),K \rangle}) \mathscr L_{\mp (1+\omega)} u(\pm z_S + \omega \zeta), \\
&= \operatorname{diag}(\omega^{\pm 1}, \omega^{\mp 1}) u(\pm z_S +  \omega \zeta ) 
\end{split}\]
that, is  $ u ( \pm z_S + \omega \zeta ) =\operatorname{diag}(\omega^{\mp 1}, \omega^{\pm 1})
u ( \pm z_S + \zeta ) $. This shows that 
\[  M_{u}(\pm z_S) \not \equiv 0 \!\!\! \mod 3. \]
  We now recall the symmetry $\mathscr E:L^2_{0,p}(\CC/\Lambda; \CC^2) \to L^2_{0,p}(\CC/\Lambda; \CC^2)$ (valid under the assumption \eqref{eq:defU}):
  \begin{equation}
  \label{eq:Esy} \mathscr E D(\alpha) \mathscr E^* = -D(\alpha) , \ \ \ \mathscr Ev(z):=Jv(-z), \ \ \  J:=\begin{pmatrix} \ \ 0 & 1 \\ -1 & 0 \end{pmatrix}. \end{equation}
It follows that  $ M_{ u}(z_S) = M_{ u}(-z_S) $. 
 In particular, any element of $\ker_{L^2_{0,0}}(D(\alpha))$ has zeros at $z=\pm z_S$ (for both signs). 
    Zeros at any other point are three-fold degenerate by rotational symmetry \eqref{eq:rots} and that 
     shows that 
    \begin{equation}
    \label{eq:mod123}
\sum_{z \in \mathbb{C} / \Lambda} M_{{u}}(z) \bmod 3 \in \{1,2\}.
    \end{equation}

The same conclusion holds for the subspace $\ker_{L^2_{0,1}}(D(\alpha))$  as can been seen using the
properties of the Weierstra{\ss}    $ \wp $-function, $ \wp ( z ):= \wp ( z; \omega , 1 ) $ \cite[\S I.6]{tata}:
\begin{equation*}
\wp ( \omega z ) = \omega \wp (z )  \ \text { and } \ 
  \wp ( z ) = 0  \ \Longrightarrow \ z = \pm z_S + \Lambda,  \ \ \wp' ( \pm z_S ) \neq 0 .
  \end{equation*} 
This shows that     
\begin{equation}
\label{eq:wpz} \wp ( z )\ker_{L^2_{0,1}(\CC/\Lambda; \CC^2)}(D(\alpha)) =  \ker_{L^2_{0,0}(\CC/\Lambda; \CC^2)}(D(\alpha)) \end{equation} 
and hence \eqref{eq:mod123} holds for elements of $\ker_{L^2_{0,1}}(D(\alpha))$ as well.

Finally, suppose that $u \in \ker_{L^2_{0,2}}(D(\alpha))$. Since then 
$ u ( \omega z ) = \omega u ( z ) $ (see \eqref{eq:rots}) 
we see that 
$ M_u(0) \equiv 1\!\!\!\  \mod 3$. 
 As above, we find
\[\begin{split} u(\pm z_S + \zeta ) &=\bar \omega  u(\pm \omega z_S +  \omega \zeta ) = 
\bar \omega u(\pm z_S \mp (1+\omega) + \omega \zeta) \\
&=\bar \omega \operatorname{diag}(e^{-i\langle \mp(1+\omega),K \rangle},e^{i\langle \mp(1+\omega),K \rangle}) \mathscr L_{\mp (1+\omega)} u(\pm z_S +  \omega \zeta) \\
&=\bar \omega \operatorname{diag}(\omega^{\pm 1}, \omega^{\mp 1}) u(\pm z_S + \omega \zeta) 
\\ &=  \operatorname{diag}(\omega^{\frac12 ( 1 \mp 1) }, \omega^{\frac12 ( 1 \pm 1) }) u(\pm z_S 
+ \omega \zeta) 
,\end{split}\]
that is $ u ( \pm z_S + \omega \zeta ) = 
\operatorname{diag}(\omega^{\frac12 ( \pm 1 -1 ) }, \omega^{\frac12 (\mp 1 - 1) }) 
u ( \pm z_S +  \zeta )
$.  Hence if $ u ( z_S + \zeta ) = ( u_1 ( \zeta ) , u_2 ( \zeta ) )^t $ 
then $  u_1 ( \omega \zeta ) = u_1 ( \zeta ) $ and $ u_2 ( \omega \zeta ) = \bar \omega u_2 ( \zeta ) $
and the order of the zero of $ u_1 $ is 0 mod 3 and that of $ u_2 $ is 2 mod 3. This shows that 
$ M_{u}(\pm z_S)  \bmod 3 \in \{0,2\} $. 
 Using the $\mathscr E$ symmetry \eqref{eq:Esy}, we conclude that 
 \begin{equation}
 \label{eq:Muz} M_{u}( z_S)  + M_{u} ( - z_S ) \not \equiv 2 \! \!\! \mod 3 . \end{equation}
 Other points $z \notin \{0,\pm z_S\}$ satisfy $M_u(z)=M_{u}(\omega z) = M_{u}(\omega^2 z)$ by rotational symmetry. Combining this observation with the fact that the multiplicity as $ 0$ is 1 mod 3 and \eqref{eq:Muz} we obtain 
    \[
\sum_{z \in \mathbb{C} / \Lambda} M_{{u}}(z) \not \equiv 0 \! \!\!  \mod 3 . 
    \]
In all three cases, the total multiplicity of zeros is not divisible by $3$, ruling out the case $m(\alpha)  \equiv 0 \bmod 3 $. 

We also see that if $ m ( \alpha ) = 1 $ then $ \dim \ker_{L^2_{0,2} } D ( \alpha ) = 1 $, as otherwise
\eqref{eq:wpz} would imply $ m ( \alpha ) > 1 $. If $ m (\alpha ) = 2 $ the only possibility is 
$ \dim \ker_{ L^2_{0,0} } D ( \alpha ) = \dim \ker_{L^2_{0,1} } D ( \alpha ) = 1 $. 
\end{proof}

The location of zeros for flat band Bloch functions at a degenerate magic angle are illustrated in Figures \ref{fig:zeros} and \ref{fig:zeros2}.

\section{Trace computations}
\label{s:trace}

To prove the existence of degenerate magic angles (Theorem \ref{mult}) we argue by contradiction
using the Birman--Schwinger operator $ T_k $ defined in \eqref{eq:BS}. 
From Theorem \ref{theo:rigidity}, we see that in the case if all the $ \alpha $'s were all simple 
then the traces of $ T_k^{2p}  $ restricted to $ L^2_{0,0} $ or $ L^2_{0,1} $ would have to 
vanish.  For a general $ k $, the operator $T_k$ does \emph{not} preserve the rotational invariant subspaces $L^2_{0,j}$.  To achieve that we set $ k = 0 $ so that the proof reduces to showing that $\tr((T_0)_{L^2_{0,0}}^{2\ell})\neq 0$ for some value of $\ell$. That is done using the previous rationality condition $\tr((T_0)_{L^2_{0,0}}^{2\ell})=q_{\ell}\pi/\sqrt 3$ for $q_{\ell}\in \mathbb Q$ obtained before by the authors \cite[Theorem 1]{bhz1} and some elementary arguments involving transcendental numbers. 

\subsection{Traces on rotationally invariant subspaces}
\label{tracessection}
 We recall that an orthonormal basis of $L_{0}^2(\mathbb C/3\Lambda; \mathbb C)$ is given by setting 
$$e_{\nu}(z):=e^{i\langle \nu,z\rangle}/\sqrt{\mathrm{Vol}(\mathbb C/3\Lambda)},\quad \nu \in \Lambda^*+K,\ \ \langle \nu,z\rangle:=\mathrm{Re}(\bar z \nu).  $$
We see that $ \Omega e_{\nu}=e_{\bar \omega \nu}.$ This means that an orthonormal basis of $L^2_{{0,j}}$ is given by
$$e_{[\nu]}(z)=\frac{1}{\sqrt3}\big(e_{\nu}(z)+\omega^j e_{\omega\nu}(z)+\bar\omega^{j}e_{\bar\omega \nu}(z)\big),\;\; \nu \in\Lambda^*+K,\;\; [\nu]=\{\nu,\omega\nu,\bar\omega\nu\}.$$
Following our approach developed in \cite{bhz1}, we compute the sum of powers of magic angles by computing traces of the operator $T_k$ defined in \eqref{eq:defTk}. Since odd powers of $T_k$ have vanishing traces it suffices to compute the traces of powers of the Hilbert-Schmidt operator
\begin{equation}
\label{eq:Ak}
A_k:=R(k)U(z)R(k)U(-z):L^2_{0}\to L^2_{0},\quad k\notin ( K +  \Lambda^*) 
\cup ( - K + \Lambda^* ) =\mathcal K_0.
\end{equation}
Due to the relation
$$\forall k\notin \mathcal K_0,\quad \Omega^{-1} A_k\Omega=A_{\omega k}, $$
we see that subspaces $L^2_{0,j}$ are \emph{not} in general invariant by $A_k$. This makes a direct 
application of the strategy of \cite{bhz1} impossible. However, we see that the operator $A_0$ does preserve this smaller subspace. From now on, we therefore specialize to $k=0$.
For $\ell\geq 2$, one can compute the trace on $L^2_{{0,j}}$:
$$\tr\big((A_0)^{\ell}_{|L^2_{{0,j}}}\big)=\sum_{[\nu],\nu \in \Lambda^*+K}\langle A_0^{\ell}e_{[\nu]},e_{[\nu]}\rangle. $$
Now, we write that, using bilinearity of the scalar product
$$3\langle A_0^{\ell}e_{[\nu]},e_{[\nu]}\rangle=\sum_{h=0}^2\langle A_0^{\ell}e_{\omega^h\nu},e_{\omega^h\nu}\rangle+\sum_{k\neq h}\omega^{j(k-h)}\langle A_0^{\ell}e_{\omega^h\nu},e_{\omega^k\nu}\rangle.$$
Thus, when summing on $[\nu]$, the first term gives a third of the trace on $L^2_{0}$, (which was computed in \cite{beta} for $\ell=2$ and $U_0=U_1$ and shown to be equal to $4\pi/\sqrt 3$)
\begin{equation}
\label{eq:Traces}
\begin{split} 
\tr((A_0)^{\ell}_{|L^2_{{0,j}}}) & =\tfrac 1 3\tr(A_0^{\ell})+\tfrac 1 3\sum_{[\nu],\nu \in \Lambda^*+K}\sum_{k\neq h}\omega^{j(k-h)}\langle A_0^{\ell}e_{\omega^h\nu},e_{\omega^k\nu}\rangle\\
& =:\tfrac 1 3\tr(A_0^2)+\tfrac 1 3\mathcal R_{\ell,j}. \end{split} 
\end{equation}
\subsection{Existence of degenerate magic angles}
Our strategy now consists in using \cite[Theorem1]{bhz1} and the fact that $\pi/\sqrt 3$ is transcendental to contradict the conclusion of Theorem \ref{theo:rigidity}. More explicitly, we will prove the following statement:

\begin{theo}
\label{General}
We consider the Hamiltonian \eqref{eq:RescPot} with a potential $U \in C^{\infty}(\mathbb C/3\Lambda) $ and  $U_+:=U$ and $U_-:=U(-\bullet)$ satisfying the first two symmetries of \eqref{eq:news} with only finitely many non-zero Fourier modes $a_p\in \pi \mathbb Q(\omega/\sqrt 3)$, appearing in the decomposition \eqref{eq:Fourier}. Then, if we denote $\mathcal A(U)$ the set of $($complex$)$ magic angles for the potential $U$ and if $\mathcal A(U)\neq \emptyset$, there exists $\alpha \in \mathcal A(U)$ which is not simple. This also applies to the Bistritzer–MacDonald potential $U_+:=U_1$ defined in \eqref{eq:potential}.
\end{theo}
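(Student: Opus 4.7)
The plan is to argue by contradiction. Suppose every $\alpha \in \mathcal{A}(U)$ is simple. Theorem~\ref{theo:rigidity} then forces $\ker_{L^2_{0,0}} D(\alpha) = \ker_{L^2_{0,1}} D(\alpha) = 0$ for every magic $\alpha$. Combining this with Proposition~\ref{p:magicD} and the identity $T_0^2 = \operatorname{diag}(A_0, B_0)$, and using that the subspaces $L^2_{0,j}$ are invariant under $A_0$ at $k=0$, the nonzero spectrum of $A_0|_{L^2_{0,0}}$ (and likewise of $A_0|_{L^2_{0,1}}$) is empty. Consequently, for every $\ell \geq 2$,
\[
\tr\bigl((A_0)^{\ell}|_{L^2_{0,0}}\bigr) = \tr\bigl((A_0)^{\ell}|_{L^2_{0,1}}\bigr) = 0.
\]

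Next I would compute this restricted trace explicitly via the Fourier expansion \eqref{eq:Fourier} of $U$ and the decomposition \eqref{eq:Traces}. Under the hypothesis that the coefficients $a_p$ lie in $\pi\mathbb{Q}(\omega/\sqrt{3})$, the trace formula of \cite[Theorem~1]{bhz1}, applied both to the full trace $\tr(A_0^{2\ell})$ and to the off-diagonal remainder $\mathcal{R}_{2\ell, 0}$ assembled from $\langle A_0^{2\ell} e_{\omega^h \nu}, e_{\omega^k \nu}\rangle$ with $h\neq k$, expresses the restricted trace in the form
\[
\tr\bigl((A_0)^{2\ell}|_{L^2_{0,0}}\bigr) = q_\ell\, \pi/\sqrt{3}, \qquad q_\ell \in \mathbb{Q}.
\]
Since $\pi/\sqrt{3}$ is transcendental, the vanishing deduced from the simplicity assumption is equivalent to $q_\ell = 0$ for every $\ell \geq 2$.

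To close the argument it suffices to exhibit one $\ell$ for which $q_\ell \neq 0$. I would carry this out for the smallest accessible exponent, $\ell = 2$, where the sum defining $q_2$ reduces to a finite combinatorial expression over a small number of orbits of the $\mathbb{Z}_3$-rotation acting on $\Lambda^* + K$. For the Bistritzer--MacDonald potential $U_1$, whose Fourier support consists of the single orbit of $p = 0$ under the map $\kappa$ described after \eqref{eq:Fourier}, the resulting expression is an explicit rational that one can evaluate directly and verify to be nonzero. The same procedure applies to general $U$ in the admissible class provided the stated non-degeneracy condition on $(a_p)$ rules out accidental cancellation; this yields the contradiction and hence Theorem~\ref{mult}.

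The main obstacle is the final nonvanishing step for $q_\ell$. The diagonal contribution $\tfrac{1}{3}\tr(A_0^{2\ell})$ and the off-diagonal remainder $\tfrac{1}{3}\mathcal{R}_{2\ell, 0}$ both arise from lengthy lattice residue sums, and the interference between them can in principle produce extensive cancellation. Controlling this requires careful bookkeeping of orbits under the rotation $\Omega$, together with an adaptation of the residue computations of \cite{bhz1} that tracks the refined structure of the invariant subspace $L^2_{0,0}$ rather than the full $L^2_0$. Extending the single-orbit calculation for $U_1$ to the entire admissible class is where the additional non-degeneracy hypothesis must be used.
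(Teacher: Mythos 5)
Your overall framework is correct---reduce via Theorem~\ref{theo:rigidity} to showing that $\tr\bigl((A_0)^{\ell}|_{L^2_{0,0}}\bigr)\neq 0$ for some $\ell\geq 2$, using that this trace would vanish if every magic $\alpha$ were simple. But there are two connected errors in how you close the argument.

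First, your formula
\[
\tr\bigl((A_0)^{2\ell}|_{L^2_{0,0}}\bigr)=q_{\ell}\,\pi/\sqrt{3},\qquad q_{\ell}\in\mathbb{Q},
\]
is wrong. The decomposition \eqref{eq:Traces} gives the restricted trace as $\tfrac13\tr(A_0^{\ell})+\tfrac13\mathcal{R}_{\ell,j}$. By \cite[Theorem~5]{bhz1} the unrestricted term $\tr(A_0^{\ell})$ lies in $\pi\,\mathbb{Q}(\omega)$ and is therefore transcendental whenever it is nonzero. But the off-diagonal remainder $\mathcal{R}_{\ell,j}$ does \emph{not} carry a factor of $\pi$: it is a \emph{finite} sum (because $U$ has finitely many Fourier modes, so each matrix element $\langle A_0^{\ell}e_{\omega^h\nu},e_{\omega^k\nu}\rangle$ vanishes unless $|\omega^h\nu-\omega^k\nu|$ is bounded, which confines $\nu$ to a finite set on the lattice), and under the hypothesis $a_p\in\pi\mathbb{Q}(\omega/\sqrt{3})$, each summand lies in $\mathbb{Q}(\omega)$. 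Hence $\mathcal{R}_{\ell,j}\in\mathbb{Q}(\omega)$ is \emph{algebraic}, not a rational multiple of $\pi$. The explicit numerics in Proposition~\ref{prop:Tristanconvex} already show this: $\tr\bigl((A_0^{2})_{|L^2_{0,1}}\bigr)=\tfrac{4\pi}{3\sqrt{3}}-3$, which is not of the form $q\,\pi/\sqrt{3}$.

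Second---and this is the missing idea---once one knows the restricted trace is the sum of a nonzero transcendental number and an algebraic number, it is automatically nonzero. No explicit computation of a coefficient $q_\ell$ is needed, and none is performed in the paper. You identify ``the final nonvanishing step for $q_\ell$'' as the main obstacle and propose a direct residue computation at $\ell=2$; this is both harder than necessary and, because of the algebraic constant in the remainder, not even well-posed in the form you state it. The correct closing move is purely structural: the assumption $\mathcal{A}(U)\neq\emptyset$ provides some $\ell\geq 2$ with $\tr(A_0^{\ell})\neq 0$ (via the Fredholm determinant), whence $\tr(A_0^{\ell})$ is transcendental, whence $\tfrac13\tr(A_0^{\ell})+\tfrac13\mathcal{R}_{\ell,j}\neq 0$, contradicting the vanishing you (correctly) derived from assuming all magic angles are simple. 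Your plan would still need to be supplemented by the finiteness-of-the-sum argument and the observation that each surviving term lies in $\mathbb{Q}(\omega)$ before any conclusion can be drawn.
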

\begin{proof}
\underline{Step 1 (Existence of non-zero trace):} We start by noticing that the existence of a magic angle is equivalent to the existence of a non-vanishing trace 
$$\exists \ell\geq 2,\quad \tr((A_0)_{|L^2_0}^{\ell})\neq 0. $$
This follows from the properties of the regularized Fredholm determinant, cf. \cite{bhz1}. 

\noindent
\underline{Step 2 (Trace is transcendental):}  We fix such an $\ell$ for which $\tr((A_0)_{|L^2_0}^{\ell})\neq 0$. Using \cite[Theorem $5$]{bhz1}, and the hypothesis on the potential, this implies that $\tr(A_0^{\ell})\in \pi \mathbb Q(\omega)$. Since the trace is non-zero by assumption, this proves that $\tr(A_0^{\ell})$ is transcendental.  

\noindent
\underline{Step 3 ($\mathcal R_{\ell,j}$ is a finite sum):} We now show that the sum defining the remainder $\mathcal R_{\ell,j}$ in \eqref{eq:Traces} is always a finite sum, under the assumption that the potential has only finitely many non-zero Fourier mode. 
We start with the formula defining the remainder
$$\mathcal R_{\ell,j}:= \sum_{[\nu],\nu \in \Lambda^*+K}\sum_{k\neq h} \omega^{j(k-h)}\langle A_0^{\ell}e_{\omega^h\nu},e_{\omega^k\nu}\rangle.$$
The summand $\langle A_0^{\ell}e_{\omega^h\nu},e_{\omega^k\nu}\rangle$ is non-zero only if $ A_0^{\ell}e_{\omega^h\nu}$ has a non-vanishing Fourier mode corresponding to $e_{\omega^k\nu}$. Now, if we look at the definition of $A_0$ (see \ref{eq:Ak}), we see that the $R(k)$ part acts diagonally (with coefficients in $ ( i \pi)^{-1} \mathbb Q(\omega)$ as we chose $k=0$) on the Fourier basis, on the other hand, the $U(z)$ and $U(-z)$ parts act as a finite sum of weighted shifts on this basis (it is here where we use the assumption of having finitely many non-vanishing Fourier modes). Moreover, by assumption, the weights are elements of $(i\pi)\mathbb Q(\omega)$.This means that there exists a finite subset $\mathcal F_U^{\ell}\subset 3\Gamma^*$ such that
\begin{equation}
\label{eq:a_eta}
\forall \nu \in \Lambda^*+K,\quad A_0^{\ell}e_{\nu}=\sum_{\eta\in \mathcal F_U^{\ell}} a_{\eta}e_{\nu+\eta},\quad a_{\eta}\in \mathbb Q(\omega).  
\end{equation}
But this means that there exists a constant $R>0$ such that for any $\eta \in \mathcal F_U^{\ell}$, we have $|\eta|\leq R$. In particular, if $\langle A_0^{\ell}e_{\omega^h\nu},e_{\omega^k\nu}\rangle$ is non-zero, then $|\omega^h\nu-\omega^k\nu|\leq R$. Now, because $h\neq j$, this inequality is false outside a compact set for $\nu$. But because $\nu$ is on a lattice, which is discrete, we conclude that the above inequality is true for at most a finite number of $\nu$. Thus, the sum defining $\mathcal R_{\ell,j}$ is finite.

\noindent
\underline{Step 4 $\mathcal R_{\ell,j} \in \mathbb Q(\omega)$:} Finally, for the non-zero terms of the sum, we use \eqref{eq:a_eta} again to conclude that $\langle A_0^{\ell}e_{\omega^h\nu},e_{\omega^k\nu}\rangle=a_{\eta}\in \mathbb Q(\omega).$ This proves that $\mathcal R_{\ell,j} \in \mathbb Q(\omega).$

\noindent
\underline{Step 5 (Proof by contradiction):} Since $\mathcal R_{\ell,j}\in \mathbb Q(\omega)$ is algebraic and thus $\tr((A_0)^{\ell}_{|L^2_{{0,j}}})\neq 0$ by \eqref{eq:Traces}. This contradicts the conclusion of Theorem \ref{theo:rigidity}; thus proving the existence of non-simple magic angle for the potential $U$.
\end{proof}

\section{Infinite number of degenerate magic angles}
\label{s:inf}

We now adapt the argument, already used in \cite[Theorem $6$]{bhz1}, to prove that the number of non-simple magic angles is actually infinite. This actually refines the previous theorem by showing there is an infinite number of non-simple magic angles.

In the next theorem we use the same notation and assumptions as in Theorem \ref{General}. The 
definition of multiplicity is given in \eqref{eq:U2A}. 
\begin{theo}
\label{sec:races_to_infinity}
Let 
\[  \mathcal A_m(U) := \{ \alpha \in \mathcal A ( U ) : m_U ( \alpha ) \geq 2 \} \]
be  the set of non-simple magic angles. Then
\begin{equation}
\label{eq:implication} 
|\mathcal A(U)|>0 \
  \Longrightarrow \  |\mathcal A_m(U)|=+\infty. 
\end{equation}
In particular, the set of magic angles for the Bistritzer--MacDonald potential $ U =U_1$ (see \eqref{eq:potential}) is infinite.

In addition, if for $N\geq 0$, and  $a=(a_{p})_{\{p\in \Lambda^*; \Vert p \Vert_{\infty} \leq N\}}$,  $U_a$ is given by \eqref{eq:Fourier} with coefficients $a$, then \eqref{eq:implication} holds for a generic $($in the sense of Baire$)$ set of coefficients $a=(a_{p})_{\{p\in \Lambda^*; \Vert p \Vert_{\infty} \leq N\}}\in \mathbb C^{(2N+1)^2}$ which contains $(\pi\mathbb Q(\omega/\sqrt{3}))^{(2N+1)^2}.$ Here, we used the notation $\|p\|_{\infty}=\|\tfrac{4\pi i}{\sqrt 3}(p_1+p_2\omega)\|_{\infty}:=\max(p_1,p_2).$
\end{theo}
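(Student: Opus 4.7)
The plan is to argue by contradiction, adapting the approach of \cite[Theorem 6]{bhz1} to the restricted trace $\tau_\ell := \tr((A_0)^\ell_{|L^2_{0,0}})$. Suppose that $\mathcal A_m(U) = \{\alpha_1,\dots,\alpha_N\}$ is finite. By Theorem \ref{theo:rigidity} together with Proposition \ref{prop:equiv_spec} (which identify the non-zero eigenvalues of $(A_0)_{|L^2_{0,0}}$ with the squared reciprocals of non-simple magic angles), the only non-zero eigenvalues of $(A_0)_{|L^2_{0,0}}$ are $\{\alpha_j^{-2}\}_{j=1}^N$, with certain algebraic multiplicities $n_j\geq 1$. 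Hence for each $\ell\geq 2$,
\[
  \tau_\ell \;=\; \sum_{j=1}^N n_j\,\alpha_j^{-2\ell},
\]
and $\{\tau_\ell\}$, being a finite linear combination of geometric sequences, satisfies a linear recurrence of order $\leq N$ with characteristic polynomial $\prod_j(x-\alpha_j^{-2})$.

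I would then combine this with identity \eqref{eq:Traces} and the arithmetic inputs already exploited in the proof of Theorem \ref{General}: under the assumption on the Fourier coefficients $a_p$, \cite[Theorem 1]{bhz1} gives $\tr(A_0^\ell) = q_\ell\pi/\sqrt 3$ with $q_\ell\in\mathbb Q(\omega)$, while the finite-Fourier-support argument of Theorem \ref{General} yields $\mathcal R_{\ell,0}\in\mathbb Q(\omega)$. Substituting into \eqref{eq:Traces} produces
\[
  3\sum_{j=1}^N n_j\,\alpha_j^{-2\ell} \;=\; q_\ell\,\frac{\pi}{\sqrt 3} + \mathcal R_{\ell,0}.
\]
Applying the recurrence of the first step to both sides and invoking the transcendence of $\pi$ over the algebraic closure of $\mathbb Q(\omega)$, the ``$\pi$-part'' $\{q_\ell\}$ and the ``algebraic part'' $\{\mathcal R_{\ell,0}\}$ must each inherit a linear recurrence of order $\leq N$. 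Consequently, the generating series $\sum_{\ell\geq 2}q_\ell t^\ell$ is rational with poles contained in the finite set $\{\alpha_j^{-2}\}_{j=1}^N$.

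The contradiction comes from the infinitude of simple magic angles guaranteed by \cite[Theorem 6]{bhz1}: using $\tr((A_0)^\ell_{|L^2_{0,2}}) = \tr(A_0^\ell) - 2\tau_\ell$, the simple angles contribute an honest infinite sum $\sum_{\alpha\in\mathcal A_s(U)} m_2(\alpha)\alpha^{-2\ell}$ of pairwise distinct geometric sequences, which cannot satisfy any finite linear recurrence. This contradicts the rationality of $\sum_\ell q_\ell t^\ell$ derived above. The main technical obstacle lies in the transcendence decoupling step, where one must rigorously propagate the finite-order recurrence from $\{\tau_\ell\}$ to $\{q_\ell\}$ alone, carefully using the $\mathbb Q(\omega)$-linear independence of $1$ and $\pi$ at every order $\ell$.

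For the genericity assertion, the set $\mathcal U_M := \{a\in\mathbb C^{(2N+1)^2} : |\mathcal A_m(U_a)| \geq M\}$ is open for each $M\in\mathbb N$, by the upper-semicontinuity of spectral multiplicities under the holomorphic family $a\mapsto A_0(U_a)$. Hence $\{a : |\mathcal A_m(U_a)| = +\infty\}=\bigcap_M \mathcal U_M$ is a $G_\delta$, and its density follows from the density of the subset $(\pi\mathbb Q(\omega/\sqrt 3))^{(2N+1)^2}\subset\mathbb C^{(2N+1)^2}$, for which the implication \eqref{eq:implication} has just been established.
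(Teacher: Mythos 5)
Your overall strategy -- use Theorem \ref{theo:rigidity} to identify the eigenvalues of $(A_0)|_{L^2_{0,0}}$ with reciprocal squares of non-simple magic angles, and then exploit the arithmetic decomposition of the traces (from \eqref{eq:Traces} and \cite[Theorem 1]{bhz1}) against the transcendence of $\pi$ -- is the right one, and it matches the paper's setup. However, you correctly flag the ``transcendence decoupling'' as the main obstacle, and that is indeed a genuine gap that your write-up does not close. Your linear recurrence for $\tau_\ell = \sum_j n_j \alpha_j^{-2\ell}$ has characteristic polynomial $\prod_j (x - \alpha_j^{-2})$, whose coefficients are elementary symmetric functions of the $\alpha_j^{-2}$. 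There is no a priori reason for these to lie in $\mathbb Q(\omega)$ or even in $\overline{\mathbb Q}$ -- the magic angles are not known to be algebraic. The relation
\[
\frac{\pi}{\sqrt 3}\sum_i c_i\, q_{\ell+i} = -\sum_i c_i\, \mathcal R_{\ell+i,0}
\]
then involves unknown transcendental $c_i$ on both sides, and the $\mathbb Q(\omega)$-linear independence of $1$ and $\pi$ gives nothing: the identity could be balanced by cancellations among the non-algebraic $c_i$. The paper avoids this entirely by passing to elementary symmetric polynomials via Newton's formula \eqref{eq:Newton}: the vanishing $e_n = 0$ for $n > N$ produces, for $n = m_0 K$, an explicit polynomial equation in $\pi$ of degree $K$ whose coefficients are manifestly in $\mathbb Q(\omega/\sqrt 3)$ (products of the $q_i$ and $\mathcal R_{i,j}$), with a nonzero leading coefficient coming from the unique extremal tuple $m_i = \delta_{i,m_0} K$. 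This directly contradicts transcendence of $\pi$ without any need to isolate a recurrence for $\{q_\ell\}$.

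A secondary point: your final contradiction invokes \cite[Theorem 6]{bhz1} to get infinitely many magic angles as an external input; the paper's Newton-identity argument delivers the contradiction (and the infinitude of non-simple angles) intrinsically, using transcendence alone, which is why its statement can assert infinitude as a consequence. Also, in the genericity part, the set where implication \eqref{eq:implication} holds is $\{a : \mathcal A(U_a) = \emptyset\} \cup \{a : |\mathcal A_m(U_a)| = +\infty\}$, not the $G_\delta$ set $\{a : |\mathcal A_m(U_a)| = +\infty\}$ alone; you must include the locus where no magic angle exists, or else your $G_\delta$ can fail to be dense. The paper handles this by showing the complement $\{a : 0 < |\mathcal A(U_a)| < \infty\}$ is a countable union of nowhere-dense sets built from neighborhoods $\Omega_q$ of rational-coefficient points.
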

\begin{proof}
We start by observing that since $\pi$ is transcendental on $\mathbb Q$, it is also transcendental in $\mathbb Q(\omega/\sqrt{3}).$
Now, we shall assume that there exist only finitely many non-simple eigenvalues of $A_{0}^2$ on $L^2_{0}$. This implies, by Theorem \ref{theo:rigidity} that $(A_0)^{\ell}_{|L^2_{{0,1}}}$ has only finitely many eigenvalues, we denote them by $\lambda_i \in \mathbb C$ for $i=1,..,N$.  Then we define the $n$-th symmetric polynomial
\[ e_n (\lambda_1 , \ldots , \lambda_N )=\sum_{1\le  j_1 < j_2 < \cdots < j_n \le N} \lambda_{j_1} \dotsm \lambda_{j_n}.\]
Newton identities show that this polynomial can be expressed as 
\begin{equation}
\label{eq:Newton} e_n(\lambda_1 , \ldots , \lambda_N ) = (-1)^n  \sum_{m_1 + 2m_2 + \cdots + nm_n = n \atop m_1 \ge 0, \ldots, m_n \ge 0} \prod_{i=1}^n \frac{(-\tr (A_0)^{2i}_{|L^2_{{0,1}}})^{m_i}}{m_i ! i^{m_i}}\end{equation}
where $e_n=0$ for $n>N.$ 
The fact that $\mathcal A(U)\neq \emptyset$ implies, by Theorem \ref{General} that $\mathcal A_m(U)\neq \emptyset$. Now, this means that there is a non-vanishing trace of $(A_0)^{\ell}_{|L^2_{{0,2}}}$. Choose $m_0$ to be the minimal power for which the trace is non-zero. Choose $n=m_0\times K$ where $K$ is a large integer, and using the fact that $e_n=0$, we deduce that $\pi$ is the root of the polynomial of degree $K$ with coefficients in $\mathbb Q \left(\frac{\omega}{\sqrt{3}}\right)$ given by
 \[ \begin{split} \sum_{m_1 + 2m_2 + \cdots + nm_n = n \atop m_1 \ge 0, \ldots, m_n \ge 0} \prod_{i=1}^{m_0\times K} (\tr(A_0)^{2i}_{|L^2_{{0,1}}})^{m_i} & = \sum_{m_1 + 2m_2 + \cdots + nm_n = n \atop m_1 \ge 0, \ldots, m_n \ge 0} \prod_{i=1}^{m_0\times K} (\underbrace{\frac 13\tr(A_0)^{2i}_{|L^2_{{0}}}}_{\in\mathbb Q \left(\frac{\omega}{\sqrt{3}}\right)\pi}-\underbrace{\mathcal R_{i,1}}_{\in \mathbb Q \left(\frac{\omega}{\sqrt{3}}\right)})^{m_i}\\
 & =0 . \end{split} \]

The power $m_1 \cdots m_n$ of $\pi$ is maximized, among the tuples we sum by the unique choice $m_i=\delta_{i,m_0}K$. By choice of $m_0$, this gives that the above polynomial has a non-zero leading coefficient and is therefore non-zero. This contradicts the fact that $\pi$ is transcendental and concludes the proof.

Now, let $a=(a_{p})_{\{p\in \Lambda^*; \Vert p \Vert_{\infty} \leq N\}}\in \mathbb C^{(2N+1)^2} \in \mathbb C^{(2N+1)^2}$ and assume that $\mathcal A(U_a) \neq \emptyset$. Then, we can find an open neighborhood of $ a $,  $\Omega_a \ni a$, such that for coefficients $b=(b_{p})_{\{p\in \Lambda^*;\Vert p \Vert_{\infty} \le N\}} \in \Omega_a$ we have $\mathcal A(U_b) \neq \emptyset$. 
Take  $q=(q_{p})_{\{p\in \Lambda^*;\Vert p \Vert_{\infty} \le N\}} \in (\pi\mathbb Q(\omega/\sqrt{3}))^{(2N+1)^2}\cap \Omega_a$ 
for which we then have $\vert \mathcal A(U_q) \vert = \infty.$ 
Continuity of eigenvalues of $T_k $ as the potential $ U $ changes shows that the 
$ V_{m,a} := \{ b \in \Omega_a :  | \mathcal A(U_{b}) | \geq m \}  $
is open and dense in $\Omega_a$. Hence, the set 
coefficients for which $ 0 < | \mathcal A(U_b) | < \infty $ is given by $\bigcup_{m \in \mathbb N} \bigcup_{q \in  ( \mathbb Q+i\mathbb Q)^{2N+1}} \Omega_q \setminus V_{m,q}$. It is then meagre and does not contain $(\pi\mathbb Q(\omega/\sqrt{3}))^{(2N+1)^2}.$
\end{proof}

\section{Numerical evaluation of the trace and existence of non-real magic angle} 
In this section the potential $U_{\pm}$ will be taken to be equal to $U_1(\pm \bullet)$ defined in \eqref{eq:potential}. In \eqref{eq:Traces}, we have proven that the traces on the rotational-invariant subspaces can be written as
\begin{equation}
\label{eq:UU}
\tr((A_0)^{\ell}_{|L^2_{{0,j}}})  =\tfrac 1 3\tr(A_0^{\ell})+\tfrac 1 3\mathcal R_{\ell,j},
\end{equation}
where the remainder was shown to be a finite sum. Although the first term $\tr(A_0^{\ell})$ is a priori an infinite sum, the authors provided in \cite[Theo. 7]{bhz1} a semi-explicit formula which can be evaluated rigorously with computer assistance for $U=U_1$ and small values of $\ell$. From \cite[Table 1]{bhz1}\footnote{The traces $\tr(A_0^{2})$ and $\tr(A_0^{4})$ were explicitly computed "by hand" in \cite{beta} and strictly speaking, the following argument relies on computer assistance only for obtaining the exact value of $\tr(A_0^{3})$.}, we see that
$$\tr((A_0^{2})_{|L^2_0})=\frac{4\pi}{\sqrt 3}, \quad  \tr((A_0^{3})_{|L^2_0})=\frac{96\pi}{7\sqrt 3}, \quad\tr((A_0^{4})_{|L^2_0})=\frac{40\pi}{\sqrt 3}.$$
We can read off from the above  $\tr(A_0^{2})\tr(A_0^{4})<\tr(A_0^{3})^2$. If all magic angles were real, then by $\ell^p$-interpolation 
$\tr(A_0^{2})\tr(A_0^{4})\geq \tr(A_0^{3})^2$, which is a contradiction. In other words, we have proven that
\begin{prop}
Let $U=U_1$ be the potential defined in \eqref{eq:potential}, then $\mathcal A \cap \CC\setminus \RR \neq \emptyset.$
\end{prop}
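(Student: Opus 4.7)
The plan is to argue by contradiction from the three explicit trace values $\tr(A_0^2) = 4\pi/\sqrt{3}$, $\tr(A_0^3) = 96\pi/(7\sqrt{3})$, $\tr(A_0^4) = 40\pi/\sqrt{3}$ recalled above, by combining them with a Cauchy--Schwarz inequality that applies only when the eigenvalues at hand are non-negative reals. The input is the spectral characterization of magic angles from Proposition \ref{p:magicD}: the nonzero eigenvalues of $T_0$ on $L^2_0$ are $\{1/\alpha : \alpha \in \mathcal{A}\}$ with multiplicities, hence the nonzero eigenvalues of $A_0 = R(0)U(z)R(0)U(-z)$ (a diagonal block of $T_0^2$) are of the form $1/\alpha^2$ with $\alpha \in \mathcal A$. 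If $\mathcal{A} \subset \mathbb{R}$, every such eigenvalue $\mu_k$ is a positive real number.

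Assuming this, the first step is to invoke Lidskii's theorem: since $A_0$ is Hilbert--Schmidt (as noted in the definition following \eqref{eq:Ak}), $A_0^\ell$ is trace class for $\ell \geq 2$ and $\tr(A_0^\ell) = \sum_k \mu_k^\ell$. The second step is Cauchy--Schwarz (equivalently, H\"older with conjugate exponents $(2,2)$) applied to the positive sequence $(\mu_k)$:
\begin{equation*}
\Bigl(\sum_k \mu_k^3\Bigr)^2 = \Bigl(\sum_k \mu_k \cdot \mu_k^2\Bigr)^2 \leq \Bigl(\sum_k \mu_k^2\Bigr)\Bigl(\sum_k \mu_k^4\Bigr),
\end{equation*}
which translates directly into $\tr(A_0^3)^2 \leq \tr(A_0^2)\,\tr(A_0^4)$.

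The third step is to contradict this with the explicit numerics. Clearing denominators, $\tr(A_0^2)\tr(A_0^4) = 160\pi^2/3 = 7840\pi^2/147$, whereas $\tr(A_0^3)^2 = 9216\pi^2/147$. Since $9216 > 7840$, the Cauchy--Schwarz bound is strictly violated, so the assumption $\mathcal A \subset \mathbb R$ is untenable and $\mathcal A \cap (\mathbb C \setminus \mathbb R)$ must be nonempty.

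The genuinely hard part of this argument is not the Cauchy--Schwarz step, which is elementary, but the prior input: the closed-form value of $\tr(A_0^3)$ (and to a lesser extent $\tr(A_0^4)$), which rests on the semi-explicit lattice-sum formula of \cite[Theorem~7]{bhz1} and a computer-assisted evaluation. One should also double-check that the Lidskii identification is consistent with multiplicities (algebraic vs.\ geometric) in $\mathcal A$ and that the block decomposition $T_0^2 = \operatorname{diag}(A_0, B_0)$ indeed distributes the spectrum of $T_0^2$ in the way claimed---a point that is subsumed by the general theory in \cite{bhz1,bhz2} but deserves a line of justification.
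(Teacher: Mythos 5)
Your proposal is correct and is essentially the paper's own argument: the paper likewise observes that if all magic angles were real the eigenvalues of $A_0$ would be non-negative reals, so that "$\ell^p$-interpolation" (which is precisely your Cauchy--Schwarz step) forces $\tr(A_0^2)\tr(A_0^4)\geq\tr(A_0^3)^2$, and then contradicts this with the rational-times-$\pi/\sqrt 3$ trace values from \cite[Table~1]{bhz1}. The only difference is that you spell out the Lidskii/trace-class justification and the block structure of $T_0^2$, points the paper leaves implicit.
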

Our goal here is to mimic this argument on rotational-invariant subspace by computing the finite remainders $\mathcal R_{\ell,j}$ using computer assistance to find the exact results.

From doing so, we obtain the following result.
\begin{prop}
\label{prop:Tristanconvex}
For the Bistritzer-MacDonald potential $U_1$ defined in \eqref{eq:potential}, we have 
$$ \tr((A_0^{2})_{|L^2_{0,1}})=\tr((A_0^{2})_{|L^2_{0,0}}) = \frac{4\pi}{3\sqrt{3}}-3\approx -0.581601<0$$ 
and $\tr((A_0^{2})_{|L^2_{0,2}}) = \frac{4\pi}{\sqrt{3}}+6 \approx 8.4184.$ For the higher powers, we find 
\[ \tr((A_0^{3})_{|L^2_{0,2}}) = \frac{32 \pi}{7 \sqrt{3}} + \frac{810}{49}  \approx 24.8223 \text{ and }  \tr((A_0^{4})_{|L^2_{0,2}}) = \frac{40 \pi}{3 \sqrt{3}} + \frac{4374}{91}  \approx 72.2499.\]
This implies the inequality
$$ \tr((A_0^{2})_{|L^2_{0,2}})\tr((A_0^{4})_{|L^2_{0,2}})<\tr((A_0^{3})_{|L^2_{0,2}})^2.$$
We conclude that for any $j\in \mathbb Z_3$, there is a non-real magic angle $\alpha_j \in \CC\setminus \RR$ with corresponding eigenfunction $u\in L^2_{0,j}$ of $T_k$. By Theorem \ref{theo:rigidity}, we conclude the existence of non-real and non-simple magic angles.
\end{prop}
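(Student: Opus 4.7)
The plan is to compute the traces $\tr((A_0)^{\ell}_{|L^2_{0,j}})$ explicitly via the decomposition \eqref{eq:Traces}, then exploit a Cauchy--Schwarz interpolation inequality to force non-real eigenvalues, and finally upgrade non-realness to non-simplicity via Theorem \ref{theo:rigidity}.

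First I would compute the explicit trace values. By \eqref{eq:Traces} one has $\tr((A_0)^{\ell}_{|L^2_{0,j}}) = \tfrac{1}{3}\tr(A_0^{\ell}) + \tfrac{1}{3}\mathcal{R}_{\ell,j}$, where the full-space traces $\tr(A_0^2) = 4\pi/\sqrt{3}$, $\tr(A_0^3) = 96\pi/(7\sqrt{3})$, $\tr(A_0^4) = 40\pi/\sqrt{3}$ are supplied by \cite{beta} and \cite[Theorem 7]{bhz1}. The remainders $\mathcal{R}_{\ell,j}$ are finite sums, as shown in the proof of Theorem \ref{General}: the BM potential has finitely many Fourier modes, so $A_0^{\ell}$ acts on the Fourier basis as a finite number of weighted shifts by elements of a bounded set $\mathcal{F}^{\ell}_U \subset 3\Lambda^*$, and the inequality $|\omega^h \nu - \omega^k \nu| \leq R$ restricts the $\nu$-summation to a finite set. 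Enumerating these combinatorially with computer assistance and tracking rational coefficients in $\mathbb{Q}(\omega)$ then produces the explicit values stated. The equality $\tr((A_0^2)_{|L^2_{0,0}}) = \tr((A_0^2)_{|L^2_{0,1}})$ can also be read off directly from Proposition \ref{prop:equiv_spec} applied with $p = 0$.

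Second, for the existence argument, the non-zero eigenvalues $\mu_i$ of $A_0|_{L^2_{0,j}}$ correspond to magic angles via $\mu_i = 1/\alpha_i^2$; hence if every such $\alpha_i$ were real then every $\mu_i$ would be positive. Cauchy--Schwarz would then give
\[ \tr((A_0^3)_{|L^2_{0,2}})^2 = \Bigl( \sum_i \mu_i \cdot \mu_i^2 \Bigr)^2 \leq \Bigl( \sum_i \mu_i^2 \Bigr) \Bigl( \sum_i \mu_i^4 \Bigr) = \tr((A_0^2)_{|L^2_{0,2}}) \cdot \tr((A_0^4)_{|L^2_{0,2}}), \]
which is violated by the computed numerical values. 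For $j \in \{0, 1\}$ the argument is even more direct: the trace $\tr((A_0^2)_{|L^2_{0,j}}) = \tfrac{4\pi}{3\sqrt{3}} - 3 < 0$ contradicts $\sum_i \mu_i^2 \geq 0$ under the same reality assumption. Hence for each $j \in \mathbb{Z}_3$ the operator $A_0|_{L^2_{0,j}}$ carries a non-real eigenvalue and furnishes a non-real magic angle $\alpha_j$ with eigenfunction in $L^2_{0,j}$.

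Finally, I would invoke Theorem \ref{theo:rigidity}: since simple magic angles have kernel supported only in $L^2_{0,2}$, the non-real magic angles produced in $L^2_{0,0}$ and $L^2_{0,1}$ cannot be simple. The main obstacle is the first step: rigorously evaluating $\mathcal{R}_{\ell,j}$ for $\ell \in \{2, 3, 4\}$ demands careful symbolic enumeration of the bounded Fourier-shift combinatorics, with verified rational arithmetic in $\mathbb{Q}(\omega)$, implemented by computer. Once these numerics are in hand, the Cauchy--Schwarz inequality, the sign argument, and the rigidity conclusion are all elementary.
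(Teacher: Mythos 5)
Your proposal is essentially correct and follows the paper's own route: the decomposition \eqref{eq:Traces} into a full-space trace plus a finite remainder $\mathcal R_{\ell,j}$ (with the full-space values supplied by \cite{beta} and \cite[Theorem~7]{bhz1}), computer-assisted enumeration of the finite remainder sum, and then the Cauchy--Schwarz (or $\ell^p$-interpolation) argument $\bigl(\sum_i \mu_i^3\bigr)^2 \le \bigl(\sum_i \mu_i^2\bigr)\bigl(\sum_i \mu_i^4\bigr)$ on $L^2_{0,2}$, combined with the sign argument $\tr\bigl((A_0^2)|_{L^2_{0,0}}\bigr) < 0$ on $L^2_{0,0}$ and $L^2_{0,1}$. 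The paper uses precisely the same $\ell^p$-interpolation idea earlier in the section for the unrestricted trace, and your invocation of Theorem~\ref{theo:rigidity} to upgrade ``non-real eigenfunction in $L^2_{0,0}$ or $L^2_{0,1}$'' to ``non-simple magic angle'' matches the paper's final step. The one place you should be slightly more careful is the identification of nonzero eigenvalues of $A_0|_{L^2_{0,j}}$ with the set $\{1/\alpha^2 : \alpha \in \mathcal A\}$: this is valid via Proposition~\ref{p:magicD} and the block structure $T_0^2 = \operatorname{diag}(A_0, B_0)$, together with Lidskii's theorem to express the traces of $A_0^\ell$ ($\ell \ge 2$) as absolutely convergent sums over eigenvalues with algebraic multiplicity; it is worth making this chain explicit since the positivity and Cauchy--Schwarz steps both rest on it. The equality $\tr\bigl((A_0^2)|_{L^2_{0,0}}\bigr) = \tr\bigl((A_0^2)|_{L^2_{0,1}}\bigr)$ does indeed follow from Proposition~\ref{prop:equiv_spec}, using in addition that $\tr_{L^2_{0,p}}(A_0^m) = \tr_{L^2_{0,p}}(B_0^m)$ by cyclicity of the trace.
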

We note that as the traces depend continuously on the potential $U$, the inequalities $$\tr((A_0^{2})_{|L^2_{0,1}})=\tr((A_0^{2})_{|L^2_{0,0}})<0 \text{ and }\tr((A_0^{2})_{|L^2_{0,2}})\tr((A_0^{4})_{|L^2_{0,2}})<\tr((A_0^{3})_{|L^2_{0,2}})^2$$ remain true for small perturbations of $U$ and so does the existence of a non-real and non-simple magic angle. As stated in the introduction, the potential $U_2$, defined in \eqref{eq:potential}, leads to real and doubly-degenerate magic angles. We then see numerically that $\tr((A_0^{2})_{|L^2_{0,1}})=\tr((A_0^{2})_{|L^2_{0,0}})>0$, see Figure \ref{fig:traces}. To interpolate between these two opposite behaviors, we introduce the potentials 
\begin{equation}
\label{eq:Utheta}
U_{\theta}(z):= U ( z ) = ( \cos \theta -\sin \theta )  U_1 ( z ) + \sin \theta U_2 ( z ), 
\end{equation}
see \url{https://math.berkeley.edu/~zworski/Interpolation.mp4} for a movie showing the dependence of the set of magic angle when $\theta$ varies.

In Figure \ref{fig:traces} we show $\tr((A_0^{2})_{|L^2_{0,0}}), \tr((A_0^{2})_{|L^2_{0,2}})$ as a function of $\theta$, verifying that the inequality $\tr((A_0^{2})_{|L^2_{0,0}})<0$ holds for a large range of values $\theta$.

\begin{figure}
\includegraphics[width=10cm]{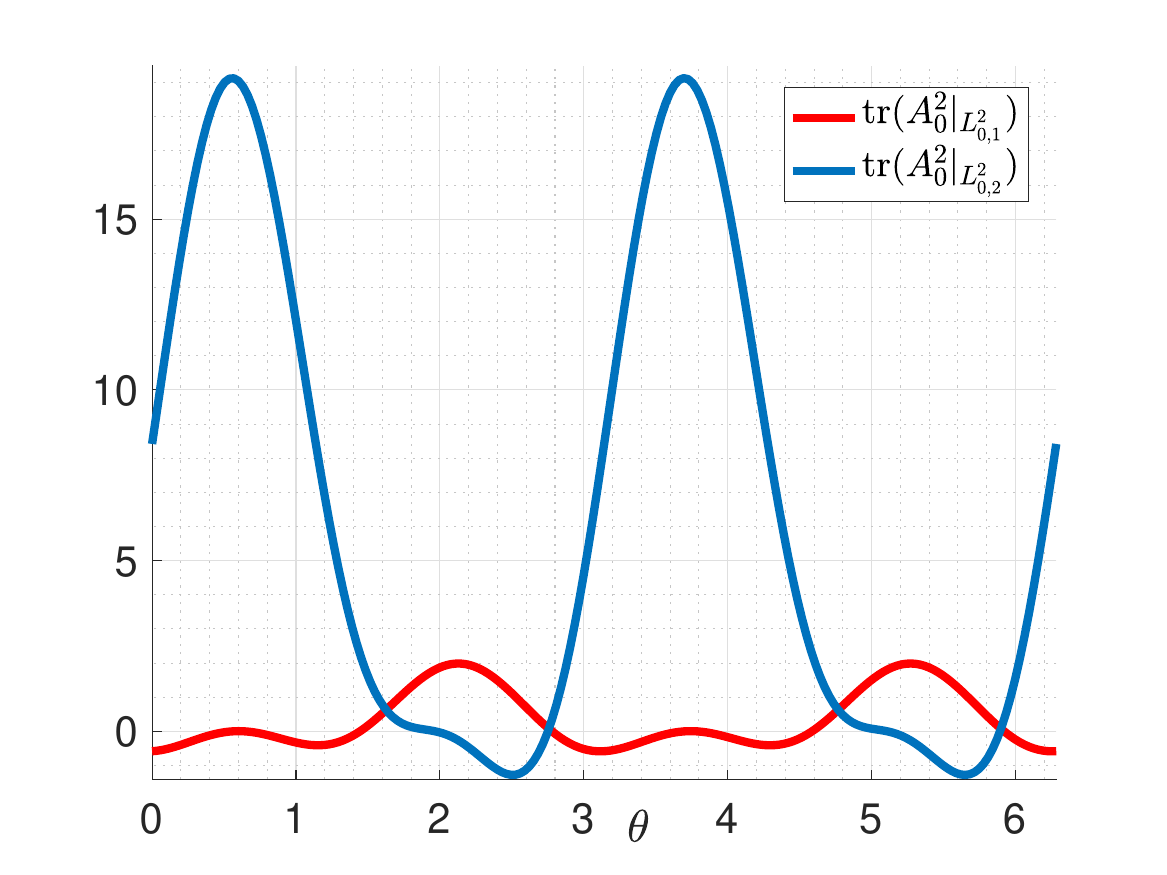}
\caption{\label{fig:traces}${\color{red}\tr((A_0^{2})_{|L^2_{0,1}})}$ and ${\color{blue}\tr((A_0^{2})_{|L^2_{0,2}})}$ for potentials $U_{\pm}(z):=U_{\theta}(\pm z)$ in \eqref{eq:Utheta}. While for $\theta=0$, $U_{\theta=0}=U_1$ we see that $\tr((A_0^{2})_{|L^2_{0,2}})>0$ and $\tr((A_0^{2})_{|L^2_{0,1}})<0$. For $\theta = 2\pi 7/8 \approx 5.5$ and $U_{\theta=2\pi 7/8}=U_2$ we have $\tr((A_0^{2})_{|L^2_{0,2}})<0$ and $\tr((A_0^{2})_{|L^2_{0,1}})>0$, instead.  }
\end{figure}

\noindent
{\bf Remark.}
This previous computation could be made rigorous at the cost of adapting the algorithm used in \cite[Theo. 7]{bhz1} to the potential $U_{\theta}$ in order to compute the first term in \eqref{eq:UU}.

\section{Generic simplicity in each representation}
\label{s:gensimp}

\subsection{Generalized potentials}
We now consider the general class of potentials $U_{\pm}(z)$ satisfying 
\begin{equation}
\label{eq:newU}
U_{\pm}(\omega z) = \omega U_{\pm}(z), U_{\pm}(z+\gamma) = e^{\mp 2i \langle \gamma, K \rangle} U_{\pm}(z), \quad \gamma \in \Gamma. 
\end{equation}
We do
{\em not} however assume $\overline{U_{\pm}(\bar z)} = - U_{\pm}(z)$ and then define 
\[ V(z) := \begin{pmatrix} 0 & U_{+}(z) \\ U_-(z) &0 \end{pmatrix} \text{ such that } D_V(\alpha) = 2D_{\bar z} + \alpha V(z).\]
 
It is convenient to use the following Hilbert space of {\em real analytic} potentials defined
using the following norm: for fixed $ \delta > 0 $, 
\begin{equation}
\label{eq:norm}
\| V \|_\delta^2 := 
\sum_{\pm} \sum_{ k \in \Lambda^*/3 } |a^\pm_{k } |^2 e^{ 2 | k| \delta}, \   \ \ \ 
U_{\pm} ( z ) = \sum_{  k \in K + \Lambda^* } a^\pm_{k } e^{ \pm i \langle z ,k \rangle } . 
\end{equation}
Then we define $ \mathscr V = \mathscr V_\delta $ by 
\begin{equation}
\label{eq:Vscr}
V \in \mathscr V \ \Longleftrightarrow \ 
\text{ $ V$ satisfies \eqref{eq:newU}, } \ 
\| V \|_{\delta } < \infty . 
\end{equation}
We note that we have as before,
\[ \mathscr L_{\mathbf a } D_V ( \alpha ) = D_V ( \alpha)  \mathscr L_{\mathbf a } ,  \ \ \ 
\Omega D_V ( \alpha ) = D_V ( \alpha ) \Omega .\]
We also recall the antilinear symmetry $\mathscr A: L^2_{k,j}\to L^2_{k,-j}$ defined by
\begin{equation}
\label{eq:defAsc}  \mathscr A :=  \begin{pmatrix} \ \   0 & \Gamma  \\ - \Gamma & 0  \end{pmatrix}, \ \ 
\Gamma v ( z ) = \overline{ v ( z ) } , \ \ \ \ 
\mathscr A D_V ( \alpha ) \mathscr A =   {-}D_V ( \alpha )^* . \end{equation}

\subsection{Proof of generic simplicity}

Our proof of Theorem \ref{t:sim} is an adaptation of the argument for generic simplicity of resonances
by Klopp--Zworski \cite{klop} -- see also \cite[\S 4.5.5]{res}.

We then use the decomposition
\[ L^2_{0} = \bigoplus_{j=0}^2 L^2_{{0,j}} , \ \ \ 
L^2_{{0,j} } \simeq L^2 ( F ) ,\]
where $ F $ is a fixed fundamental domain of $ G_3 $. 
For $ V \in \mathscr V $ and $R = (2D_{\bar z})^{-1}$
\[    V : L^2_{{0,j} } \to L^2_{{0,j-1} } , \ \ \ R :  L^2_{0,j-1 } \to L^2_{0,j } 
\ \Longrightarrow \ R V :  L^2_{0,j } \to L^2_{0,j } . 
 \]

Before proceeding we record the following regularity result:
\begin{lemm} 
\label{l:anal}
Suppose that for some $ \lambda \in \mathbb C  $ and $ k \in \mathbb N $ and
$ w \in L^2 ( \mathbb C/3\Lambda; \mathbb C ) $, 
  $ ( R V - \lambda)^k w = 0 $. 
Then $ w \in C^\omega ( \mathbb C / 3\Lambda; \mathbb C) $, that is, $ w $ is {\em real analytic}.
The same conclusion holds if $ ( V^* R^* - \lambda )^k w = 0 $. 
\end{lemm}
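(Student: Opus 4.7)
The plan is to convert the algebraic identity $(RV - \lambda)^k w = 0$ into a first-order elliptic equation of the form $D_{\bar z} w = (\text{analytic expression involving } w)$, and then apply standard elliptic regularity together with a Paley--Wiener-style argument on the torus $\mathbb C / 3\Lambda$ exploiting the exponential Fourier decay encoded in $\|V\|_\delta < \infty$.

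I would argue by descending induction on $j$, setting $w_j := (RV - \lambda)^j w$ so that $w_0 = w$, $w_k = 0$, and $(RV - \lambda) w_j = w_{j+1}$ in $L^2$. The base case $w_k = 0$ is trivial. For the inductive step, assume $w_{j+1} \in C^\omega$ and apply $2 D_{\bar z}$ to $RV w_j = \lambda w_j + w_{j+1}$. Using $2 D_{\bar z} R = I - P$, where $P$ is the orthogonal projection onto the one-dimensional kernel of $D_{\bar z}$ on $\mathbb C / 3\Lambda$ (the constants), we obtain
\[ V w_j - P(V w_j) = 2\lambda\, D_{\bar z} w_j + 2 D_{\bar z} w_{j+1}. \]
For $\lambda \neq 0$ -- the case relevant to the application, since magic-angle eigenvalues $\alpha^{-1}$ of $T_k$ are non-zero -- this rearranges to the elliptic equation
\[ D_{\bar z} w_j = (2\lambda)^{-1}\bigl( V w_j - P(V w_j) - 2 D_{\bar z} w_{j+1}\bigr). \]

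A standard Sobolev bootstrap upgrades $w_j$ from $L^2$ to $C^\infty$: since $V$ is smooth, the right-hand side lies in $L^2$, hence $w_j \in H^1$, and iterating yields $w_j \in H^s$ for every $s$. Real analyticity then follows from a Paley--Wiener argument. The hypothesis $\|V\|_\delta < \infty$ asserts $|a_p| \lesssim e^{-\delta |p|}$, and by the inductive hypothesis the Fourier coefficients of $w_{j+1}$ also decay exponentially. Expanding $w_j = \sum_\nu c_\nu e^{i\langle \nu, z\rangle}$ turns the elliptic equation into a recursion of the form $|\nu| |c_\nu| \lesssim (|a| \ast |c|)_\nu + \mu_\nu$ with exponentially-decaying $\mu_\nu$; an analytic bootstrap -- in the spirit of the Kotake--Narasimhan theorem for elliptic operators with analytic coefficients -- then gives $|c_\nu| \lesssim e^{-\delta'|\nu|}$ for some $\delta' > 0$, which is precisely the analyticity criterion on $\mathbb C/3\Lambda$.

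The case $\lambda = 0$ can be handled by a similar descending argument: $(RV)^k w = 0$ forces $Vw_{k-1}$ to be constant, which combined with the Fourier-support structure of $V$ forces $Vw_{k-1} = 0$; descending through the chain then yields analyticity. The dual statement $(V^* R^* - \lambda)^k w = 0$ follows by the same argument with $D_{\bar z}$ replaced by its adjoint $D_z$, which is equally elliptic. I expect the main technical obstacle to lie in the clean propagation of exponential Fourier decay through the analytic bootstrap -- keeping the constants uniform in the iteration so that the resulting strip of analyticity is nontrivial -- but this is a routine quantitative estimate given the explicit control provided by $\|V\|_\delta$.
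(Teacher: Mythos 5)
Your proposal takes essentially the same route as the paper's proof: both set up the chain $w_j=(RV-\lambda)^j w$ (yours descending, the paper's ascending with an inhomogeneity $f$, which is equivalent), and both reduce the inductive step to a first-order elliptic equation with real-analytic coefficients, namely $D_V(-1/\lambda)\widetilde w\in C^\omega$. The only genuine difference is the final step: the paper invokes analytic hypoellipticity for elliptic operators with analytic coefficients directly (\cite[Theorem 9.5.1]{H1}), whereas you propose to re-derive that step from scratch via a Paley--Wiener/Fourier bootstrap on the torus using $\|V\|_\delta<\infty$ --- this is a self-contained alternative, but it amounts to proving the cited theorem in the present setting, and you leave the key quantitative iteration as a sketch, so the citation route is cleaner. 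Two small remarks: the projection $P$ you introduce is vacuous here because the Fourier support of $L^2_0(\mathbb C/3\Lambda;\mathbb C)$ lies in $K+\Lambda^*$ which excludes $0$, so $2D_{\bar z}$ is already invertible; and for $\lambda=0$ the clean argument is that $RV\widetilde w=0$ forces $V\widetilde w=0$, hence $\widetilde w=0$ a.e.\ since $\det V$ vanishes only on a null set, so $w=0$ by descent --- there is no ``constant'' to worry about.
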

\begin{proof}
We prove a slightly more general statement that $ 
( RV - \lambda)^k w = f \in C^\omega (\mathbb C/3\Lambda ; \mathbb C^2 ) $ 
implies that 
$ w \in C^\omega ( \mathbb C / 3\Lambda; \mathbb C^2  ) $. We proceed by induction on 
$ k $. For $ k = 0 $, $ w = f $. If $ k > 0 $, we put 
$ \widetilde {w } := ( RV - \lambda )^{k-1} {w } $ and note that (the case of  $ \lambda = 0 $ 
is even simpler)
\[   D_V ( - 1/\lambda )  \widetilde {w } = 
2\lambda^{-1} D_{\bar z }  ( R V - \lambda ) \widetilde {w } = 
2 \lambda^{-1}   D_{\bar z }  { f }  \in C^\omega . \]
This means that $ \widetilde {w } $ is a solution of
an elliptic equation with analytic coefficients, hence it is analytic \cite[Theorem 9.5.1]{H1}. 
The inductive hypothesis now shows that $ w $ is analytic. 

In the case of $ ( V^* R^* - \lambda )^k {w } = 0 $, we proceed similarly but
put $ \widetilde w := R^*  ( V^* R^* - \lambda )^{k-1} {w } $, so that
\[  D_V ( - 1/ \bar \lambda )^*  \widetilde {w } = 
2  \bar \lambda^{-1}  D_z  R^* ( V^* R^* - \lambda ) 
(V^* R^* - \lambda )^{k-1} {w } = 2 \bar \lambda^{-1} D_z  R^* { f}  \in C^\omega. \]
Since $ ( V^* R^* - \lambda)^{k-1} w = 2 D_z  \widetilde {w} $ the inductive
argument proceeds as before.
\end{proof} 

The next lemma shows that we have generic simplicity for operators restricted
to the three representations:
\begin{lemm}
\label{l:sim}
There exists a generic subset of $ \mathscr V_j $ of $ \mathscr V $ such that
for $ V \in \mathscr V_j $,  the eigenvalues of  $ RV |_{ L^2_{0,j} }  $
are {\em simple}.
\end{lemm}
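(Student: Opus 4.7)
The plan is to realise $\mathscr V_j$ as a dense $G_\delta$ subset of $\mathscr V$, following the strategy of \cite{klop}. For each $N \in \mathbb N$ let $\mathscr V_j^N$ denote the set of $V \in \mathscr V$ such that every eigenvalue of $RV|_{L^2_{0,j}}$ of modulus at least $1/N$ is algebraically simple; then taking $\mathscr V_j := \bigcap_N \mathscr V_j^N$ suffices. Since $RV$ is compact and depends analytically on $V$, continuity of the isolated pieces of spectrum together with their Riesz projectors shows that each $\mathscr V_j^N$ is open in $\mathscr V$, so the content of the lemma is the density of each $\mathscr V_j^N$.

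To prove density, fix $V_0 \in \mathscr V$ with a non-simple eigenvalue $\lambda_0$ of $RV_0|_{L^2_{0,j}}$ satisfying $|\lambda_0|\geq 1/N$, and let $P_0$ be its Riesz projector, of rank equal to the algebraic multiplicity $a\geq 2$. For $W \in \mathscr V$ and small $t$, analytic perturbation theory identifies the spectrum of $R(V_0+tW)$ in a small disc about $\lambda_0$ with that of the finite-rank operator $\lambda_0 P_0 + N_0 + t M(W) + O(t^2)$ on $\operatorname{range}(P_0)$, where $N_0 := (RV_0 - \lambda_0) P_0$ is nilpotent and $M(W) := P_0 (RW) P_0$. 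Elementary matrix perturbation then shows that, provided $M(W)$ is not a scalar multiple of the identity in the semisimple case (or has some entry breaking the Jordan structure in the non-semisimple case), the $a$ eigenvalues in the disc split for generic small $t$. Iterating this step finitely many times, combined with openness of $\mathscr V_j^N$, produces $V \in \mathscr V_j^N$ arbitrarily close to $V_0$.

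The main obstacle is therefore proving that the image of $W \mapsto M(W)$ is not contained in the scalar matrices. I argue by contradiction. Picking biorthogonal $\phi_1 = (f_1,g_1), \phi_2 = (f_2, g_2) \in \operatorname{range}(P_0)$ and $\psi_1, \psi_2 \in \operatorname{range}(P_0^*)$ with $\langle \phi_i, \psi_k\rangle = \delta_{ik}$, the vanishing of the off-diagonal entry $\langle RW \phi_1, \psi_2\rangle = 0$ for every $W \in \mathscr V$, combined with $R^* = -(2D_z)^{-1}$ and the block form $W = \begin{pmatrix} 0 & w_+ \\ w_- & 0\end{pmatrix}$, separates (by independently varying $w_+$ and $w_-$) into two scalar conditions of the shape $\int w_\pm\, F_\pm = 0$ for all admissible $w_\pm$, where $F_\pm$ are appropriate products of components of $\phi_1$ and $R^*\psi_2$. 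Lemma~\ref{l:anal} ensures that $F_\pm$ are real analytic, and the transformation laws \eqref{eq:newU} together with the Fourier description \eqref{eq:norm} show that admissible $w_\pm$ are dense in the symmetry class dual to $F_\pm$, forcing $F_\pm \equiv 0$. Since $F_\pm$ is a product of two real analytic functions on a connected domain, this forces a component of $\phi_1$ or of $R^*\psi_2$ to vanish identically; substituting back into the eigenequation $V_0 \phi_1 = \lambda_0 \cdot 2 D_{\bar z}\phi_1$ (with $\lambda_0 \neq 0$) leaves a holomorphic function on $\mathbb C/\Lambda$ with the nontrivial Floquet twist $e^{i\langle \gamma, K\rangle}$, which must vanish by Liouville (since $K\notin \Lambda^*$), giving $\phi_1 = 0$, a contradiction. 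The diagonal identity $\langle RW\phi_1,\psi_1\rangle = \langle RW\phi_2, \psi_2\rangle$ (needed to exclude nonzero scalar multiples of $I$), as well as the Jordan-block case (geometric multiplicity one, algebraic multiplicity at least two), reduce to the same orthogonality plus unique-continuation mechanism.

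The step I expect to be most delicate is the symmetry bookkeeping behind the density claim: one must check that the products $F_\pm$ genuinely land in the $L^2$ class dual to $w_\pm$, with matching rotational weights in $\omega$ and translation characters $e^{\pm 2i\langle \gamma, K\rangle}$, so that the integral orthogonality $\int w_\pm F_\pm = 0$ for every admissible $w_\pm$ truly forces $F_\pm \equiv 0$ rather than only the vanishing of a finite-codimensional projection. This matching is what lets the unique-continuation argument go through without leaking into a subspace of forbidden Fourier modes.
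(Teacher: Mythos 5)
Your proposal follows essentially the same Klopp--Zworski strategy as the paper's proof of Lemma~\ref{l:sim}: define the generic set via truncation in eigenvalue modulus, establish openness through continuity of Riesz projections, and prove density by reducing the preserved-multiplicity hypothesis to the orthogonality condition $\langle RW\phi_1, \psi_2\rangle = 0$ for all $W\in\mathscr V$, which is then contradicted by combining Lemma~\ref{l:anal} (real analyticity) with the fact that $U_\pm$ can be varied independently on a fundamental domain, exactly as in~\eqref{eq:UpUm}. The one imprecision is the claim that ``elementary matrix perturbation'' shows eigenvalues split for generic small $t$ whenever $M(W)$ is not scalar — when $M(W)$ is nilpotent this requires the derivative-of-the-Jordan-chain computation the paper carries out, which is precisely what rules out the degenerate discriminant — but the underlying mechanism, and the reduction to the unique-continuation argument, is the same as the paper's.
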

\begin{proof} 
We follow the presentation in the proof of \cite[Theorem 4.39]{res} with modifications
needed for our case. We fix $ j $ and consider all operators as acting on
$\mathscr H := L^2_{{0,j}} $. 
The eigenvalue multiplicity is defined using the resolvent:
\[  m_V ( \lambda  ) := \frac{1}{2 \pi i } \tr \oint_{\lambda} ( \zeta - RV )^{-1} 
d\zeta , \]
where the integral is over a sufficiently small positively oriented circle around $ \lambda $. We then define
\begin{equation}
\label{eq:Ethetar}   \mathscr E_r := \{ W \in \mathscr V :  m_W ( \lambda ) \leq 1, \   
\lambda \in \mathbb C \setminus D (0, r )  \}.  \end{equation}
We want to show that for $ r > 0 $,  $ \mathscr E_r $ is open and dense. That will show that the set
\[  \mathscr E := \{ W \in \mathscr V :  \ \forall \, \lambda , \ 
m_W ( \lambda ) \leq 1  \} = \bigcap_{ n \in \NN } \mathscr E_{\frac 1 n } \]
is generic (and in particular, by the Baire category theorem, it has a nowhere dense complement). 

Suppose that $ R W  $ has exactly one eigenvalue $ \lambda_0 $ in 
$  D ( \lambda , r ) $ and $\Spec(RW) \cap D(\lambda,2r) = \{\lambda_0\}$. Putting $ \Omega := D ( \lambda , r ) $  we then define
\begin{equation}
\label{eq:PiWOm} \Pi_W ( \Omega ) := \frac{1}{2 \pi i }  \int_{\partial \Omega } ( \zeta - RW  )^{-1} d\zeta, \ \ \  m_W ( \Omega ) := \tr \Pi_W ( \Omega ) . \end{equation}
If $ V \in \mathscr V $ and $ \| V \|_{\delta } $ is sufficiently small then for $ \zeta \in \partial \Omega $, 
\[  ( R ( W + V ) - \zeta )^{-1} = ( R W   - \zeta )^{-1} ( 
I +  R V  ( R W - \zeta )^{-1} )^{-1}  , \]
exists and we can define $ \Pi_{W + {V } } ( \Omega )  $ as 
in \eqref{eq:PiWOm}.
 This also shows that if $ \|{V } \|_\delta < \varepsilon  $ for sufficiently small $ \varepsilon $
then for $ \zeta \in \partial \Omega $, 
\[ ( R W - \zeta )^{-1} - 
( R ( W + {V } ) - \zeta )^{-1} = \mathcal O_\varepsilon ( \| {V } \|_\delta )_{ \mathscr H \to \mathscr H } 
 . \]
It follows that 
$  \| \Pi_W ( \Omega ) - \Pi_{W + {V } } ( \Omega ) \|_{ \mathscr H \to 
\mathscr H } \leq C_\varepsilon  \| {V } \|_\delta  $. 
In particular, if we take $ \|{V } \|_{\delta } <  1/C_\varepsilon  $, then $ 
\Pi_W ( \Omega ) $ and $ \Pi_{W + {V } } ( \Omega ) $ have the same rank
\begin{equation}
\label{eq:mVconstant} 
\text{ $ m_{ W + V } ( \Omega ) $ is constant for $ \| V \|_\delta $ sufficiently small. } 
\end{equation}
This immediately implies that $ \mathscr E_r $ is 
open: if $ \lambda $ is a simple eigenvalue of $ R W $  then $ m_W ( \Omega ) = 1 $ this 
values does not change under small perturbations.

 Now we want to show that $ \mathscr E_r $ is dense. This follows from the following
statement 
\begin{equation}
\label{eq:WdeltaW} 
\begin{gathered} 
\forall \;  W \in \mathscr V, \ \varepsilon > 0 \ \ \exists\;
 V \in \mathscr V \ \ \ 
 W + V  \in \mathscr E_r , \ \ \ \ \| V\|_{\delta } < \varepsilon. 
 \end{gathered}
\end{equation}
As the number of eigenvalues of $ RW $ outside $ D ( 0 , r ) $ is finite, 
it is enough to prove a local statement 
as it can be applied successively to obtain \eqref{eq:WdeltaW} (once an eigenvalue is simple it stays simple for sufficiently small perturbations). That is, it is enough
to show that
\begin{equation}
\label{eq:WdeltaW1} 
\begin{gathered} 
\forall \;  W \in \mathscr V , \ \varepsilon > 0 \ \ \exists\;
 V \in \mathscr V  \ \forall \, \lambda \in \Omega \\
 m_{W +  V } ( \lambda ) \leq 1, \ \ \ \| V \|_{\delta } < \varepsilon. 
 \end{gathered}
\end{equation}
As in \cite{klop} we proceed by induction and start by noting that one of two 
cases has to occur:
\begin{equation}
\label{eq:option1}
\begin{gathered}
\forall \, \varepsilon > 0 \ \ \exists \, V \in \mathscr V, \ \lambda
\in \Omega \ \ \ 
1 \leq m_{ W + V } ( \lambda ) < m_{ W + V } ( \Omega ) , \ \ \ \| V \|_\delta < \varepsilon , 
\end{gathered}
\end{equation}
or
\begin{equation}
\label{eq:option2}
\begin{gathered}
\exists \, \varepsilon > 0 \ \forall \, V  \in \mathscr V, 
\| V \|_\delta < \varepsilon \ 
\exists \,  \lambda  = \lambda ( V ) 
\in \Omega \ \ \ 
m_{ W + V } ( \lambda ) = m_{ W + V } ( \Omega ) .
\end{gathered}
\end{equation}
The first case implies that adding an arbitrarily small $ V $ to $ W $ produces at least two distinct eigenvalues of $ R ( V + W)  $. The second case implies that for any small perturbation 
preserves maximal multiplicity. 

We will now show that \eqref{eq:option2} {\em cannot occur}. 
For that assume that
$   m_W ( \lambda ) = M $ and that \eqref{eq:option2} holds. For $ V \in \mathscr V $, $ \|V \|_{\delta } < \varepsilon $, put, in the notation of \eqref{eq:PiWOm}, 
\[  k( V ) := \min\{ k : (  R( W + V)  - \lambda ( V ) )^k \Pi_{ W + V } ( \Omega ) = 0 \} .\]
Then $ 1 \leq k ( V ) \leq M $ and $ 
 \mathscr V \ni V \mapsto k ( V ) $ is a {\em lower semi-continuous} function. In fact, if $ \| V_j  - V \|_{\mathscr V } \to 0 $
and 
 then, from \eqref{eq:PiWOm}, we 
see that $(  R (  W + V_j ) - \lambda ( V_j ) )^k \Pi_{ W + V_j } ( \Omega ) = 0$, then 
$ (  R (  W + V ) - \lambda  ( V ) )^k \Pi_{ W +V  } ( \Omega ) = 0$.

We also define
\[  k_0  := \max\{ k ( V ) : V \in \mathscr V , \| V \|_{\delta }
< \varepsilon/2 \}.\]
It follows that if $ k ( V' ) = k_0 $ then $ k ( V + V' ) = k_0 $ for $ \| V \|_{\delta} < \rho $, with a sufficiently small $ \rho $.  Hence we can replace $ W $ by $ W + V' $, decrease $ \varepsilon$  and assume that
\begin{equation}
\label{eq:PVMim}
\begin{gathered} ( R (W + V ) - \lambda( V ) )^{k_0} \Pi_{ V + W } ( \Omega ) = 0 , \\ 
( R (  W + V)  - \lambda ( V ) )^{k_0-1} \Pi_{ V + W } ( \Omega ) \neq 0 , \\ 
m_{ W + V } ( \lambda ( V ) ) = \tr \Pi_{ V + W } = M > 1  , \ \ \ \forall \, V , \ \ 
\|V \|_{\delta}  < \varepsilon.
\end{gathered}
\end{equation}

\noindent
To see that \eqref{eq:PVMim} is impossible we first assume that $ k_0 > 1 $. 
Take $ V = V ( t ) = W + t V $, $  \| V \|_{C^M} < \varepsilon $, $ t \in [ -1 ,1 ]$.
For $ h , g \in \mathscr H  $ we define
(dropping  $ \Omega $ in  $ \Pi_\bullet ( \Omega ) $)
\[ \begin{split}    & w ( t)  :=  ( R ( W + t V )  - \lambda(t) )^{k_0-1} \Pi_{ W +t V } h , \\ & 
\widetilde {w } (t ) := ( (  W^* + t  V^* )R^*  - \overline {\lambda (t)} )^{k_0 - 1} \Pi_{W+tV } ^* g . \end{split} \]
By our assumption \eqref{eq:PVMim} we can choose $ g $ and $ h $ so that 
$ w := w ( 0 ) \not \equiv 0 $ and $ \widetilde {w} := \widetilde {w }( 0 ) \not \equiv 0 $. 
Lemma \ref{l:anal} then implies that 
\begin{equation}
\label{eq:w0tw0}  \supp w = \supp \widetilde {w} = \mathbb C / 3 \Lambda . 
\end{equation}
Since $ \lambda ( t ) $ is assumed to be the only eigenvalue of $ R V ( t ) $ in $ \Omega $ 
and since it has fixed algebraic and geometric multiplicity, 
the functions $ t \mapsto \lambda ( t V ) , \Pi_{ W + t V } , w ( t ) $ depend smoothly on  $ t $. 
Hence, we can differentiate:
\[ \begin{split} 
0 & = \frac{ d }{ d t } (  R ( W + tV ) - \lambda ( t ) )^{k_0} \Pi_{W + t V } h \\
&  =
\sum_{ \ell = 0}^{k_0-1} (  R ( W + t V )  - \lambda ( t ) )^\ell R V (  R ( W + t V )  - \lambda ( t ) )^{k_0-1- \ell} \Pi_{W + t V } h \\
& \ \ \ \ \ + 
( R(  W + tV )  - \lambda  ( t ) ) H(t) 
\end{split}\]
where $ H  ( t ) \in \mathscr H $.  We now put $ t = 0 $ and take
the $ \mathscr H $ inner product with $ \widetilde {w } $: the term with
$ H  (0 ) $ disappears as $ ( R   W  - \lambda( 0 )  )^{k_0} \Pi_{W}^* \equiv  0 $
as do all the terms with $ \ell > 0 $. 
Consequently, we obtain 
\[ \forall \,  V \in \mathscr V \ \  \langle V {w } ,
R^*  \widetilde {w } \rangle = 0 .\]
Since $ V \in L^2_{{0,1} } $, $ w \in L^2_{{0,j}}  $,  $ R^* \widetilde {w } \in L^2_{{0,j+1}}$, we conclude
that  (with $ \circ_j $ denoting components of $ \bullet = w, \widetilde{w } $)
\begin{equation}
\label{eq:UpUm}   \langle U_+  w_2 , R^* \widetilde w_1 \rangle_{L^2 ( F ) } + 
\langle U_- w_1 , R^* \widetilde w_2 \rangle_{L^2 ( F ) } 
= 0 , 
\end{equation} 
where $ F $ is a fundamental domain of the joint group action defined by $\mathscr L$ and $\mathscr C$. 
Since $ V $ is arbitrary on $F$, 
this implies that $ \bar {w}  ( z) ( R^* \widetilde {w } ) ( z ) \equiv 0 $, 
which in turn contradicts \eqref{eq:w0tw0}.

 It remains to consider the case of $ k_0 = 1 $ in \eqref{eq:PVMim}. In that
 case the finite rank projection $ \Pi_W $ can be written as (with the notation,
 $ ( f \otimes g ) ( u ) := f \langle u , g \rangle $)
 \begin{equation}
 \label{eq:PiW}   \Pi_W = \sum_{ j =1}^M w_j \otimes \widetilde {w} _j, \ \ \ 
 \langle {w}_j , \widetilde {w}_k \rangle = \delta_{ jk} , \ \ \ ( RW - \lambda_0 ) {w}_j = 0, \ \ 
 ( W^* R^* - \bar \lambda_0 ) \widetilde {w}_k = 0 . \end{equation}

 Then, 
 \[ \begin{split} 0 & =  \frac{d}{dt} \left[ (  \lambda ( t )  - R ( W + t V ) )  \Pi_{ W + t V } \right] \\
 & =\lambda'(t)  \Pi_{ W + t V } - R V \Pi_{W + t V } + 
( \lambda ( t ) - R ( W + t V ) ) \frac{d}{dt}  \Pi_{ W + t V } \end{split}\] 
Applied to $ w_j $ and paired with $\widetilde {w}_k $ we get at $ t = 0$,
\[ 0 = \lambda'(0 ) \delta_{jk}   -  \langle R V w_j , \widetilde {w}_k \rangle . \] 
Hence we need to show that for $ j \neq k $
\begin{equation}
\label{eq:wjk}  
\langle R V w_j , \widetilde {w}_k \rangle = 0 , \ \ \forall \, V \in \mathscr V \ \Longrightarrow \ 
w_j = \widetilde {w} _k = 0 . \end{equation}
But that is done as in the discussion after \eqref{eq:UpUm}.

We have now proved that \eqref{eq:option1} holds and we use it now to 
prove \eqref{eq:WdeltaW1} by induction on $ m_{ W } ( \lambda_0 ) $ where $ \lambda_0 $ is the 
unique eigenvalues of $ RW  $ in $ D( \lambda_0, 2 r )$, $ \Omega := D ( \lambda_0, r ) $.
 If $ m_W ( \lambda_0 ) = 1 $ there is nothing to prove. Assuming that 
we proved \eqref{eq:WdeltaW1} for $ m_W ( \lambda_0 ) < M $ assume that 
$ m_W ( \lambda_0 ) = M $. From \eqref{eq:option1} we see that we can find $ V $, 
$ \| V_0 \|_{\delta } < \varepsilon/ 2 $ such that $ m_{ W + V_0 } ( \Omega ) = 
m_{ W } ( \Omega ) $ (see \eqref{eq:mVconstant}) and such that
all eigenvalues in 
$ \Omega $, $ \lambda_1, \cdots, \lambda_k$, satisfy $ m_{ W + V_0 } ( \lambda_j ) < M $. 
We now find $ r_j $ such that, 
\[ \begin{gathered}
D ( \lambda_j , 2 r_j ) \subset \Omega, \ \ \  D ( \lambda_j, 2 r_j ) \cap D ( \lambda_k , 2 r_k ) = 
\emptyset , \  \ j \neq k ,\\
 \{  \lambda_j \} = D( \lambda_j , 2 r _j ) \cap \Spec ( R (  
W + V_0 ) ) . \end{gathered} \]
 We put $ \Omega_j := D ( \lambda_j, r_j ) $ and
apply \eqref{eq:WdeltaW1} successively to $ W + V_0 + \cdots V_{j-1} $, 
$ j = 1, \cdots , k $,  in 
$ \Omega_j $ with $ \|V_{j} \|_\delta < \varepsilon/2^{j+1} $. That gives the desired 
$ V = \sum_{j=0}^k V_j $. 
\end{proof}

\section{Generic simplicity}

In this section we complete
the proof of Theorem \ref{t:sim}.

We already showed in Proposition \ref{prop:equiv_spec}\footnote{We stated Proposition \ref{prop:equiv_spec} for a smaller class of potentials than the generalized tunnelling potentials considered here, see \eqref{eq:newU}, but the proof only uses only translational and rotational symmetries which are still satisfied for generalized tunnelling potentials } that $\Spec_{L^2_{0,0}}(RW) = \Spec_{L^2_{0,1}}(RW)$ and know from the previous Lemma that we can ensure simplicity of spectra of $RW$ in each representation $L^2_{0,j}$. We shall now see that we can split spectra of $RW$ in $L^2_{0,0},L^2_{0,1}$ from the one in $L^2_{0,2}.$
\begin{lemm}
\label{l:split}
Suppose that 
\[  \Spec_{ L^2_{{0,j } } } ( R W ) \cap D( \lambda_0 , 2r )  = \{ \lambda_0 \}  , \ \ \  j \in \ZZ_3 ,  \   r > 0 ,\]
and $ \lambda_0 $ is a simple eigenvalue of $ RW|_{L^2_{{0,j } } } $. 
Then, for every $ \varepsilon > 0 $ there exists
$ V \in \mathscr V $, $ \| V \|_{\delta } < \varepsilon$,  such that for some $\lambda_1 \neq \lambda_2$
\begin{equation}
\begin{split}
\label{eq:split} 
&\Spec_{ L^2_{{0,2 } } } ( R ( W + V ) ) \cap D( \lambda_0 , r )  = \{ \lambda_2 \}, \\&\Spec_{ L^2_{{0,j } } } ( R ( W + V ) ) \cap D( \lambda_0 , r )  = \{ \lambda_1 \}, j \in \{0,1\}. \ \  
\end{split}
\end{equation}
\end{lemm}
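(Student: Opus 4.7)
The plan is to use analytic perturbation theory combined with a contradiction argument modelled on the $k_0=1$ case in the proof of Lemma \ref{l:sim}. By hypothesis, $\lambda_0$ is a simple eigenvalue of $RW|_{L^2_{0,j}}$ for every $j\in\mathbb Z_3$, with no other eigenvalue in $D(\lambda_0,2r)$. For $V\in\mathscr V$ with $\|V\|_\delta$ sufficiently small and $t\in[0,1]$, analytic perturbation theory then yields analytic branches $\lambda_j(t)\in D(\lambda_0,r)$ of simple eigenvalues of $R(W+tV)|_{L^2_{0,j}}$. By Proposition \ref{prop:equiv_spec}, $\lambda_0(t)\equiv\lambda_1(t)$, so it is enough to find $V\in\mathscr V$ with $\lambda_0'(0)\neq\lambda_2'(0)$; replacing $V$ by $tV$ for small $t$ will then give \eqref{eq:split} with $\lambda_1:=\lambda_0(t)$ and $\lambda_2:=\lambda_2(t)$.

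First-order perturbation theory, applied to normalized eigenvectors $u_j\in L^2_{0,j}$ and dual eigenvectors $\widetilde u_j\in L^2_{0,j}$ of $RW$ at $\lambda_0$ with $\langle u_j,\widetilde u_j\rangle=1$, gives
\[
\lambda_j'(0) \;=\; \langle RV u_j,\widetilde u_j\rangle \;=\; \langle V u_j,R^*\widetilde u_j\rangle, \qquad j=0,2.
\]
Arguing by contradiction, assume the right-hand sides are equal for every $V\in\mathscr V$. Writing $V=\bigl(\begin{smallmatrix}0 & U_+\\ U_- & 0\end{smallmatrix}\bigr)$ and letting $U_\pm$ vary independently in the admissible class, the equality splits into two integral identities
\[
\int_F U_+\bigl[u_0^{(2)}\overline{(R^*\widetilde u_0)^{(1)}}-u_2^{(2)}\overline{(R^*\widetilde u_2)^{(1)}}\bigr]\,dz \;=\;0,
\]
\[
\int_F U_-\bigl[u_0^{(1)}\overline{(R^*\widetilde u_0)^{(2)}}-u_2^{(1)}\overline{(R^*\widetilde u_2)^{(2)}}\bigr]\,dz \;=\;0,
\]
where $F$ is a fundamental domain for the joint $\Lambda$-translation / $G_3$-rotation action. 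Exactly as in the step leading to \eqref{eq:UpUm} in the proof of Lemma \ref{l:sim}, the density of admissible $U_\pm$ in the appropriate representation of $L^2(F)$ forces the bracketed integrands to vanish on $F$; by Lemma \ref{l:anal} they are real-analytic and hence vanish on all of $\mathbb C$.

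To close the argument I will invoke the zero structure of Theorem \ref{theo:zeros} together with the antilinear symmetry $\mathscr A$ in \eqref{eq:defAsc}, which relates the eigenvector equation to its dual and forces $\widetilde u_j$ to share the zero structure of $u_j$: $u_0,\widetilde u_0\in L^2_{0,0}$ have simple zeros exactly at $\pm z_S$, while $u_2,\widetilde u_2\in L^2_{0,2}$ have a simple zero exactly at $0$. Evaluating the two identities at $z=0$, the $u_2$--terms vanish, so some component of $R^*\widetilde u_0$ must also vanish at $0$; applying $2D_z$ and using $V^*R^*\widetilde u_0=\bar\lambda_0\widetilde u_0$ upgrades this to an extra zero of $\widetilde u_0$ at $0$, contradicting Theorem \ref{theo:zeros}. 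Evaluating instead at $\pm z_S$ yields, symmetrically, extra zeros of $\widetilde u_2$ at $\pm z_S$, again violating Theorem \ref{theo:zeros}.

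The main obstacle is this last step: one has to translate forced pointwise vanishing of specific components of $R^*\widetilde u_j$ into genuine additional zeros of $\widetilde u_j$ (beyond those catalogued in Theorem \ref{theo:zeros}). This hinges on the fact that $R^*=(2D_z)^{-1}$ is a translation-invariant left inverse for $2D_z$ and on the eigenvalue equation for $\widetilde u_j$, so that local vanishing of a component of $R^*\widetilde u_j$ propagates to an additional vanishing condition of the same order on $\widetilde u_j$ at the corresponding symmetry point, which is precisely the configuration ruled out by Theorem \ref{theo:zeros}. Once this incompatibility is verified the contradiction is immediate, producing the desired $V$ and hence \eqref{eq:split}.
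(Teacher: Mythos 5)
Your overall skeleton matches the paper's: first-order perturbation theory reduces the statement to finding $V$ with $\langle V w_0,R^*\widetilde w_0\rangle\neq\langle V w_2,R^*\widetilde w_2\rangle$, and one then argues by contradiction that if equality held for all admissible $V$ the integrands would vanish identically, forcing illegal zeros. Up to the point where you split $V$ into $U_\pm$ and pass to pointwise vanishing of the bracketed expressions, your argument is essentially the paper's.

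The closing step, however, has a genuine gap. You assert that ``the antilinear symmetry $\mathscr A$ forces $\widetilde u_j$ to share the zero structure of $u_j$,'' so that $\widetilde u_0$ has simple zeros exactly at $\pm z_S$ and $\widetilde u_2$ a simple zero exactly at $0$. That is not what $\mathscr A$ gives. Since $RW$ shifts the rotational index, the dual eigenvector in $L^2_{0,p}$ is related to the eigenvector in the \emph{other} representation: the simplicity hypothesis combined with $\mathscr A D_V(\alpha)\mathscr A=-D_V(\alpha)^*$ yields $R^*\widetilde w_p=\gamma_{1-p}\,\mathscr A w_{1-p}$, so $R^*\widetilde w_0\propto\mathscr A w_1$ and $R^*\widetilde w_2\propto\mathscr A w_2$. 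In particular $\widetilde w_0\propto (2D_z)\mathscr A w_1$, whose zero set is governed by $w_1$ (double zero at $0$), not by $w_0$; the ``same zero structure as $u_0$'' claim is false. Moreover Theorem \ref{theo:zeros} catalogues zeros of elements of $\ker D(\alpha)$, not of dual eigenvectors of $(RW)^*$, so it cannot be invoked for $\widetilde u_j$ directly.

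Because of this, your proposed mechanism --- evaluate at $0,\pm z_S$, deduce pointwise vanishing of a component of $R^*\widetilde u_j$, then ``upgrade'' via $2D_z$ to a zero of $\widetilde u_j$ --- does not close: $R^*=(2D_z)^{-1}$ is nonlocal, so pointwise vanishing of $R^*\widetilde u_j$ at a point does not propagate to a zero of $\widetilde u_j$ at that point, and in any case the points you would reach are the wrong ones. The paper instead substitutes $R^*\widetilde w_p=\gamma_{1-p}\mathscr A w_{1-p}$ \emph{before} comparing, which turns the equality of the two pairings into the purely $w$-level identity $\bar\gamma_2\,w_{2\ell}^2=\bar\gamma_1\,w_{0\ell}w_{1\ell}$ for $\ell\in\{1,2\}$. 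Combined with Lemma \ref{lemm:zeros} and the order-of-vanishing count at $\pm z_S$ obtained from \eqref{eq:order_of_zeros}, this gives $w_2$ at least five zeros counted with multiplicity, contradicting the theta-function bound of \cite[Lemma 4.1]{bhz2}. You need to add the explicit $\mathscr A$-identification of $R^*\widetilde w_p$ and replace the evaluation-at-points step with the multiplicity count; as written the argument does not reach a contradiction.
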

\begin{proof}
As in \eqref{eq:PiW} we have  $ {w}_k , 
\widetilde {w}_k \in L^2_{{0,j_k} } $, such that $ \langle w_k, \widetilde {w}_k \rangle = 1$, and 
\[ ( 2  \lambda_0  D_{\bar z }   - W ) {w} _k = 0 , \ \ \  ( 2 \bar \lambda_0 
 D_z  - W^* ) R^* \widetilde {w}_k = 0 .\]
Since the eigenvalue $ \lambda_0 $ is assumed to be simple, \eqref{eq:defAsc} gives
\begin{equation}
\label{eq:w2wt}  R^* \widetilde {w}_{p}  = \gamma_{1-p}  \mathscr A 
w_{1-p} = \gamma_{1-p} \begin{pmatrix} \ \bar w_{(1-p)2} \\ - \bar w_{(1-p)1} \end{pmatrix}, \ \
w_p = \begin{pmatrix} w_{p1} \\ w_{p2} \end{pmatrix}  , \ \ 
\gamma_{p} \in \mathbb C^* . \end{equation}
We can split an eigenvalue with eigenvectors $ w_k $,  if we can find
$ V $ such that (see \eqref{eq:UpUm} for the notation)
\[  \begin{gathered} \langle V w_2 , R^*\widetilde  {w}_2 \rangle_{L^2 ( F ) }  \neq 
 \langle V w_0 , R^*\widetilde   {w}_0\rangle_{L^2 ( F ) } , \text{ with }\\
 \langle V w_2, R^* \widetilde  {w}_2 \rangle =   \bar{\gamma_2}  \int_F  \left( U_+( z)  w_{22}^2 ( z)  -   U_-(z) w_{ 21}^2( z )\right) d m ( z) \text{ and }\\
 \langle V w_0 , R^* \widetilde  {w}_0 \rangle = \bar \gamma_1 \langle V w_0 , \mathscr A w_1 \rangle=  \bar{\gamma_1}  \int_F  \left( U_+( z)  w_{02} ( z)w_{12} ( z)   -   U_-(z) w_{ 01}( z )w_{11}( z )\right) d m ( z)
\end{gathered} \]
 where we used \eqref{eq:w2wt} to obtain the last equality. 
 If for all (analytic) $U_\pm  $ the terms were equal it would follow that
$\bar{\gamma_2} w_{2\ell}^2 = \bar{\gamma_1} w_{0\ell}w_{1\ell}$ for $\ell \in \{1,2\}$. This implies that $w_{2\ell}$ vanishes at $0, \pm z_S$. However, the zeros at $\pm z_S$ have to be at least of order $2$ since by rotational and translational symmetry
\begin{equation}
\label{eq:order_of_zeros}
\begin{split} w_2(z\pm z_S) &= \bar \omega w_2(\omega (z \pm z_S)) =  \bar \omega w_2(\omega z \pm z_S \mp (1+\omega)) \\
&=\bar \omega  \operatorname{diag}(\omega^{\pm 1}, \omega^{\mp 1}) w(\omega z \pm z_S). \end{split}
\end{equation}
This means that for instance at $z_S$ we have $w_2(z\pm z_S) = \operatorname{diag}(1,\omega) w(\omega z \pm z_S)$ which means that the first component has to vanish at least to third order and the second component at least to second order. 
This implies that $w_2$ has at least $5$ zeros counting multiplicities and this is impossible by the usual theta function argument \cite[Lemma $4.1$]{bhz2}.
\end{proof}
We can now finish
\begin{proof}[Proof of Theorem \ref{t:sim}]
Lemma \ref{l:sim} (strictly speaking its proof) 
and Lemma \ref{l:split} now show
that for every $ r > 0 $, the set
\[  \mathscr V_r := \{ V :  \text{ $ R V \vert_{ L^2_{0,1}\oplus L^2_{0,2} } $  has simple eigenvalues in $ \mathbb C \setminus 
D ( 0 , r ) $} \} \] 
is open and dense. We then obtain $ \mathscr V_0 $ by taking the intersection of 
$ \mathscr V_{1/n } $.
\end{proof}

\section{The Chern number of a two-fold degenerate flat band} 
\label{sec:Chern}
In this section we compute the Chern number of the flat band in the case of 2-fold degeneracy.
We start by a general discussion of of the Chern connection 
and the Berry connection in the case holomorphic vector bundles. Although
we stress our case of the two torus, \S\S \ref{s:cc} and \ref{s:Berry} apply to vector bundles over
more general manifolds.

\subsection{The Chern connection} 
\label{s:cc}
Suppose that $  \pi : E \mapsto  X $ 
is a holomorphic vector bundle over a torus $ X =  \mathbb C/\Lambda^*  $ (see \cite[\S 2.7]{notes} for a quick introduction 
sufficient for our purposes or  \cite{Wells} for an in-depth treatment), and that $ E  $ is a sub-bundle of a trivial Hilbert bundle over $ X $, $ X \times \mathscr H $,  where $ \mathscr H $ is
a Hilbert space. This gives a hermitian structure on $ E $: for $ k \in X  $, 
we introduce an inner product on the fibers $ E_k := \pi^{-1} ( k ) $, using $ E_k \subset \mathscr H $:
\[   \langle \zeta , \zeta' \rangle_k := \langle \zeta, \zeta' \rangle_{\mathscr H } , \ \ \ \zeta, \zeta' \in E_k . \]
We then have two natural connections on $ E $, the {\em Chern connection}, available when 
the bundle is holomorphic and equipped with hermitian structure, and a hermitian connection\footnote{$ D :
C^\infty ( X , E ) \to C^\infty ( X, E \otimes T^*X ) $ is a connection if for any $ f \in C^\infty ( X ) $, 
$ D ( f s ) = f D s + s df $. A connection $ D $ is hermitian if  $ d \langle s ( k ) , s' ( k ) \rangle_k  = \langle D s ( k ), s' ( k ) \rangle_k + 
\langle s( k ) , Ds'( k ) \rangle $.},  available
for any smooth vector bundle embedded in a Hilbert bundle. In the context of vector bundles of 
eigenfunctions, the latter is called the {\em Berry connection} and we adopt this terminology for the
general case as well.

We first define the Chern connection. For that we choose a local holomorphic trivialization $ U \subset X $, $ \pi^{-1} ( U ) \simeq U \times \mathbb C^n $, for which the hermitian metric is given by 
\begin{equation}
\label{eq:defGz} \langle \zeta , \zeta\rangle_k = \langle G ( k ) \zeta , \zeta \rangle = \sum_{i,j=1}^n G_{ij}(k) \zeta_i \bar \zeta_j  \ \ \zeta \in \mathbb C^n , \ \ 
k \in U . \end{equation}
We note that if $ \{ u_1 ( k ) , \ldots , u_n ( k ) \} \subset \mathscr H $ is a basis of $ E_k $ for $ k \in U $, 
and $ U \ni k \to u_j ( k ) $ are holomorphic, then $ G ( k ) $ is the Gramian matrix:
\begin{equation}
\label{eq:gram} G ( k ) := \left( \langle u_i ( k ) , u_j ( k ) \rangle_{\mathscr H} \right)_{ 1\leq i, j \leq n } . 
\end{equation}
If $ s : X \to E $ is a section, then the Chern connection $
D_C : C^\infty (X;  E ) \to C^\infty (X ; E \otimes T^* X) $, over $ U $ is given by (using only 
the local trivialization and \eqref{eq:defGz})
\begin{equation}
\label{eq:Chern}
\begin{gathered}    D_C s ( k ) := d s ( k ) +\eta_C ( k )  s ( k )  , \\ \eta_C( k )  :=  G(k)^{-1} \partial_k G ( k ) \, dk \in
C^\infty ( U ,\Hom( \mathbb C^n , \mathbb C^n ) \otimes (T^* U)^{1,0} ) .
\end{gathered}
 \end{equation}
Here  $ \partial_k $ denotes the holomorphic derivative and the notation $ (T^* U)^{1,0} $ indicates that only $ dk $ and not $ d \bar k $ appear in the 
matrix valued 1-form $ \eta_C $,  $ \eta_C = \eta_C^{1,0} $. We also recall that  $ D_C $ is the unique hermitian connection with this property -- see \cite[Theorem 2.1]{Wells}.  

For the definition of the Berry connection we only require that $ E \to X $ is a smooth
vector bundle which is a subbundle of $ X \times \mathscr H$, where $ \mathscr H $ is a Hilbert space. 
That means for $ k \in X $ we have a well defined orthogonal projection $ \Pi ( k ) : \mathscr H 
\to E_k := \pi^{-1} ( k ) $ and an inclusion map $ \iota : E \hookrightarrow X \times \mathscr H$. The formula
for the Berry connection is then given by 
\begin{equation}
\label{eq:Berry} D_B s ( k ) := \Pi ( k ) ( d ( \iota \circ s ) ( k ) ). 
\end{equation}
To find a local expression similar to \eqref{eq:Chern} we use the Gramian \eqref{eq:gram}. If
$ s ( k ) = \sum_{ j=1}^n s_j^U  ( k ) u_j ( k )  =: A (k ) s^U ( k ) $,  $ A( k ) : \mathbb C^n  \to \mathscr H $
(so that $ A( k ) $ provides a 
local trivialization) then $ \Pi ( k ) = A ( k ) G ( k ) ^{-1} A(k)^* $ and 
\begin{equation}
\label{eq:Berry1} 
\begin{gathered} 
\begin{split} D_B s ( k ) & = \Pi ( k ) \sum_{ j=1} ^n  \left( d s_j^U ( k ) u_j ( k ) + s_j^U ( k ) d u_j ( k )  \right) 
\\ &  = 
A ( k ) ( d s^U ( k )  + \eta_B ( k)  s^U ( k ) ) , \end{split}  \\
\eta_B (k) =  G ( k )^{-1} B ( k ) \in C^\infty ( U , \Hom( \mathbb C^n, \mathbb C^n )  \otimes T^* U ) , \\ 
B ( k )_{  \ell j  } := \langle d u_j( k ) , u_\ell ( k )  \rangle_{\mathscr H }  \in C^\infty ( U, T^* U ) . 
\end{gathered}
\end{equation}
These formulas  hold for choices of $ u_j $ which are not necessarily holomorphic. However if, as in 
\eqref{eq:gram},  $ k \mapsto u_j ( k)  $ are holomorphic, then
\begin{equation}
\label{eq:B2G} \begin{split} (\partial_k G ( k ))_{ij} dk  & =   \langle \partial_k u_i ( k ) , u_j ( k ) \rangle_{\mathscr H} dk  + 
\langle u_i ( k ) ,  \partial_{\bar k } u_j ( k ) \rangle_{\mathscr H} dk \\
& =   \langle \partial_k u_i ( k ) , u_j ( k ) \rangle_{\mathscr H} dk  \\
& = \langle d u_i ( k ) , u_j ( k ) \rangle  = B ( k )_{ij} ,  \end{split}  \end{equation}
since $ \partial_{\bar k } u_j ( k )   = 0 $ and $ d w = \partial_k w \, dk + \partial_{\bar k }w  \, d \bar k $. In particular, that means that in the notation of \eqref{eq:Chern} and \eqref{eq:Berry} 
\begin{equation}
\label{eq:C2B} \begin{split}  \text{$ U \ni k \mapsto u_\ell ( k )$ holomorphic } & \ \Longrightarrow \ \  \eta_C ( k ) = \eta_B ( k ) , \ k \in U 
 \\ &  \ \Longrightarrow \ \    D_C = D_B ,
\end{split}
\end{equation}

We record this standard fact as
\begin{prop}
Suppose that $ X $ is a complex manifold and $ E \mapsto X $ is a holomorphic vector
bundle with a holomorphic embedding $ \iota : E \to X \times  \mathscr H $ into a trivial 
Hilbert bundle. Then the Berry connection \eqref{eq:Berry} and the Chern connection 
\eqref{eq:Chern} defined using the hermitian structure on $ \mathscr H $
are equal.
\end{prop}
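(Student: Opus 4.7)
The plan is to reduce the global statement to a local computation in a holomorphic frame, and then invoke the uniqueness characterization of the Chern connection. The local computation has, in fact, already been carried out in \eqref{eq:B2G}--\eqref{eq:C2B}; what remains is to observe that this is enough.

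First, I would cover $X$ by open sets $U$ over which $E$ admits a holomorphic trivialization, i.e.\ a frame $\{u_1(k),\dots,u_n(k)\}\subset \mathscr H$ with $U\ni k\mapsto u_j(k)$ holomorphic and $E_k=\mathrm{span}\{u_j(k)\}$. Such frames exist because $E\to X$ is assumed to be a holomorphic vector bundle. Using the holomorphic embedding $\iota:E\hookrightarrow X\times\mathscr H$, the Gramian $G(k)$ defined in \eqref{eq:gram} coincides with the matrix of the hermitian metric in this frame, \eqref{eq:defGz}. In particular the local expressions \eqref{eq:Chern} for the Chern connection and \eqref{eq:Berry1} for the Berry connection are both available in the frame $\{u_j\}$.

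Next, I would invoke the identity \eqref{eq:B2G}: because each $u_j$ is holomorphic in $k$, the antiholomorphic derivatives $\partial_{\bar k}u_j$ vanish, and so
\[
(\partial_k G(k))_{ij}\,dk \;=\; \langle \partial_k u_i(k),u_j(k)\rangle_{\mathscr H}\,dk \;=\; \langle du_i(k),u_j(k)\rangle_{\mathscr H} \;=\; B(k)_{ij}.
\]
Therefore the connection $1$-forms of the two connections satisfy
\[
\eta_C(k) \;=\; G(k)^{-1}\partial_k G(k)\,dk \;=\; G(k)^{-1}B(k) \;=\; \eta_B(k),
\]
which in turn shows that $D_C s = D_B s$ for every smooth section $s$ over $U$. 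Since $X$ is covered by such trivializing opens and both connections are intrinsically defined (independent of the frame), we conclude $D_C=D_B$ globally.

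If one prefers a coordinate-free argument, the alternative route is to verify the two defining properties of the Chern connection for $D_B$: (i) hermiticity, which follows immediately from $\Pi(k)^*=\Pi(k)$ and differentiation of $\langle s,s'\rangle=\langle \iota\circ s,\iota\circ s'\rangle_{\mathscr H}$ together with $\Pi\iota=\iota$; and (ii) the fact that $D_B^{0,1}=\bar\partial_E$, which holds because for any local holomorphic section $s$ of $E$ one has $ds=\partial s$ is of type $(1,0)$ in $\mathscr H$, and hence $D_Bs=\Pi\,ds$ has no $d\bar k$-component. Uniqueness of the hermitian connection compatible with the holomorphic structure (see e.g.\ \cite[Theorem 2.1]{Wells}) then forces $D_B=D_C$. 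I expect no real obstacle here: the main content is simply noticing that holomorphicity of the embedding $\iota$ is exactly what kills the $\partial_{\bar k}u_j$ terms in \eqref{eq:B2G}, which is what makes the two local connection forms coincide.
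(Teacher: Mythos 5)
Your argument is correct and follows the paper's own route exactly: the local computation \eqref{eq:B2G}--\eqref{eq:C2B} in a holomorphic frame (the paper records this calculation immediately before stating the proposition), plus the coordinate-free alternative via uniqueness of the Chern connection that the paper attributes to Michael Singer in the remark that follows. Nothing to add.
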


\noindent
{\bf Remark.} As was pointed out to us by Michael Singer, the conclusion \eqref{eq:C2B} 
could be deduced directly from the uniqueness of the Chern connection mentioned after 
\eqref{eq:Chern}: using \eqref{eq:Berry} we have 
$  D_B^{(0,1)} s( k ) = \Pi ( k ) (  d^{(0,1)} ( \iota \circ s ) ( k )  ) $. But as the embedding $ \iota $
(an inclusion, in our case) is holomorphic this implies that $ D_B^{(0,1)} s ( k ) = 0 $ for holomorphic sections.
This and being hermitian characterize the Chern connection.
We should also stress that the discussion above does not depend on the 
fact that $ X $ has complex dimension one.

The curvature of a connection $ D $ is given by 
\begin{equation}
\label{eq:curv}   \Theta := D \circ D  , 
\end{equation}  
which is a globally defined
two form with values in $\Hom ( E, E ) $. In a local trivialization in which $  D = d + \eta $, we have
$ \Theta = d \eta + \eta \wedge \eta $. For the Chern connection, for $ X $ of any dimension
$ \Theta = \bar \partial \partial \eta_C $ since \eqref{eq:Chern} shows that 
 $ \partial \eta_C = - \eta_C \wedge \eta_C $ (when $ X $ has a complex dimension one, 
 this is obvious as $ dk \wedge dk = 0 $). 
It is then immediate from \eqref{eq:C2B} that
\begin{equation}
\label{eq:B2C}
\Theta := D_C \circ D_C = D_B \circ D_B , 
\end{equation}
that is,  in the holomorphic case,  the curvatures defined using the Chern curvature or the Berry curvature agree for
holomorphic vector bundles embedded in trivial Hilbert bundles. 

The Chern class (a Chern number in the case of $ \mathbb C/\Lambda^* $) is given by 
\[     c_1 ( E ) := \frac{i}{ 2 \pi } \int_{\mathbb C/\Lambda^* }   \tr \Theta \in \mathbb Z , \]
where we note that over $ U \subset \mathbb C/\Lambda^* $ for which we defined 
\eqref{eq:gram}, 
\begin{equation}
\label{eq:g2C} 
\begin{split} 
\tr \Theta &  = \partial_{\bar k } \tr G ( k )^{-1} \partial_k G ( k ) \, d \bar k \wedge d k \\
& = 
\partial_{\bar k } \partial_k \log g ( k ) \, d \bar k \wedge d k  , \ \ \ \  g ( k ) := \det G ( k ) ,
\end{split} 
\end{equation}
where we used Jacobi's formula \cite[(B.5.14)]{res}. 
In particular,
\[ H ( k ) := \partial_{\bar k } \partial_k \log g ( k ) = g(k)^{-2}( g(k) \partial_{\bar k } \partial_k g(k) - \vert \partial_k g(k)\vert^2).\]

For any holomorphic hermitian vector bundle the trace of the curvature of the Chern connection,
$ \tr \Theta $ can be interpreted as a curvature of a line bundle. If $ \pi  : E \to X $ has rank 
$ n$, we obtain a line bundle $ \pi : L :=  \wedge^n E \to X $. It inherits hermitian 
structure from $ E $. If we define the Chern connection on $ \wedge^n E $ as in \eqref{eq:Chern} 
(using only holomorphy and the hermitian structure) we obtain a new curvature $ \Theta_L $ which 
is a differential two form on $ X $, and 
\[   \Theta_L = \tr \Theta . \]

In case when $ E $ embeds holomorphically in $ X \times \mathscr H $ we can then take, as in \eqref{eq:gram}, $ k \mapsto u_j ( k ) \in \mathscr H $, $ j = 1, \cdots,  n $,
a local holomorphic basis of $ E $. Then for 
\begin{equation}
\label{eq:defPhi}  \Phi ( k ) := \wedge_{ j=1}^n u_j ( k ) \in  \wedge^n E_k \subset \wedge^n \mathscr H , \end{equation}
we have
\[  \| \Phi ( k ) \|^2_{\wedge^n \mathscr H} = 
\det \left( (\langle u_j (k ), u_\ell ( k ) \rangle_{\mathscr H}  ) _{ 1\leq j,\ell \leq n }\right) = \det G ( k ) = g ( k )  . \]
In particular when $ X = \mathbb C /\Lambda^* $, we obtain, as in \cite[(5.10)]{bhz2}, 
{ \begin{equation}
\label{eq:defH}  \Theta_L = H( k )  \ d \bar k \wedge d k , 
\end{equation} }
where $ H $ is given by 
\begin{equation}
\label{eq:H2P}   H ( k ) = \| \Phi ( k ) \|^{-4} \left( \| \Phi ( k ) \|^2 
 \| \partial_k  \Phi ( k ) \|^2 - | \langle \partial_k \Phi (k) , 
 \Phi (k) \rangle |^2 \right) \geq 0 ,   
 \end{equation}
where $   \| \bullet \| = \| \bullet \|_{ \wedge^n \mathscr H}$. 

\medskip

\noindent
{\bf Remark.} 
From a physics perspective the  construction of the line bundle $ \wedge^n E $, 
in the case of $ E \subset X \times \mathscr H $ can be interpreted as 
the Slater determinant of the individual Bloch functions on the fermionic $n$-particle Hilbert space. We thus find that the trace of the curvature of the rank $n$ vector bundle coincides with the curvature of the line bundle described by the $n$-particle wavefunction.

\subsection{The Berry curvature}
\label{s:Berry}
For completeness we derive the standard formula for the curvature of the Berry connection 
\eqref{eq:Berry}:

\begin{prop}
\label{p:BC}
Suppose that $ \pi : E \to X $ is a complex vector bundle over a manifold $ X $ and that
there exists an embedding $ \iota : E \to X \times \mathscr H $ into a trivial Hilbert bundle.
Then the curvature of the connection \eqref{eq:Berry} is given  
in terms of the orthogonal projection $ \Pi ( k ) : \mathscr H \to E_k := \pi^{-1} ( k ) $, as 
\begin{equation}
\label{eq:Pi2Th}  \Theta  = \Pi \, d \Pi \wedge d \Pi |_E  , \end{equation}
and is a differential two form with values in $ \Hom ( E , E ) $.
\end{prop}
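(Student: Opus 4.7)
The plan is to compute $D_B \circ D_B$ directly from the definition \eqref{eq:Berry}, extended in the standard way to $E$-valued forms, and then to use the key algebraic identity $\Pi\,(d\Pi)\,\Pi = 0$, which follows from $\Pi^2 = \Pi$.

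First I would extend the Berry connection to $E$-valued $p$-forms by the formula
\[ D_B \omega := \Pi \, d(\iota \circ \omega), \]
where the right-hand side is computed by treating $\omega$ as an $\mathscr H$-valued $p$-form via the inclusion $\iota : E \hookrightarrow X \times \mathscr H$. A short check (using $\Pi(\iota \circ s) = \iota \circ s$ for sections of $E$) shows that this extension satisfies the Leibniz rule $D_B(\alpha \otimes s) = d\alpha \otimes s + (-1)^{|\alpha|} \alpha \wedge D_B s$, so it agrees with the canonical extension of \eqref{eq:Berry} and the curvature is correctly given by $\Theta s = D_B(D_B s)$.

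Next, for a section $s$ of $E$ (identified with a $\mathscr H$-valued function lying in $E_k$ at each $k$), we have $s = \Pi s$, hence
\[ ds = (d\Pi)\, s + \Pi\, ds. \]
Applying $D_B$ to the $E$-valued $1$-form $D_B s = \Pi\, ds$, we obtain
\[ \Theta s \;=\; \Pi\, d(\Pi\, ds) \;=\; \Pi\,\bigl((d\Pi) \wedge ds\bigr) \;=\; \Pi\,(d\Pi) \wedge ds, \]
since $d^2 s = 0$ in $\mathscr H$. Substituting $ds = (d\Pi)s + \Pi\, ds$ gives
\[ \Theta s \;=\; \Pi\,(d\Pi) \wedge (d\Pi)\, s \;+\; \Pi\,(d\Pi)\wedge \Pi\, ds. \]

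The final step is the identity $\Pi\,(d\Pi)\,\Pi = 0$: differentiating $\Pi^2 = \Pi$ yields $(d\Pi)\Pi + \Pi\,(d\Pi) = d\Pi$, and multiplying on the left by $\Pi$ gives $\Pi\,(d\Pi)\,\Pi + \Pi\,(d\Pi) = \Pi\,(d\Pi)$, that is $\Pi\,(d\Pi)\,\Pi = 0$. This exactly kills the second term above and yields $\Theta s = \Pi\,(d\Pi) \wedge (d\Pi)\, s$, which is \eqref{eq:Pi2Th}. Global well-definedness as a $\mathrm{Hom}(E,E)$-valued $2$-form follows because both factors of $d\Pi$ are well defined as endomorphisms of $\mathscr H$ over $X$, and the outer $\Pi$ projects the result back into $E$.

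The main (minor) obstacle is the bookkeeping in the first step: one must ensure the extension $D_B \omega = \Pi\, d(\iota \circ \omega)$ is compatible with the Leibniz rule and with the definition \eqref{eq:Berry}. Once that is in place, the rest is a two-line computation using only $\Pi^2 = \Pi$, $d^2 = 0$, and the fact that $s$ takes values in $E$.
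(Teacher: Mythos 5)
Your proof is correct, but it takes a genuinely different route from the paper. The paper works in a local trivialization: it chooses a smoothly varying orthonormal frame $\{u_j(k)\}$ of $E_k$, packages it into an isometry $A(k):\mathbb C^n\to\mathscr H$ with $A^*A=I$ and $AA^*=\Pi$, computes the connection form $\eta = A^*\,dA$, and then matches $d\eta+\eta\wedge\eta$ against $\Pi\,d\Pi\wedge d\Pi$ by expanding both in terms of $A$, $A^*$ and their differentials and using $A^*dA=-dA^*\,A$. Your argument instead works globally and frame-free: you extend $D_B$ to $E$-valued forms by $D_B\omega=\Pi\,d(\iota\circ\omega)$, verify the Leibniz rule to ensure this is the canonical extension, and then reduce the curvature computation to the single algebraic identity $\Pi\,(d\Pi)\,\Pi=0$ (a direct consequence of $\Pi^2=\Pi$). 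The two approaches encode the same fact — the paper's $A^*dA=-dA^*A$ is essentially $\Pi\,d\Pi\,\Pi=0$ written in the frame — but your version avoids introducing local frames at all, which makes global well-definedness immediate and the calculation shorter. The paper's version is more explicit and dovetails with its earlier local formulas \eqref{eq:Berry1} for $\eta_B$; yours is the more conceptual route one usually sees for curvature of projection-defined connections. Both are complete, and your one real "obstacle" (that the extension $\Pi\,d(\iota\circ\cdot)$ really is the Leibniz-rule extension of $D_B$) is handled correctly.
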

\begin{proof}
This is a local computation so for some $ U \subset X $ we can choose a
smoothly varying orthonormal basis $ \{ u_j ( k ) \}_{ j =1}^n $, $ k \in U $.  Then in the notation 
of \eqref{eq:Berry1}  (we drop the dependence on $ k $ in $ A ( k ) $, $ \Pi ( k ) $ and $ E_k $)
\begin{equation} 
\label{eq:AA}    A : \mathbb C^n \to \mathscr H, \ \  A^* : \mathscr H \to \mathbb C^n , \ \ \ 
A A^* = \Pi, \ \   A^* A = I_{\mathbb C^n } . 
\end{equation}
With the trivialization given by $ A $, we have (using \eqref{eq:AA})
\[  D s = A^*  ( \Pi ( d ( A s ) ) = A^* \Pi A ds + A^* \Pi dA s = ds + A^* dA s =: ds + \eta ds   . \]
Hence, in this trivialization, the curvature 
is a differential two form with values in $ \Hom ( \mathbb C^n, \mathbb C^n ) $:
\[ \begin{split} A^* \Theta A & =   d \eta + \eta \wedge \eta = d ( A^* dA ) + A^* dA \wedge A^* d A\\
&  = dA^* \wedge dA + A^* dA \wedge  A^*dA  . \end{split} \]
The curvature $ \Theta = D_B \circ D_B $ which is a differential form with values in $ \Hom ( E,  E ) $, 
is then given by 
\begin{equation}
\label{eq:Th2A} \begin{split} \Theta & =  \Pi \Theta \Pi  = A ( dA^* \wedge dA  + A^* dA \wedge A^* dA  ) A^* \\
& = 
A dA^* \wedge dA A^* + A A^* d A \wedge A^* dA A^* \\
& = A dA^* \wedge dA A^* + A d A^*A \wedge dA^* A A^*  , \end{split}  \end{equation}
where we used $ d ( A^* A ) = 0 $.

The right hand side in \eqref{eq:Pi2Th} is given by 
\[  \begin{split} AA^*  d ( A A^* ) \wedge d ( A A^* ) & = 
AA^* ( ( dA A^* + A dA^* ) \wedge ( d A A^* + A d A^* ) 
\\
& = A A^* \left( d A \wedge ( A^* dA ) A^* + d A \wedge ( A^*A ) dA^* \right. \\
& \ \ \ \ \ \left. + A dA^* \wedge d A A^* + 
A  dA^* \wedge d A A^* \right) .
\end{split}  \]
From \eqref{eq:AA} we see that $ A^* A = I_{\mathbb C ^n } $ and
that $ A^* dA = - dA^* A $. Hence, 
\[ \begin{split} \Pi d \Pi \wedge d \Pi  & = 
AA^*  \left( - d A \wedge d A^* A A^* + d A \wedge dA^* \right.  \\
& \ \ \ \ \ +  \left.   A dA^* \wedge d A A^* + 
A  dA^* \wedge A d   A^* \right)  .
\end{split} \]
Acting on $ E $, $ AA^* = I_E $ and hence the first two terms in the bracket cancel:
\[   \Pi d \Pi \wedge d \Pi  |_E = A d A^* \wedge dA A^* |_E + A dA^* \wedge  A d A^* |_ E . \]
But from \eqref{eq:Th2A} that is the same as the action of $ \Theta $ on $ E $. 

\end{proof}

\subsection{Proof of Theorem \ref{t:Chern}}
\label{s:flat2}
We now consider  
\begin{equation}
\label{eq:defVk}  V ( k ) := \ker_{ H^1_0 }  ( D ( \alpha ) + k ) \subset L^2_0 .\end{equation}
This defines a (trivial) vector bundle $ \widetilde E \to \mathbb C $:
\[    \widetilde E := \{ ( k , v ) : v \in V ( k ) \} \subset \mathbb C \times L^2_0 ( \mathbb C/\Lambda; \mathbb C^2 )  . \]
To define a vector bundle over the torus $ \mathbb C /\Lambda^* $ we define an equivalence relation
on $ \mathbb C \times L^2_0 ( \mathbb C/ \Lambda ; \mathbb C^2 ) $:
\begin{equation}
\label{eq:deftau} 
\begin{gathered} \exists \, p \in\Lambda^* \   ( k, u ) \sim_\tau ( k + p , \tau ( p )^{-1} u ) , \ \  [ \tau ( p ) u ] ( z ) = e^{ i \langle z, k \rangle } v ( z ) , 
\end{gathered}
\end{equation}
and notice that $ \tau ( p )^{-1}  V ( k ) = V ( k + p ) $. Using this (see \cite[Lemma 8.4]{notes} or
\cite[Lemma 5.1]{bhz2}), 
\begin{equation}
\label{eq:defE}     E := \widetilde E\, /\sim_\tau  \, \to \mathbb C /\Lambda^* . 
\end{equation}
is a holomorphic vector bundle over $ \mathbb C/\Lambda^* $. 

Since $ \Pi ( k + p ) = \tau(p)^{-1} \Pi ( k ) \tau ( p ) $, the Berry connection defined by 
\eqref{eq:Berry} on $ \widetilde E$, satisfies
\[   ( D_B s ) ( k + p ) = \Pi ( k + p ) ( d ( \iota \circ s ) ( k +  p )) = 
\tau^{-1} ( p ) \Pi ( k ) d ( \iota \circ \tau ( p ) s ( k + p ) )  .\]
Hence, if for $ k \in U \subset X $,  $( k , s ( k )  ) \sim_\tau ( k' , s' ( k' ) ) $
then $ k' = k + p $, $ s' ( k  + p ) = \tau ( p)^{-1} s ( k ) $, for some $ p \in \Lambda^* $ and 
\[  ( k , D_B s ( k ) ) \sim_\tau ( k', D_B s' ( k' ) ) . \]
This means that $ D_B $ is a well defined connection on $ \widetilde E $. Since the Chern 
connection is intrinsically defined on $ \widetilde E $ using holomorphic and hermitian structures, 
the two connections are equal.

If $ m ( \alpha ) = m $ then by Theorem \ref{p:zeros} $ u \in \ker_{H^1_0} D ( \alpha ) $ has exactly $ m $ zeros and let us first
assume that they are simple (this can always be arranged by multiplication by a meromorphic function). Let us denote them by 
$ z_1, \cdots , z_m $. Then 
\begin{equation}
\label{eq:vectorspace}
V ( k ) =\left\{ \sum_{\ell = 0}^m \zeta_\ell F_k ( z - z_\ell ) u_0 (z ),  \zeta \in \mathbb C^m  \right\}, \ \ k \notin \Lambda^* ,  \end{equation}
where $ F_k $'s were defined in \eqref{eq:defFk}.
When $ p \in \Lambda^* $ 
we have 
\begin{equation}
\label{eq:Vp}  V ( p ) =  \{ e^{ i \langle p, z \rangle } u_0 ( z ) f ( z ) :  f \in L( D ) \}, \end{equation}
where $ D $ is the divisor defined by the zeros of $ u_0 $.
We can write elements of $ L (D )$ as follows (see \cite[\S I.6]{tata}): 
\begin{equation}
\label{eq:deff}  f ( z) = \mu_0 + \sum_{\ell =1}^m \mu_\ell  \frac{\theta' ( z - z_\ell )}{\theta ( z  - z_\ell) } , \ \ \ \ \sum_{\ell=1}^m \mu_\ell = 0 . \end{equation} 
We now consider 
\[ G_k ( z ) := e^{ - i \langle k , z  \rangle }  F_k ( z ) , \ \ \ \ G_p ( z ) = e_p ( 0)^{-1} , \ p \in \Lambda^*.  \]
We recall from \cite[Lemma 3.3]{bhz2} that for $ p \in \Lambda^* $, 
\[ \begin{gathered}  F_{ k + p } ( z ) = e_{p} ( k)^{-1} \tau ( p ) F_k ( z ) ,  \ \ \ 
\tau( p ) v ( z ) := e^{ i \langle p, z \rangle } v ( z ) \\ e_p ( k ) := 
\frac{\theta ( z ( k ) ) }{ \theta (z ( k + p )) } = ( -1 )^n ( -1)^m e^{ i \pi n^2 \omega + 2 \pi i n  z ( k ) } , \ \ \ 
z ( p ) = m + n \omega . 
 \end{gathered}
 \]
 Hence,  for $ p \in \Lambda^* $, 
 \[ \begin{split}  \theta ( z ( k + p ) )^{-1}  G_{k+p } ( z  ) & = \frac{ \theta ( z ( k  ) } {\theta ( z ( k + p ) ) } 
 e^{ - i \langle ( k + p ) , z  \rangle } \theta ( z ( k ) )^{-1} F_{k+p} ( z ) \\
 & = 
 e_p ( k )  e^{ - i \langle ( k + p ) , z }  e_p ( k)^{-1} e^{ i \langle p , z - z_\ell \rangle} F_k ( z )
  \\ &  = 
 e^{ - i \langle k , z - z_\ell \rangle }  \theta ( z ( k ) )^{-1} F_k ( z ) \\
 & =  \theta ( z ( k ) )^{-1} G_k ( z  ) , \end{split} \] 
 that is, 
 $ k \mapsto \theta ( z ( k ) )^{-1} G_k ( z - z_\ell ) $ is periodic with respect to $ \Lambda^* $. 
 If $ \lambda = ( \lambda_1 , \cdots, \lambda_m ) \in\mathbb C^m $, $ \sum_{ j=1}^{m} \lambda_j = 0 $, then 
 \begin{equation}
 \label{eq:defGkl}  G_k ( z , \lambda ) := \sum_{ j =1}^m \lambda_j \theta( z ( k ) )^{-1} ( G_k ( z - z_j ) - 
 e_k ( 0 )^{-1} ) , 
 \end{equation}
 is also periodic and smooth in $ k$ at $ \Lambda^* $: for $ p \in \Lambda^* $ we have
 \[ \begin{split}  G_p ( z, \lambda ) & =  G_0 ( z, \lambda ) = \sum_{j=1}^m \lambda_j   (z' ( 0 )\theta' ( 0 ))^{-1}  
 \partial_k \left( e^{ - i \langle z, k \rangle } F_k ( z - z_j ) \right) |_{k=0 } \\ & = 
\sum_{j=1}^m \lambda_j  (z' ( 0 )\theta' ( 0 ))^{-1} \left( \tfrac i 2 ( z - z_j ) -  z' ( 0 ) \frac{ \theta' ( z - z_j ) }{ \theta ( z- z_j ) } 
\right)
\\ & = \mu_0 + \sum_{ j = 1}^m \mu_j \frac{ \theta' ( z - z_j ) }{ \theta ( z- z_j ) } , \ \ \ \ \  \sum_{ j=1}^m \mu_j = 0. 
 \end{split} \]
Hence in view of \eqref{eq:Vp} and \eqref{eq:deff} we can extend 
\eqref{eq:vectorspace} to all $ k \in \mathbb C $ as follows:
\[ V ( k ) = \left\{ e^{ i \langle z, k \rangle}  G_k ( z , \lambda )  u_0 (z ) +
\lambda_0 F_k ( z - z_1 ) u_0 ( z ) : \lambda_0 \in \mathbb C,  \lambda = ( \lambda_1, \cdots, \lambda_m ) \in \mathbb C^m , \ \sum_{\ell = 1}^m \lambda_\ell = 0 
 \right\} \]
We now introduce
\[  W ( k ) :=  \left\{ e^{ i \langle z, k \rangle}  G_k ( z , \lambda )  u_0 (z ) 
:  \lambda = ( \lambda_1, \cdots, \lambda_m ) , \ \sum_{\ell = 1}^m \lambda_\ell = 0 
 \right\}  , \ \ \  k \in \mathbb C .\]
As in \eqref{eq:defE} this family of subspaces of $ L^2_0 $ defines a rank $ m-1$ vector bundle, $ F \to \mathbb C/\Lambda^* $. 
 Since $ k \mapsto G_k ( z, \lambda ) $ is periodic, $ F $ is trivial. If  $ E_1 \to \mathbb C/\Lambda^*  $ is the line bundle coming from 
 the family of subspaces of $ L^2_0 $, 
\[ V_1 ( k ) := \mathbb C F_k ( z - z_1 ) u_0 ( z ) , \ \ k \in \mathbb C ,\] 
(again, in the sense of \eqref{eq:defE}) we see as in \cite{led} and \cite[(5.9),(B.8)]{bhz2} that $ c_1 ( E_1 ) = -1 $. Since
$ E = F \oplus E_1 $, we obtain $ c_1 ( E ) = - 1 $.

Finally, we observe that for $ \Omega : L^2_0  ( \mathbb C/\Lambda ; \mathbb C )
\to L^2_0  ( \mathbb C/\Lambda ; \mathbb C )  $, 
$ \Omega u ( z ) := u ( \omega z)  $,  $ \ker_{H^1_0} ( D ( \alpha ) + \bar \omega k ) = \Omega 
 \ker_{H^1_0} ( D ( \alpha ) +  k )$ (see \cite[\S 2.1]{bhz2}). Hence, in the notation of
 \eqref{eq:Berry}. 
 $ \Omega \Pi ( k ) \Omega^* = \Pi ( \bar  \omega k ) $. If $ R k := 
 \bar \omega k $, this means that $ R ^* \Pi = \Omega \Pi \Omega^* $. 
 Also the pull back of $ \Theta $ by $ R $ is well defined 
 and, using \eqref{eq:Pi2Th} we see that 
 \[    R^* \Theta = R^* (  \Pi d \Pi \wedge d \Pi ) = 
  R^*  \Pi d (  R^* \Pi )  \wedge d (  R^* ) \Pi 
 =  \Omega ( \Pi d \Pi \wedge d \Pi ) \Omega^* = \Omega \Theta \Omega^* . \]
 In particular, in the notation of  \eqref{eq:defH}, we have
\[   \tr R^* \Theta = \tr \Theta  \ \Longrightarrow \ 
H (\bar  \omega k )  = H ( k ) . \]
Strictly speaking we should, just as we did at the end of \eqref{eq:defE}, justify passing to the quotient.
That is again easy by noting that $ \Omega \tau ( p ) \Omega^* = \tau ( \bar \omega p ) $. This completes the proof of Theorem \ref{t:Chern}.

\begin{table}[h!] 
\centering
 \begin{tabular}{||c | c | c ||} 
 \hline
$X=L^2_{0,2}$ & $X=L^2_{0,0}$  & $X=L^2_{0,1}$ \\ [0.5ex] 
 \hline\hline
 {\color{blue}1.2400 -- 0.0000i}   &  1.6002 + 0.0000i  & 1.6002 + 0.0000i \\ 
 {\color{blue}1.2400 -- 0.0000i} &  1.2583 -- 1.1836i   & 1.2583 -- 1.1836i \\ 
1.3424 + 1.6788i  &  1.2583 + 1.1836i  & 1.2583 + 1.1836i \\ 
1.3424 -- 1.6788i   &  1.4019 -- 2.2763i   & 1.4019 -- 2.2763i \\ 
 2.9543 + 0.0000i &  1.4019 + 2.2763i  & 1.4019 + 2.2763i \\ 
 1.4575 + 2.7610i &  1.5001 + 3.3130i  & 1.5001 + 3.3130i \\
 1.4575 -- 2.7610i  &  1.5001 -- 3.3130i   & 1.5001 -- 3.3130i \\
  3.5878 + 1.9298i&  3.4078 + 1.3122i  &  3.4078 + 1.3122i \\
  3.5878 -- 1.9298i &  3.4078 -- 1.3122i   &  3.4078 -- 1.3122i\\ [1ex] 
 \hline
 \end{tabular}
 \medskip
 \caption{Magic angles for $\theta =2.808850897$ and $U_{\pm}:=U_0(\pm \bullet)$ with $U_0(\zeta)= \cos(\theta) U_1(\zeta )+ \sin(\theta) \sum_{i=0}^2 \omega^i e^{-(\zeta \bar \omega^i - \bar \zeta \omega^i)}$ such that $1/\alpha \in \Spec_{X}(T_0)$ (counting algebraic multiplicity). The magic angle with \blue{algebraic multiplicity $2$} and \blue{geometric multiplicity $1$} is highlighted in blue. }
 \label{tab:double}
\end{table}

\section{Numerical observations}
\label{sec:numerics}
Here we present two numerical observations related to our mathematical results. For all our numerics, we used a Fourier discretization of the operators, see \cite[Sec.5.1]{suppl} for an explanation, with $N=101$ Fourier coefficients per spatial dimension.

\subsection{Algebraic multiplicities in the spectral characterization}

Theorem \ref{theo:rigidity} implies that it is impossible to have $$\dim \ker_{L_0^2}(D(\alpha))=\dim \ker_{L_{0,2}^2}(D(\alpha))=2$$ which is equivalent to having eigenvalues of geometric multiplicity $2$ for $T_0$, i.e.
$$\dim \ker_{L^2_0}(T_0-1/\alpha) =  \dim \ker_{L^2_{0,2}}(T_0-1/\alpha)=2,$$ 
we can indeed have that $1/\alpha$ is an eigenvalue of algebraic multiplicity $2$ and geometric multiplicity $1$ on $L^2_0$ and $L^2_{0,2}.$ This is illustrated in Table \ref{tab:double} and Figure \ref{fig:double}. In particular, it implies that $T_k$ in general is not diagonalizable. Since the algebraic multiplicity of $T_k$ is independent of $k$, it follows by Theorem \ref{p:zeros} that the geometric multiplicity is independent of $k$. Examples of this are exhibited in Table  \ref{tab:double} and Figures \ref{fig:double} and \ref{fig:double2}.

\begin{figure}
\includegraphics[width=10.5cm,height=6.5cm]{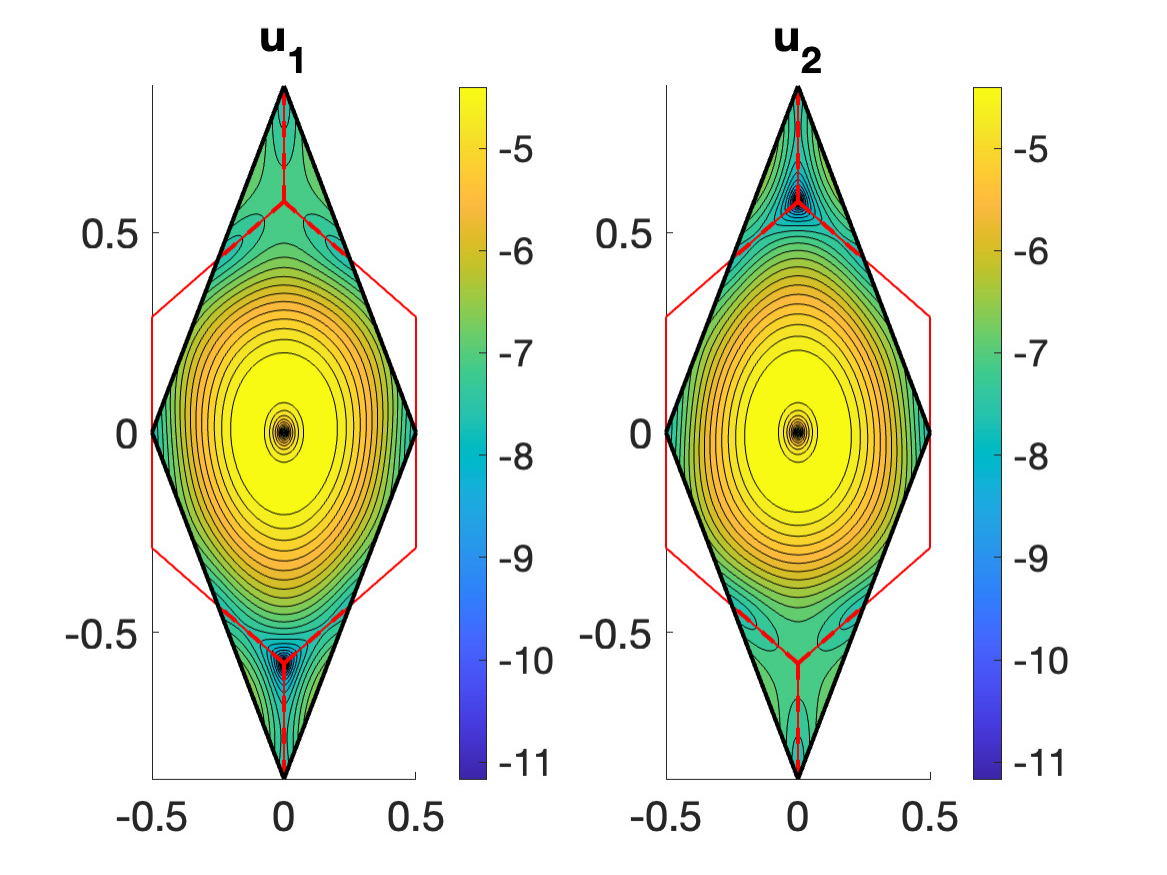}\\
\includegraphics[width=10.5cm,height=6.5cm]{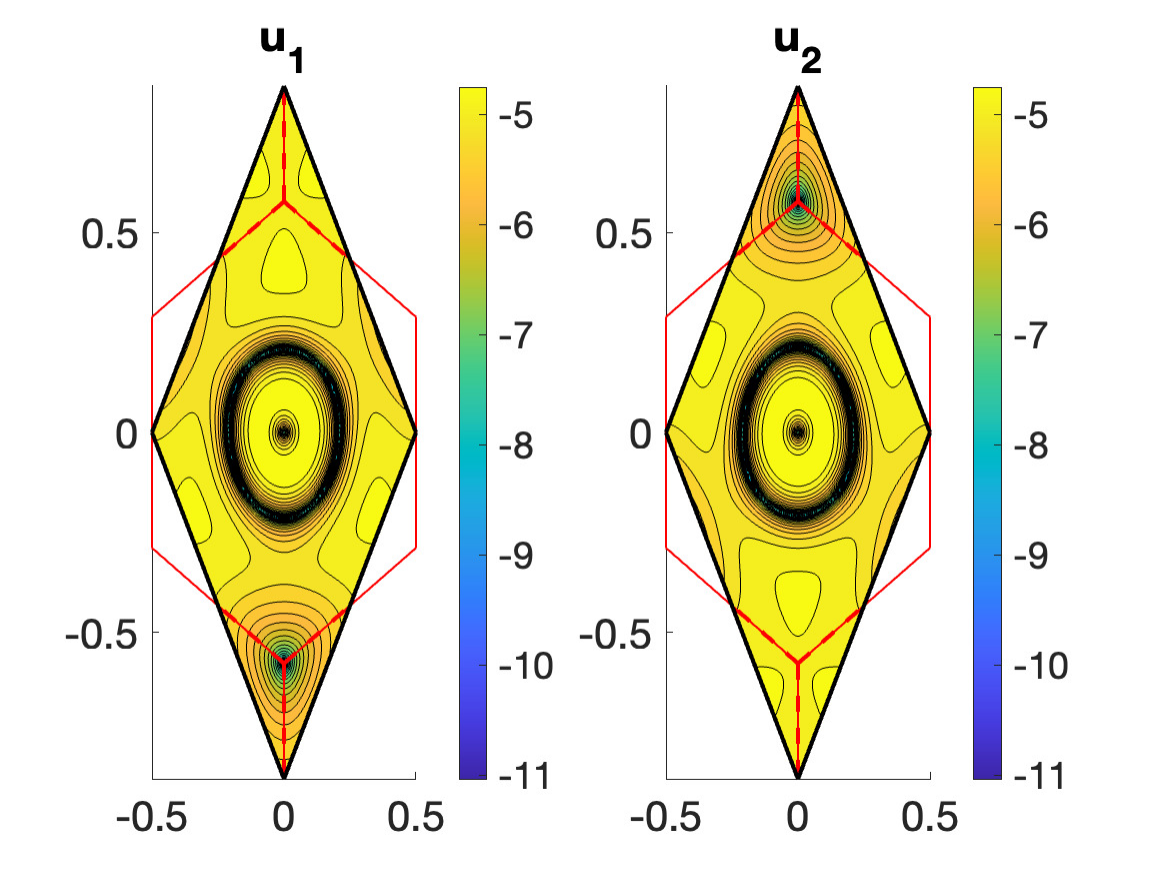}
\caption{The first two singular values of $D(\alpha)$ are $2.804e-15$ and $3.990$ suggesting the existence of only one flat band at $\alpha=1.2400$ for $\theta =2.808850$ and $U_{\pm}:=U_0(\pm \bullet)$ with $U_0(\zeta)= \cos(\theta) U_1(\zeta )+ \sin(\theta) \sum_{i=0}^2 \omega^i e^{-(\zeta \bar \omega^i - \bar \zeta \omega^i)}$. The eigenvector of $T_0$ with eigenvalue $1/\alpha$ is shown on top and the generalized one at the bottom.}
\label{fig:double}
\end{figure}

\begin{figure}
\includegraphics[width=10.5cm,height=6.5cm]{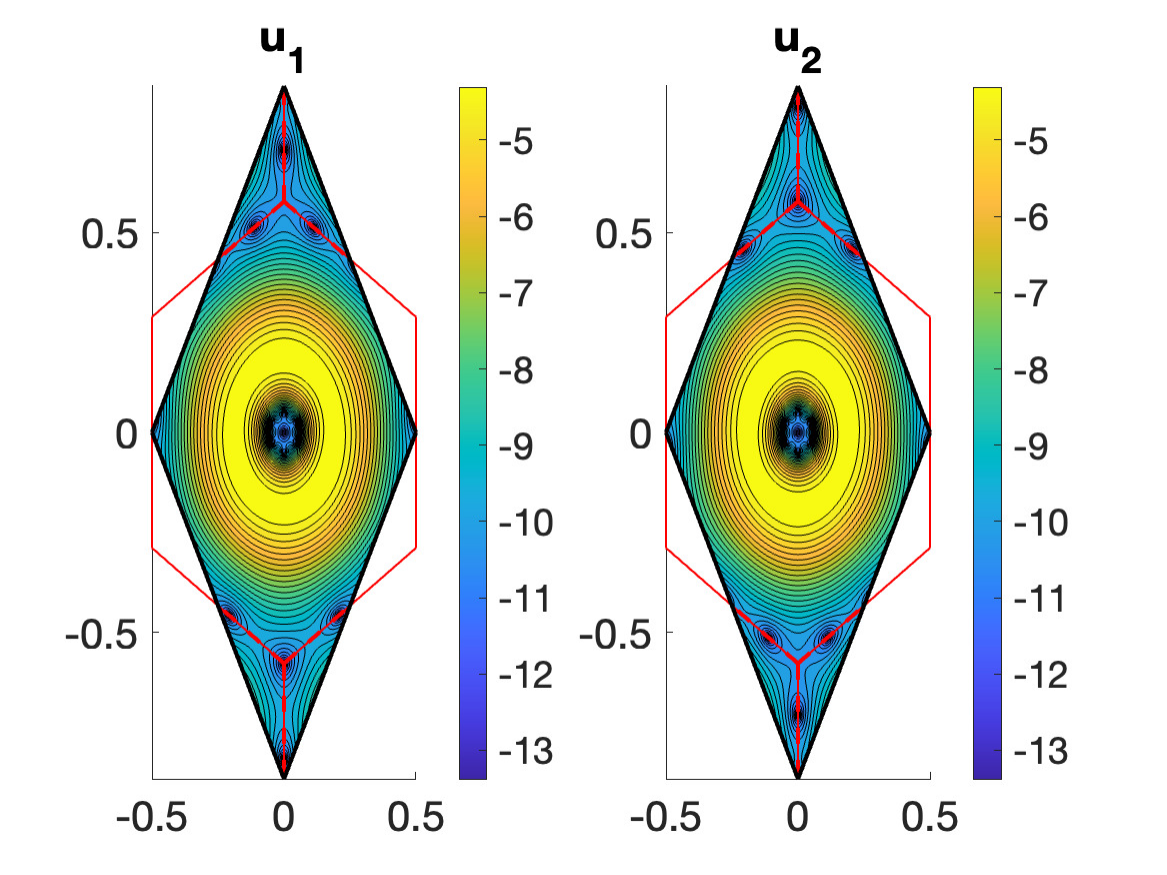}\\
\includegraphics[width=10.5cm,height=6.5cm]{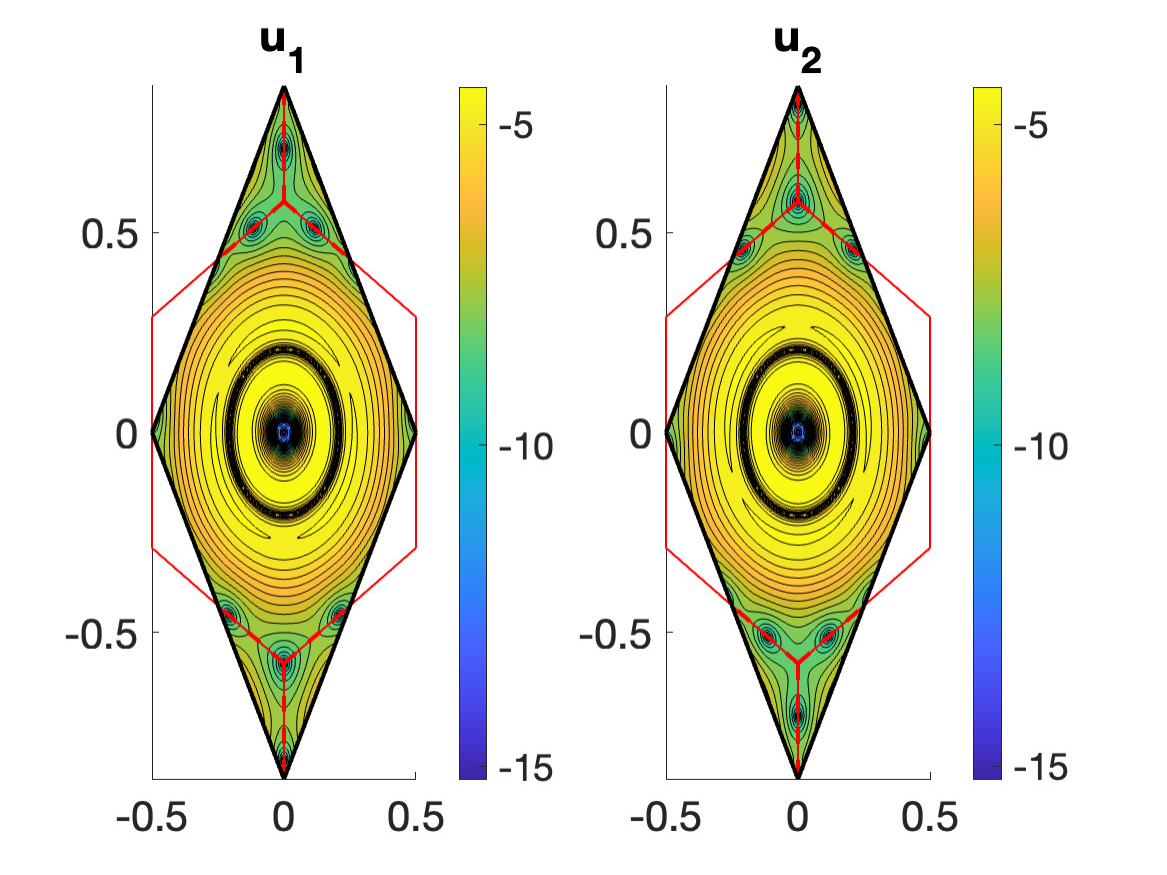}
\caption{One flat band for $\alpha=5.3811$ for $\theta =2.7672151$ and $U_{\pm}:=U_0(\pm \bullet)$ with $U_0(\zeta)= \cos(\theta) U_1(\zeta )+ \sin(\theta) \sum_{i=0}^2 \omega^i e^{-(\zeta \bar \omega^i - \bar \zeta \omega^i)}$. The eigenvector of $T_0$ with eigenvalue $1/\alpha$ is shown on top and the generalized one at the bottom.}
\label{fig:double2}
\end{figure}

\begin{center}
\begin{figure}
\includegraphics[width=10cm]{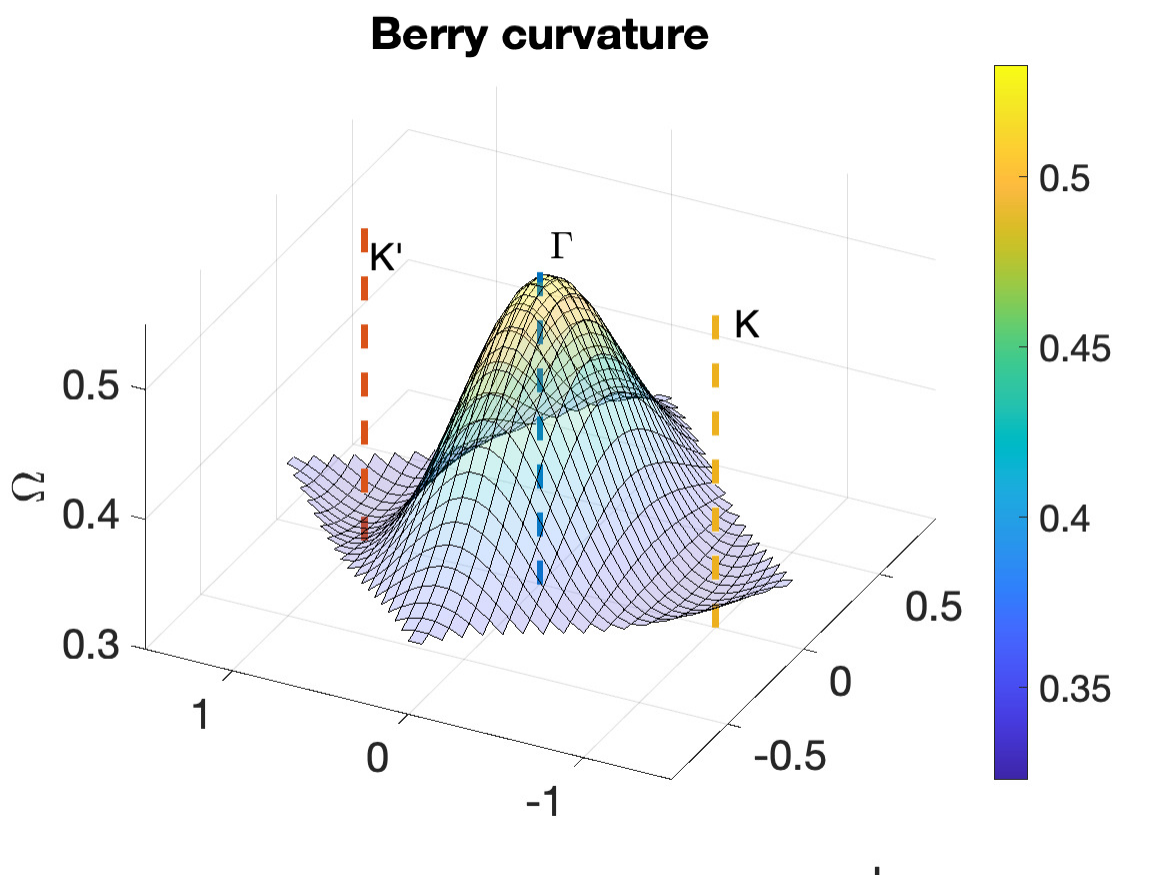}
\caption{\label{f:curv}  The plot of the curvature of the holomorphic line bundle 
corresponding to the first two-fold generate magic angle, defined in \eqref{eq:defH} with potential $U_{\pm}:=U_2(\pm \bullet)$, as in \eqref{eq:potential}. The extrema at $K,\Gamma,K'$ follow from Theorem \ref{t:Chern}.}
\end{figure}
\end{center}
\begin{figure}
\includegraphics[width=10cm]{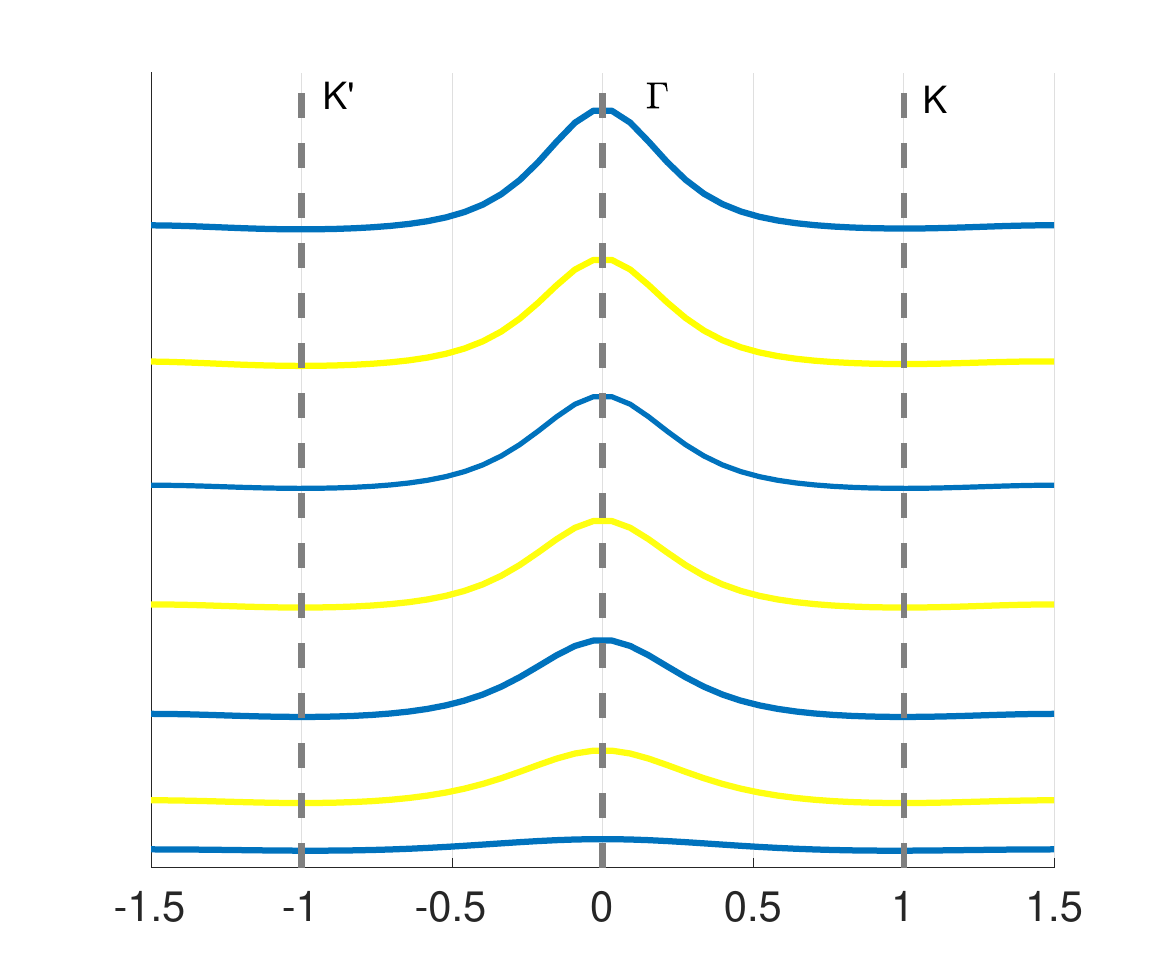}
\caption{\label{f:Ukraine} Cross-section of curvature for $k_x=0$ for the first seven magic angles with potential $U_{\pm}:=U_2(\pm \bullet)$, as in \eqref{eq:potential}, in increasing order. The extrema at $K,\Gamma,K'$ follow from Theorem \ref{t:Chern}.}
\end{figure}

\subsection{Behaviour of the curvature}

Since we established in Theorem \ref{t:Chern} that $ H ( \omega z ) = H ( z ) $ where $H$ is the scalar curvature. We conclude that $ 0 $ and $ \pm z_S $ are critical points of $ H $. In addition, the symmetry $\mathscr E$ defined in \eqref{eq:E} and the formula \eqref{eq:flat_band} imply that the Gramian matrix satisfies for simple or two-fold degenerate magic angles 
\[  G ( k ) = G ( - k ). \]
This implies the symmetries in Figure \ref{f:curv}.

However, while it seems that the maximum is attained at $\Gamma$ and the minima at $K,K'$, we do not have an analytical argument for this at the moment.

Figure \ref{f:Ukraine} shows that the standard deviation of the Berry curvature, for the potential $U_2$ with only two-fold degenerate real magic angles, increases monotonically for the real magic angles. This is in contrast to the case of simple magic angles in \cite[Figure $7$]{bhz2}.


\end{document}